\newcommand{\doublewidetilde}[1]{{%
  \mathpalette\double@widetilde{#1}%
}}
\newcommand{\double@widetilde}[2]{%
  \sbox\z@{$\m@th#1\widetilde{#2}$}%
  \ht\z@=.9\ht\z@
  \widetilde{\box\z@}%
}
\newtheorem{theorem}{Theorem}[section]
\newtheorem{theorem-definition}[theorem]{Theorem-Definition}
\newtheorem{theorem-construction}[theorem]{Theorem-Construction}
\newtheorem{lemma-definition}[theorem]{Lemma--Definition}
\newtheorem{lemma-construction}[theorem]{Lemma--Construction}
\newtheorem{lemma}[theorem]{Lemma}
\newtheorem{proposition}[theorem]{Proposition}
\newtheorem{corollary}[theorem]{Corollary}
\newtheorem{conjecture}[theorem]{Conjecture}
\theoremstyle{definition}
\newtheorem{definition}[theorem]{Definition}
\newtheorem{remark}[theorem]{Remark}
\newtheorem{example}[theorem]{Example}
\newcommand{\old}[1]{}
\newcommand{\wt}{\operatorname{wt}}
\newcommand{\cO}{\mathcal{O}}
\newcommand{\Z}{{\mathbb Z}}
\newcommand{\C}{{\mathbb C}}
\newcommand{\T}{{\mathbb T}}
\renewcommand{\P}{{\mathbb P}}
\newcommand{\bP}{{\mathbb P}}
\def\B{\mathsf B}
\def\W{\mathsf W}
\def\w{\mathsf w}
\def\b{\mathsf b}
\def\vert{\mathsf v}
\def\e{\mathsf e}
\def\f{\mathsf f}
\def\K{\mathsf K}
\def\E{\mathsf E}
\def\F{\mathsf F}
\def\graph{\Gamma}
\def\zz{\bm \alpha}
\def\id{\mathrm{id}}
\def\Spec{\operatorname{Spec}}
\def\mon{\operatorname{mon}}
\def\Ext{\operatorname{Ext}}
\def\tot{\operatorname{Tot}}
\def\cale{\mathcal E}
\def\calf{\mathcal F}
\def\dabel{\bm d}
\def\PD{\mathrm{PD}}
\newcommand{\Extp}{\@ifnextchar^\@extp{\@extp^{}}}
\def\@extp^#1{\mathop{\bigwedge\nolimits^{ #1}}}
\newcommand\restr[2]{{
  \left.\kern-\nulldelimiterspace 
  #1 
  \vphantom{\big|} 
  \right|_{#2} 
  }}
\definecolor{calpolypomonagreen}{rgb}{0, 0.6, 0.2}
\newcommand{\bmu}{\bm \mu}
\newcommand{\be}{\begin{equation}}
\newcommand{\ee}{\end{equation}}
\newcommand{\bt}{\begin{theorem}}
\newcommand{\et}{\end{theorem}}
\newcommand{\bd}{\begin{definition}}
\newcommand{\ed}{\end{definition}}
\newcommand{\bp}{\begin{proposition}}
\newcommand{\ep}{\end{proposition}}
\newcommand{\bl}{\begin{lemma}}
\newcommand{\el}{\end{lemma}}
\newcommand{\bc}{\begin{corollary}}
\newcommand{\ec}{\end{corollary}}
\newcommand{\bcon}{\begin{conjecture}}
\newcommand{\econ}{\end{conjecture}}
\newcommand{\la}{\label}
\newcommand{\oH}{{H}}
\renewcommand{\L}{{\mathcal L}}
\def\Ext{\operatorname{Ext}}
\DeclareMathOperator{\coker}{coker}
\DeclareMathOperator{\Hom}{Hom}
\DeclareMathOperator{\tr}{tr}
\DeclareMathOperator{\im}{im}
\tikzset{mid arrow/.style={postaction={decorate,decoration={
				markings,
				mark=at position .5 with {\arrow{latex}}
	}}},
	mid rarrow/.style={postaction={decorate,decoration={
				markings,
				mark=at position .5 with {\arrow{latex reversed}}
	}}},
}
\tikzset{qvert/.style={draw,black,circle,fill=gray,minimum size=5pt,inner sep=0pt}  } 
\tikzset{bvert/.style={draw,circle,fill=black,minimum size=5pt,inner sep=0pt}  }  
\tikzset{wvert/.style={draw,circle,fill=white,minimum size=5pt,inner sep=0pt}  } 
\tikzset{fvert/.style={text=blue}  } 
\tikzset{sqvert/.style={draw,black,rectangle,fill=black,minimum size=5pt,inner sep=0pt}  } 
\tikzset{lvert/.style={draw,circle,fill=black,minimum size=4pt,inner sep=0pt}  }  
\tikzset{>={Latex}}
\newcommand{\Addresses}{{
  \bigskip
  \footnotesize

  \textsc{Massachusetts Institute of Technology, Department of Mathematics, 2-155 Simons Building, 77 Massachusetts Avenue, Cambridge, MA 02139}\par\nopagebreak
  \textit{E-mail address}: \texttt{tegeorge at mit.edu}

 \medskip

  \textsc{University of Washington, Department of Mathematics, C-138 Padelford, Box 354350
     Seattle, WA 98195  }\par\nopagebreak
  \textit{E-mail address}: \texttt{ginchios at uw.edu}

}}
\begin{document}

\title{Dimers and Beauville integrable systems}
\author{Terrence George \and Giovanni Inchiostro}

 \date{\today}

 \maketitle
 
 \begin{abstract}
Associated to a convex integral polygon $N$ in the plane are two integrable systems: the cluster integrable system of Goncharov and Kenyon, constructed from the dimer model on bipartite torus graphs, and the Beauville integrable system associated with the toric surface of $N$. These two systems are related by a birational map called the spectral transform. In this paper we study the case when $N$ is the standard triangle of side length $d$, equivalently when the toric surface is $\P^2$, and prove that the spectral transform is a birational isomorphism of integrable systems. Since the Hamiltonians are identified by construction, the essential content is that the spectral transform intertwines the two Poisson structures. In particular, this shows that Beauville integrable systems admit cluster algebra structures.
 \end{abstract}
\tableofcontents

\section{Introduction}
Let $N$ be a convex lattice polygon in the plane. There are two algebraically completely integrable systems that one can associate to $N$, constructed in rather different settings. In this paper we prove their equivalence in the basic case when $N$ is the standard triangle.

On one side is the \emph{cluster integrable system} of Goncharov--Kenyon~\cite{GK13}, built from the dimer model on bipartite graphs on a torus and the associated cluster Poisson variety. Cluster integrable systems appear in many settings, including Toda-type systems and Poisson-Lie/double Bruhat cell constructions~\cite{GSV,GSVToda,Williams,FockMarshakov,ILP}, discrete geometric integrable systems~\cite{GSTV2,GSTV,GlickPylyavskyy,Izo,AGR}, and ultradiscrete or tropical systems of box-ball type~\cite{IKT,LP}. 

On the other side is the \emph{Beauville integrable system} in algebraic geometry on the projective toric surface associated to the same convex integral polygon~$N$. Such systems belong to a broader circle of ideas involving moduli spaces of sheaves on symplectic surfaces~\cite{Mukai,Tyurin,Beauville} and their Poisson-surface analogues~\cite{Bottacin,Bott1}.

We now briefly introduce the two integrable systems and give a precise statement of our main result.

\subsection{Cluster integrable systems}

Starting from $N$, one obtains a family of bipartite graphs on a torus associated with $N$, whose members are related by local transformations known as spider moves, or equivalently cluster mutations. For each such graph, the space of edge weights modulo gauge transformations is an algebraic torus, and these tori glue under mutation to form a cluster variety $\mathcal X$ in the sense of Fock--Goncharov~\cite{FG}, associated with the cluster algebra of Fomin--Zelevinsky~\cite{FZ}. The variety $\mathcal X$ carries a canonical cluster Poisson structure~\cite{GSV}. Goncharov--Kenyon~\cite{GK13} showed that its symplectic leaves are algebraically completely integrable systems, called \emph{cluster integrable systems}, whose Hamiltonians are partition functions for dimer covers/perfect matchings in each homology class on the torus.

These Hamiltonians admit a natural interpretation in terms of spectral curves. To a weighted bipartite torus graph one associates a periodic finite-difference operator $\K(z,w)$, called the \emph{Kasteleyn matrix}. Its determinant
\[
\mathsf P(z,w):=\det \K(z,w)
\]
is a Laurent polynomial whose Newton polygon is $N$ called the \emph{characteristic polynomial} of the dimer model. The corresponding 
\[
C:=\{ \mathsf P(z,w)=0\} \subset \mathcal N
\]
is called the \emph{spectral curve}. Kasteleyn's theorem on the torus~\cite{Kast} identifies the coefficients of $\mathsf P(z,w)$ with dimer partition functions in the various homology classes, and hence with the Hamiltonians of the cluster integrable system.

\subsection{Beauville integrable systems}

\begin{figure}
\begin{center}

\begin{tikzpicture}[scale=2, line cap=round, line join=round]
  \usetikzlibrary{calc}
  \def\L{2.2mm}   
  \def\Lc{1mm}    

  \coordinate (A) at (0,0);
  \coordinate (B) at (2,0);
  \coordinate (C) at (1,{sqrt(3)});

  \draw ($(A)!-0.25!(B)$) -- ($(A)!1.25!(B)$);
  \draw ($(B)!-0.25!(C)$) -- ($(B)!1.25!(C)$);
  \draw ($(C)!-0.25!(A)$) -- ($(C)!1.25!(A)$);

  \coordinate (P1)  at ($(A)!0.1!(B)$);
  \coordinate (P2)  at ($(A)!0.4!(B)$);
  \coordinate (P3)  at ($(A)!0.6!(B)$);
  \coordinate (P4)  at ($(A)!0.9!(B)$);

  \coordinate (P5)  at ($(B)!0.1!(C)$);
  \coordinate (P6)  at ($(B)!0.4!(C)$);
  \coordinate (P7)  at ($(B)!0.6!(C)$);
  \coordinate (P8)  at ($(B)!0.9!(C)$);

  \coordinate (P9)  at ($(C)!0.1!(A)$);
  \coordinate (P10) at ($(C)!0.4!(A)$);
  \coordinate (P11) at ($(C)!0.6!(A)$);
  \coordinate (P12) at ($(C)!0.9!(A)$);

\def\Ls{5.0mm}   
\def\Lm{2.2mm}   
\def\Lv{0.1mm}   

\draw[Cerulean,thick]
  (P1)  .. controls +(0:\Lv)    and +(240:\Ls)   .. (P2)
        .. controls +(60:\Lm)   and +(120:\Lm)   .. (P3)
        .. controls +(300:\Ls)  and +(180:\Lv)   .. (P4)

        .. controls +(0:\Lv)    and +(240:\Lv)   .. (P5)

        .. controls +(120:\Lv)  and +(0:\Ls)     .. (P6)
        .. controls +(180:\Lm)  and +(240:\Lm)   .. (P7)
        .. controls +(60:\Ls)   and +(300:\Lv)   .. (P8)

        .. controls +(120:\Lv)  and +(0:\Lv)     .. (P9)

        .. controls +(240:\Lv)  and +(120:\Ls)   .. (P10)
        .. controls +(300:\Lm)  and +(0:\Lm)     .. (P11)
        .. controls +(180:\Ls)  and +(60:\Lv)    .. (P12)

        .. controls +(240:\Lv)  and +(180:\Lv)   .. cycle;

\begin{scope}[shift={(0.68,0.42)}, rotate=18]
  \draw[thick,Cerulean] (0,0) ellipse (0.19 and 0.11);
  \fill[Red] (28:0.19 and 0.11) circle (1pt);
\end{scope}

\begin{scope}[shift={(1.26,0.50)}, rotate=-24]
  \draw[thick,Cerulean] (0,0) ellipse (0.215 and 0.125);
  \fill[Red] (205:0.215 and 0.125) circle (1pt);
\end{scope}

\begin{scope}[shift={(1.00,0.94)}, rotate=63]
  \draw[thick,Cerulean] (0,0) ellipse (0.165 and 0.095);
  \fill[Red] (120:0.165 and 0.095) circle (1pt);
\end{scope}

  \foreach \P in {P1,P2,P3,P4,P5,P6,P7,P8,P9,P10,P11,P12}
    \fill[fill=Dandelion] (\P) circle (1pt);
\end{tikzpicture}

\end{center}
\caption{An illustration of the Beauville integrable system on $\P^2$ associated with the standard triangle $N$ of side length $d=4$, for which $g=3$ (cf.~\cite{OkounkovTalk}). The three black lines are the three coordinate axes of $\P^2$ forming the toric boundary $D$, the blue curve is the spectral curve $C$, the red points are the divisor points $(p_i,q_i)_{i=1}^3$, and the green points are the boundary points $C\cap D$.} \label{Fig:beauville}
\end{figure}

On the other hand, the polygon $N$ determines a projective toric surface $\mathcal N$ whose dense torus is $(\C^\times)^2$ with coordinates $(z,w)$. The bivector
\[
\theta = (z\partial_z)\wedge (w\partial_w)
\]
is a torus-invariant Poisson structure on $(\C^\times)^2$ with pole locus along the toric boundary
\[
D:=\mathcal N\setminus (\C^\times)^2.
\]
A fundamental theme, going back to Mukai~\cite{Mukai}, Tyurin~\cite{Tyurin}, and Beauville~\cite{Beauville}, is that symplectic structures on a surface induce symplectic structures on moduli spaces of sheaves and on Hilbert schemes of points; Bottacin~\cite{Bottacin, Bott1} proved this more generally for Poisson surfaces. In our toric setting, we consider the phase space $\mathcal M$ consisting of
\begin{itemize}
\item $g$ points $(p_i,q_i)_{i=1}^g$ in the dense torus $(\C^\times)^2\subset \mathcal N$, where $g$ is the number of interior lattice points in $N$. 
\item a choice of appropriate boundary points on $D$.
\end{itemize}
for which the induced Poisson bracket takes the standard form
\[
\{\log p_i,\log q_j\}=\delta_{ij}.
\]
The symplectic leaves of $\mathcal M$ are algebraically completely integrable systems (cf.~\cite{Beauville,Bottacin,Bott1}) which we call \emph{Beauville integrable systems}.

These Hamiltonians also admit a natural interpretation via spectral curves. For generic choices of the points $(p_i,q_i)_{i=1}^g$ and the boundary points, there is a unique curve
\[
C\subset \mathcal N
\]
with Newton polygon $N$ passing through them, called the \emph{spectral curve} (see Figure~\ref{Fig:beauville}); the number $g$ of interior lattice points in $N$ is the genus of $C$. The Hamiltonians are the coefficients of its defining equation.

\subsection{The spectral transform}

The bridge between the two constructions is the \emph{spectral transform}. In their study of the moduli space of simple Harnack curves, Kenyon and Okounkov~\cite{KO} constructed a rational map
\[
\kappa:\mathcal X \dashrightarrow \mathcal M
\]
called the \emph{spectral transform}. We have already seen that both integrable systems have Hamiltonians that are described in terms of spectral curves, which the spectral transform identifies. To get the points $(p_i,q_i)_{i=1}^g$ from weights, consider the adjugate matrix
\[
\mathsf Q(z,w)=\mathsf P(z,w)\K^{-1}(z,w).
\]
The vanishing of any column of $\mathsf Q(z,w)$ determines $g$ points $(p_i,q_i)_{i=1}^g$ on the spectral curve, while the intersection of the spectral curve with the toric boundary determines the boundary points. In this way one obtains the map $\kappa$ which by construction identifies the spectral curves on the two sides.

\subsection{Main theorem and idea of proof}

Fock~\cite{Fock} and George--Goncharov--Kenyon~\cite{GGK} proved, by different methods, that the spectral transform is a birational map. By construction of the spectral transform, the Hamiltonians of the cluster integrable system are identified with those of the Beauville integrable system. The remaining question, and the main point of this paper, is whether the spectral transform also identifies the Poisson structures, and hence the integrable systems themselves.

The main result of this paper is a proof of the following conjecture of Goncharov--Kenyon (cf.~\cite{GK13}*{Theorem 1.4}) in the triangular case.

\begin{theorem} \la{thm1}
When $N$ is the standard triangle of side length $d$, so that $\mathcal N = \P^2$, the spectral transform is a birational isomorphism of integrable systems.
\end{theorem}

We now discuss the main idea in the proof of Theorem~\ref{thm1}. There is a standard way to rephrase the Beauville side in terms of sheaves on $\P^2$ (cf.~\cite{Hurtubise, Hurtubise2, HH}). Namely, let
\[
L = \mathcal O_{C}\Bigl(\sum_{i=1}^g (p_i,q_i)\Bigr)
\]
be the line bundle on $C$ corresponding to the divisor $\sum_{i=1}^g (p_i,q_i)$, and let
\[
\mathcal L := \iota_*L
\]
be the corresponding sheaf on $\P^2$ called the \emph{spectral sheaf}, where $\iota : C \hookrightarrow \P^2$ is the embedding. The tangent and cotangent spaces at $\mathcal L$ in the corresponding moduli space are computed by
\[
\Ext^1(\mathcal L,\mathcal L), \qquad \Ext^1(\mathcal L,\mathcal L(-D)),
\]
and in terms of these spaces the Poisson structure is induced by the canonical map
\[
\Ext^1(\mathcal L,\mathcal L(-D)) \xrightarrow[]{\theta} \Ext^1(\mathcal L,\mathcal L).
\]

The reason behind rephrasing the Poisson structure in terms of the spectral sheaf is that the Kasteleyn matrix gives a resolution of the spectral sheaf by locally free sheaves. More precisely, after passing to a natural finite cover of $\P^2$ with covering group $G$, the Kasteleyn matrix extends to a $G$-equivariant morphism of $G$-equivariant locally free sheaves whose cokernel is the spectral sheaf. This leads to simple and explicit computations of the relevant $\Ext$-groups via \v{C}ech-$\mathcal Hom$ complexes.

On the graph side, we reformulate the tangent and cotangent spaces of $\mathcal X$ in terms of the homology and relative cohomology of the conjugate ribbon graph. In these terms, the cluster Poisson structure becomes equally explicit. The striking point is that, after making these two descriptions concrete, the comparison maps become almost tautological.

Theorem~\ref{thm1} is conjecturally true for any $N$, not just triangles. We restrict in this paper to the simplest example, namely the hexagonal lattice or equivalently the case $\mathcal N=\P^2$, in order to focus on the algebro-geometric constructions without getting distracted by additional combinatorial complications. We expect the same strategy to extend to general $N$. Alternately, we also expect that the general case can be obtained from the triangular case by suitable degenerations of the weights.

\subsection*{Acknowledgements}
We thank Alexander Goncharov and Richard Kenyon, whose ideas have greatly influenced this work. Part of this work was completed while T. G. was visiting the University of Washington, and T. G. thanks the University of Washington for its hospitality. G. I. was partially supported by NSF Grant DMS-2502104.

\section{The cluster integrable system}

In this section we recall the cluster integrable system associated with the dimer model, following~\cite{GK13}. 

Let $\graph=(\B\sqcup\W,\E,\F)$ be the $d\times d$ fundamental domain of the hexagonal lattice embedded in the torus $\T$ shown in Figure~\ref{fig:hcd}. Here $\B$, $\W$, $\E$, and $\F$ denote the sets of black vertices, white vertices, edges and faces of $\graph$ respectively.

We identify
\[
H_1(\T,\Z)\cong \Z^2
\]
by taking as generators the homology classes of the images of the horizontal and vertical sides of the fundamental domain.

A \emph{zig-zag path} in $\graph$ is a closed path that turns maximally right at each black vertex and maximally left at each white vertex (Figure~\ref{fig:hcd}). Let $\zz$ denote the set of zig-zag paths in $\graph$. 

Each $\alpha\in\zz$ determines a homology class $[\alpha]\in H_1(\T,\Z)$, and the only possibilities are
\[
[\alpha]\in\{(-1,1),(0,-1),(1,0)\}.
\]
We write $\zz_0,\zz_1,$ and $\zz_2$ for the subsets of $\zz$ consisting of zig-zag paths with homology classes $(-1,1)$, $(0,-1)$, and $(1,0)$ respectively. Each $\zz_i$ consists of $d$ parallel zig-zag paths.

\subsection{Edge weights and gauge equivalence}

An \emph{edge weight} on $\Gamma$ is a function 
\[
\wt: \E \rightarrow \C^\times.
\]
Two edge weights $\wt_1$ and $\wt_2$ are said to be \emph{gauge equivalent} if there are functions 
\[
 f: \B \rightarrow \C^\times, \qquad g: \W \rightarrow \C^\times
\]
such that for every $\e=\b\w \in \E$,
\[
\wt_2(\e) = f(\b)^{-1} \wt_1(\e) g(\w). 
\]
We denote the gauge equivalence class of $\wt$ by $[\wt]$.

We define the \emph{$X$ cluster variety} $\mathcal X$ to be the space of edge weights modulo gauge transformations.

\begin{figure}
\begin{center}
\begin{tikzpicture}[scale=1]
\begin{scope}

    \def\dd{3};
    \draw[dashed, gray] (0,0) -- (2*\dd,0) -- (3*\dd,1.732*\dd) -- (1*\dd,1.732*\dd) -- (0,0);

    \foreach \i in {1,...,\dd}
    {
      \foreach \j in {1,...,\dd}
      {
        \pgfmathtruncatemacro{\ee}{\dd+1};
        \pgfmathtruncatemacro{\ff}{\dd};
        \coordinate[] (b\i0) at (2*\i-1,0);
        \coordinate[] (b0\j) at (\j-1+0.5,1.732*\j-1.732+0.866);
        \coordinate[label=above:$w$] (w\i\ee) at (2*\i+\dd-1,1.732*\dd);
        \coordinate[label=right:$z$] (w\ee\j) at (2*\dd+\j-1+0.5,1.732*\j-1.732+0.866);
        \coordinate[bvert] (b\i\j) at (2*\i+\j-1,1.732*\j-0.5773);
        \coordinate[wvert] (w\i\j) at (2*\i+\j-2,1.732*\j-1.732+0.5773);
        \draw[-] (w\i\j)--(b\i\j);
        \pgfmathtruncatemacro{\y}{\i-1};
        \pgfmathtruncatemacro{\x}{\j-1};
        \pgfmathtruncatemacro{\u}{\i+1};
        \pgfmathtruncatemacro{\v}{\j+1};
        \draw[-](b\y\j)--(w\i\j);
        \draw[-](b\i\x)--(w\i\j);
        \draw[-](w\i\ee)--(2*\i+\dd-1,1.732*\dd-0.5773);
        \draw[-](w\ee\j)--(2*\dd+\j-1,1.732*\j-0.5773);
      }
    }
\draw[-,red,line width=0.6mm,->] (b01)--(w11)--(b11)--(w21)--(b21)--(w31)--(b31)--(w41);
  \end{scope}
\end{tikzpicture}

\caption{The $3\times 3$ fundamental domain of the hexagonal lattice and a zig-zag path in $\zz_2$.}
\label{fig:hcd}
\end{center}
\end{figure}

\subsection{Graph homology and cohomology}\label{sec:graph_homology}

To express the above in the language of algebraic topology, we view \(\Gamma\) as a cell complex whose
\(0\)-cells are the vertices \(\B\sqcup \W\) and whose \(1\)-cells are the edges \(\e=\b\w \in \E\) oriented from $\b$ to $\w$. The cellular chain complex is
\[
\Bigl[
\begin{array}{c c c}
\Z^{\B}\oplus \Z^{\W} & \xleftarrow{\ \partial \ } & \Z^{\E}\\[-.4ex]
{\scriptstyle 0} && {\scriptstyle 1}
\end{array}
\Bigr], \qquad \partial(\e)=\w-\b \qquad (\e=\b\w).
\]
Throughout, subscripts under a complex indicate cohomological degrees. Thus,
\[
H_1(\Gamma,\Z)=H_1 \Bigl(\Bigl[
\begin{array}{c c c}
\Z^{\B}\oplus \Z^{\W} & \xleftarrow{\ \partial \ } & \Z^{\E}\\[-.4ex]
{\scriptstyle 0} && {\scriptstyle 1}
\end{array}
\Bigr]\Bigr) = \ker(\partial) = Z_1(\graph,\Z).
\]
In particular, every homology class on $\Gamma$ is represented by a unique cycle; we identify homology classes with cycles hereafter.

Dually, the cellular cochain complex is
\[
\Bigl[
\begin{array}{c c c}
(\C^\times)^{\B}\oplus (\C^\times)^{\W} & \xrightarrow{\ \delta \ } & (\C^\times)^{\E}\\[-.4ex]
{\scriptstyle 0} && {\scriptstyle 1}
\end{array}
\Bigr], \qquad \delta(f,g)(\e)=\frac{g(\w)}{f(\b)}
\qquad (\e=\b\w).
\]
An edge weight \(\wt\) is precisely a \(1\)-cochain, and two edge weights are gauge equivalent if and only if they differ by a \(1\)-coboundary. It follows that the space of gauge-equivalence classes of edge weights is
\be \label{eq:complex_H^1}
\mathcal X
=
H^1(\Gamma,\C^\times) = H^1 \Bigl( \Bigl[
\begin{array}{c c c}
(\C^\times)^{\B}\oplus (\C^\times)^{\W} & \xrightarrow{\ \delta \ } & (\C^\times)^{\E}\\[-.4ex]
{\scriptstyle 0} && {\scriptstyle 1}
\end{array}
\Bigr]\Bigr) = (\C^\times)^{\E}/\im(\delta).
\ee
If we replace \(\C^\times\) by \(\C\), the same construction gives the additive cohomology group \(H^1(\Gamma,\C)\).

\subsection{Monodromies, cluster variables and Casimirs}

Given a cycle $L \in H_1(\graph,\Z)$ and $[\wt] \in \mathcal X$, let 
\[
\mon_L([\wt]) := [\wt](L)
\]
denote the \emph{monodromy} of $[\wt]$ around $L$ i.e. the evaluation of $[\wt]$ on $L$. Concretely, if $L$ is written as 
\[
\w_1 \xrightarrow{\e_1} \b_1 \xrightarrow{\e_2} \w_2 \xrightarrow{\e_3} \b_2 \xrightarrow{\e_4} \cdots
\xrightarrow{\e_{2n-1}} \b_n \xrightarrow{\e_{2n}} \w_1,
\]
then
\begin{equation}\label{eq:monodromy}
\mon_L([\wt])=\prod_{i=1}^n \frac{\wt(\e_{2i})}{\wt(\e_{2i-1})} \in \C^\times.
\end{equation}
For every $L$, $\mon_L$ is a well-defined regular
function on $\mathcal X$ for every cycle $L$ in $\graph$.

Let $\partial \f$ denote the counterclockwise-oriented boundary of a face $\f \in \F$. The functions 
\[
X_\f := \mon_{\partial f}
\]
are called \emph{$X$ cluster variables}. Since 
\[
\sum_{\f \in \F} \partial \f = 0 \in H_1(\graph,\Z)
\]
they satisfy the unique relation $\prod_{\f \in \F} X_\f = 1$.

Similarly, for a zig-zag path $\alpha\in\zz$ we set
\[
C_\alpha:=\mon_\alpha.
\]
Since $\sum_{\alpha \in \zz} \alpha =0 \in H_1(\graph,\Z)$, these functions satisfy the unique relation
\[
\prod_{\alpha\in\zz} C_\alpha=1.
\]
They are called \emph{Casimirs} because they generate the center of the cluster Poisson structure (see~Section~\ref{sec:xpoisson}).

\subsection{Cluster Poisson structure on \texorpdfstring{$\mathcal X$}{X}}\label{sec:xpoisson}

\begin{figure}
\begin{center}
\begin{tikzpicture}[scale=1.7]
\begin{scope}
    \pgfmathsetmacro{\L}{2/sqrt(3)} 

    \coordinate[bvert,label=below:$\b$] (b) at (2,1.732-0.5773);
    \coordinate[wvert,label=above:$\w_0$] (w1) at (1,0.5773);
    \coordinate[wvert,label=above:$\w_1$] (w2) at (3,0.5773);
    \coordinate[wvert,label=above:$\w_2$] (w3) at (2,1.732+0.5773);

    \coordinate[bvert,label=below:$\b_{01}$] (b1) at ($(w1)+(150:\L)$);
    \coordinate[bvert,label=below:$\b_{02}$] (b2) at ($(w1)+(270:\L)$);

    \coordinate[bvert,label=below:$\b_{10}$] (b3) at ($(w2)+(30:\L)$);
    \coordinate[bvert,label=below:$\b_{12}$] (b4) at ($(w2)+(270:\L)$);

    \coordinate[bvert,label=below:$\b_{21}$] (b5) at ($(w3)+(150:\L)$);
    \coordinate[bvert,label=below:$\b_{20}$] (b6) at ($(w3)+(30:\L)$);

    \draw[-]
      (b) edge (w1) edge (w2) edge (w3)
      (w1) edge (b1) edge (b2)
      (w2) edge (b3) edge (b4)
      (w3) edge (b5) edge (b6);

    \coordinate (bt1) at ($(b)+(30:0.5)$);
    \coordinate (bt2) at ($(b)+(150:0.5)$);
    \coordinate (bt3) at ($(b)+(270:0.5)$);
    \draw (bt1) -- (bt2) -- (bt3) -- cycle;

    \coordinate (w1t1) at ($(w1)+(90:0.5)$);
    \coordinate (w1t2) at ($(w1)+(210:0.5)$);
    \coordinate (w1t3) at ($(w1)+(330:0.5)$);
    \draw (w1t1) -- (w1t2) -- (w1t3) -- cycle;

    \coordinate (w2t1) at ($(w2)+(90:0.5)$);
    \coordinate (w2t2) at ($(w2)+(210:0.5)$);
    \coordinate (w2t3) at ($(w2)+(330:0.5)$);
    \draw (w2t1) -- (w2t2) -- (w2t3) -- cycle;

    \coordinate (w3t1) at ($(w3)+(90:0.5)$);
    \coordinate (w3t2) at ($(w3)+(210:0.5)$);
    \coordinate (w3t3) at ($(w3)+(330:0.5)$);
    \draw (w3t1) -- (w3t2) -- (w3t3) -- cycle;

    \coordinate (b1t1) at ($(b1)+(30:0.5)$);
    \coordinate (b1t2) at ($(b1)+(150:0.5)$);
    \coordinate (b1t3) at ($(b1)+(270:0.5)$);
    \draw (b1t1) -- (b1t2) -- (b1t3) -- cycle;

    \coordinate (b2t1) at ($(b2)+(30:0.5)$);
    \coordinate (b2t2) at ($(b2)+(150:0.5)$);
    \coordinate (b2t3) at ($(b2)+(270:0.5)$);
    \draw (b2t1) -- (b2t2) -- (b2t3) -- cycle;

    \coordinate (b3t1) at ($(b3)+(30:0.5)$);
    \coordinate (b3t2) at ($(b3)+(150:0.5)$);
    \coordinate (b3t3) at ($(b3)+(270:0.5)$);
    \draw (b3t1) -- (b3t2) -- (b3t3) -- cycle;

    \coordinate (b4t1) at ($(b4)+(30:0.5)$);
    \coordinate (b4t2) at ($(b4)+(150:0.5)$);
    \coordinate (b4t3) at ($(b4)+(270:0.5)$);
    \draw (b4t1) -- (b4t2) -- (b4t3) -- cycle;

    \coordinate (b5t1) at ($(b5)+(30:0.5)$);
    \coordinate (b5t2) at ($(b5)+(150:0.5)$);
    \coordinate (b5t3) at ($(b5)+(270:0.5)$);
    \draw (b5t1) -- (b5t2) -- (b5t3) -- cycle;

    \coordinate (b6t1) at ($(b6)+(30:0.5)$);
    \coordinate (b6t2) at ($(b6)+(150:0.5)$);
    \coordinate (b6t3) at ($(b6)+(270:0.5)$);
    \draw (b6t1) -- (b6t2) -- (b6t3) -- cycle;

    \draw (w1t1) -- (bt2)
          (w1t3) -- (bt3);

    \draw (w2t1) -- (bt1)
          (w2t2) -- (bt3);

    \draw (w3t2) -- (bt2)
          (w3t3) -- (bt1);

    \draw (w1t2) -- (b1t3)
          (w1t1) -- (b1t1);

    \draw (w1t2) -- (b2t2)
          (w1t3) -- (b2t1);

    \draw (w2t1) -- (b3t2)
          (w2t3) -- (b3t3);

    \draw (w2t2) -- (b4t2)
          (w2t3) -- (b4t1);

    \draw (w3t2) -- (b5t3)
          (w3t1) -- (b5t1);

    \draw (w3t1) -- (b6t2)
          (w3t3) -- (b6t3);
   \draw[-,red,line width=0.6mm,->] (b1)--(w1)--(b)--(w2)--(b3);
\end{scope}

\begin{scope}[shift={(5,0)}]
    \pgfmathsetmacro{\L}{2/sqrt(3)} 

    \coordinate[bvert,label=below:$\b$] (b) at (2,1.732-0.5773);
    \coordinate[wvert,label=above:$\w_0$] (w1) at (1,0.5773);
    \coordinate[wvert,label=above:$\w_2$] (w2) at (3,0.5773);
    \coordinate[wvert,label=above:$\w_1$] (w3) at (2,1.732+0.5773);

    \coordinate[bvert,label=below:$\b_{01}$] (b1) at ($(w1)+(150:\L)$);
    \coordinate[bvert,label=below:$\b_{02}$] (b2) at ($(w1)+(270:\L)$);

    \coordinate[bvert,label=below:$\b_{21}$] (b3) at ($(w2)+(30:\L)$);
    \coordinate[bvert,label=below:$\b_{20}$] (b4) at ($(w2)+(270:\L)$);

    \coordinate[bvert,label=below:$\b_{10}$] (b5) at ($(w3)+(150:\L)$);
    \coordinate[bvert,label=below:$\b_{12}$] (b6) at ($(w3)+(30:\L)$);
   
    \draw[-]
      (b) edge (w1) edge (w2) edge (w3)
      (w1) edge (b1) edge (b2)
      (w2) edge (b3) edge (b4)
      (w3) edge (b5) edge (b6);

    \coordinate (bt1) at ($(b)+(30:0.5)$);
    \coordinate (bt2) at ($(b)+(150:0.5)$);
    \coordinate (bt3) at ($(b)+(270:0.5)$);
    \draw (bt1) -- (bt2) -- (bt3) -- cycle;

    \coordinate (w1t1) at ($(w1)+(90:0.5)$);
    \coordinate (w1t2) at ($(w1)+(210:0.5)$);
    \coordinate (w1t3) at ($(w1)+(330:0.5)$);
    \draw (w1t1) -- (w1t2) -- (w1t3) -- cycle;

    \coordinate (w2t1) at ($(w2)+(90:0.5)$);
    \coordinate (w2t2) at ($(w2)+(210:0.5)$);
    \coordinate (w2t3) at ($(w2)+(330:0.5)$);
    \draw (w2t1) -- (w2t2) -- (w2t3) -- cycle;

    \coordinate (w3t1) at ($(w3)+(90:0.5)$);
    \coordinate (w3t2) at ($(w3)+(210:0.5)$);
    \coordinate (w3t3) at ($(w3)+(330:0.5)$);
    \draw (w3t1) -- (w3t2) -- (w3t3) -- cycle;

    \coordinate (b1t1) at ($(b1)+(30:0.5)$);
    \coordinate (b1t2) at ($(b1)+(150:0.5)$);
    \coordinate (b1t3) at ($(b1)+(270:0.5)$);
    \draw (b1t1) -- (b1t2) -- (b1t3) -- cycle;

    \coordinate (b2t1) at ($(b2)+(30:0.5)$);
    \coordinate (b2t2) at ($(b2)+(150:0.5)$);
    \coordinate (b2t3) at ($(b2)+(270:0.5)$);
    \draw (b2t1) -- (b2t2) -- (b2t3) -- cycle;

    \coordinate (b3t1) at ($(b3)+(30:0.5)$);
    \coordinate (b3t2) at ($(b3)+(150:0.5)$);
    \coordinate (b3t3) at ($(b3)+(270:0.5)$);
    \draw (b3t1) -- (b3t2) -- (b3t3) -- cycle;

    \coordinate (b4t1) at ($(b4)+(30:0.5)$);
    \coordinate (b4t2) at ($(b4)+(150:0.5)$);
    \coordinate (b4t3) at ($(b4)+(270:0.5)$);
    \draw (b4t1) -- (b4t2) -- (b4t3) -- cycle;

    \coordinate (b5t1) at ($(b5)+(30:0.5)$);
    \coordinate (b5t2) at ($(b5)+(150:0.5)$);
    \coordinate (b5t3) at ($(b5)+(270:0.5)$);
    \draw (b5t1) -- (b5t2) -- (b5t3) -- cycle;

    \coordinate (b6t1) at ($(b6)+(30:0.5)$);
    \coordinate (b6t2) at ($(b6)+(150:0.5)$);
    \coordinate (b6t3) at ($(b6)+(270:0.5)$);
    \draw (b6t1) -- (b6t2) -- (b6t3) -- cycle;

    \draw (w1t1) -- (bt2)
          (w1t3) -- (bt3);

    \draw (w2t1) -- (bt1)
          (w2t2) -- (bt3);

    \draw (w3t2) -- (bt2)
          (w3t3) -- (bt1);

    \draw (w1t2) -- (b1t3)
          (w1t1) -- (b1t1);

    \draw (w1t2) -- (b2t2)
          (w1t3) -- (b2t1);

    \draw (w2t1) -- (b3t2)
          (w2t3) -- (b3t3);

    \draw (w2t2) -- (b4t2)
          (w2t3) -- (b4t1);

    \draw (w3t2) -- (b5t3)
          (w3t1) -- (b5t1);

    \draw (w3t1) -- (b6t2)
          (w3t3) -- (b6t3);
\draw[-,red,line width=0.6mm,->] (b1)--(w1)--(b)--(w3)--(b5);
\end{scope}
\end{tikzpicture}
\end{center}

\caption{The ribbon structure induced by the embedding of $\Gamma$ in $\T$ (left) and the conjugate surface $\widehat S$, obtained by reversing the cyclic order at each black vertex (right). The red paths illustrate the bijection between zig-zag paths of $\Gamma$ and boundary components of the conjugate surface (equivalently, faces of $\graph$ in $\widehat S$).} \la{fig:conj_surface}
\end{figure}

We define the conjugate surface and the cluster Poisson structure following \cite[\S1.1.1]{GK13}.

A \emph{ribbon structure} on $\Gamma$ is a cyclic ordering of the half-edges at each vertex. It determines a surface with boundary, called a \emph{ribbon graph}, obtained by replacing each vertex $\vert$ by a small triangle $\Delta_\vert$ and each edge $\e$ by a rectangle $\square_\e$, and gluing the rectangles to the triangles according to the chosen cyclic orders.

The embedding of $\Gamma$ in $\T$ induces such a ribbon structure (see Figure~\ref{fig:conj_surface}, left). The \emph{conjugate ribbon structure} is obtained by reversing the cyclic order at each black vertex. Equivalently, one cuts the ribbon graph along each black triangle $\Delta_\b$ and reglues it with the opposite orientation. We denote the resulting ribbon graph by $\widehat\Gamma$ and call it the \emph{conjugate ribbon graph}. Under this construction, the zig-zag paths in $\Gamma$ are in bijection with the boundary components of $\widehat\Gamma$. Gluing a disk to each boundary component produces a closed oriented surface $\widehat S$, called the \emph{conjugate surface}.

By construction, $\widehat\Gamma$ deformation retracts onto $\Gamma$. Hence
\[
H_1(\widehat\Gamma,\Z)\cong H_1(\Gamma,\Z).
\]
Let
\[
\iota_*:H_1(\widehat\Gamma,\Z)\rightarrow H_1(\widehat S,\Z)
\]
be the map induced by the inclusion $\iota:\widehat\Gamma\hookrightarrow \widehat S$. We define the pairing
\[
\langle L_1,L_2 \rangle:= \iota_*(L_1)\wedge \iota_*(L_2),
\qquad L_1,L_2\in H_1(\Gamma,\Z)\cong H_1(\widehat\Gamma,\Z),
\]
where $\wedge$ denotes the intersection pairing on the closed surface $\widehat S$ and we identify the cycle $L$ with its homology class as explained in Section~\ref{sec:graph_homology}. The \emph{cluster Poisson bracket} on $\mathcal X$ is defined as
\[
\{\mon_{L_1},\mon_{L_2}\}
=
\langle L_1,L_2 \rangle \mon_{L_1}\mon_{L_2},
\qquad L_1,L_2\in H_1(\Gamma,\Z).
\]

Let 
\[
\bm c = (c_\alpha)_{\alpha \in \zz}, \qquad c_\alpha \in \C^\times
\]
be a collection of nonzero complex numbers. The subvarieties
\[
\mathcal X_{\bm c} := \{[\wt] \in \mathcal X: C_\alpha([\wt]) = c_\alpha \text{ for all } \alpha \in \zz\}
\]
are the symplectic leaves of the cluster Poisson structure.

\subsection{The Kasteleyn matrix, spectral curve and integrability}

The \emph{Kasteleyn matrix} is the $\C[z^{\pm1},w^{\pm1}]$-module map
\begin{equation}\label{eq:::kast}
\K:\bigoplus_{\b\in \B}\C[z^{\pm1},w^{\pm1}]
\rightarrow
\bigoplus_{\w\in \W}\C[z^{\pm1},w^{\pm1}],
\end{equation}
whose $(\w,\b)$-entry is
\[
\K_{\w,\b}
:= \begin{cases}
\wt(\e) z^{a(\e)} w^{b(\e)} &\text{if $\e = \b \w$ is an edge},\\
0 &\text{otherwise}.
\end{cases}
\]
Here $a(\e)\in\{0,1\}$ (resp.\ $b(\e)\in\{0,1\}$) is defined by $a(\e)=1$ (resp.\ $b(\e)=1$) if $\e$
crosses the vertical (resp.\ horizontal) boundary of the fundamental domain, and $a(\e)=0$ (resp.\ $b(\e)=0$)
otherwise (see Figure~\ref{fig:hcd}).  

The polynomial
\[
\mathsf P:=\det \K
\]
is called the \textit{characteristic polynomial}.

A \emph{dimer cover} $\mathsf M$ of $\graph$ is a subset of $\E$ such that every vertex is used exactly once. 

\begin{theorem}[\cite{Kast}]
Let $[\wt]\in\mathcal X$, and fix a representative $\wt$. Then 
\[
\mathsf P = \sum_{\mathsf M\ \text{a dimer cover}} \epsilon(\mathsf M)\wt(\mathsf M) z^{a(\mathsf M)} w^{b(\mathsf M)},
\]
where
\[
a(\mathsf M):=\sum_{\e\in\mathsf M} a(\e),\qquad 
b(\mathsf M):=\sum_{\e\in\mathsf M} b(\e),\qquad
\wt(\mathsf M):=\prod_{\e\in\mathsf M}\wt(\e),
\]
and the sign $\epsilon(\mathsf M)\in\{\pm1\}$ is an explicit function of the pair $\bigl(a(\mathsf M),b(\mathsf M)\bigr)$.
\end{theorem}

\begin{figure} \begin{center} \begin{tikzpicture}[scale=1.5] \def\dd{2}; \draw[dashed, gray] (0,0) -- (2*\dd,0) -- (3*\dd,1.732*\dd) -- (1*\dd,1.732*\dd) -- (0,0); \foreach \i in {1,...,\dd} { \foreach \j in {1,...,\dd} { \pgfmathtruncatemacro{\ee}{\dd+1}; \pgfmathtruncatemacro{\ff}{\dd}; \coordinate[] (b\i0) at (2*\i-1,0); \coordinate[] (b0\j) at (\j-1+0.5,1.732*\j-1.732+0.866); \coordinate[label=above:${w}$] (w\i\ee) at (2*\i+\dd-1,1.732*\dd); \coordinate[label=right:${z}$] (w\ee\j) at (2*\dd+\j-1+0.5,1.732*\j-1.732+0.866); \coordinate[bvert,label=below:$\b_{\i \j}$] (b\i\j) at (2*\i+\j-1,1.732*\j-0.5773); \coordinate[wvert,label=left:$\w_{\i \j}$] (w\i\j) at (2*\i+\j-2,1.732*\j-1.732+0.5773); \draw[-] (w\i\j)--node[above]{$a_{\i \j}$}(b\i\j); \pgfmathtruncatemacro{\y}{\i-1}; \pgfmathtruncatemacro{\x}{\j-1}; \pgfmathtruncatemacro{\u}{\i+1}; \pgfmathtruncatemacro{\v}{\j+1}; \draw[-](b\y\j)--node[above]{$b_{\i \j}$}(w\i\j); \draw[-](b\i\x)--node[right]{$c_{\i \j}$}(w\i\j); \draw[-](w\i\ee)--(2*\i+\dd-1,1.732*\dd-0.5773); \draw[-](w\ee\j)--(2*\dd+\j-1,1.732*\j-0.5773); } } \end{tikzpicture} \end{center} \caption{A $\C^\times$-local system $\wt$ (red) on a $2 \times 2$ fundamental domain for the honeycomb lattice along with the $z$ and $w$ factors (green).} \la{fig:2x2} \end{figure}

\begin{example}
Consider the $2\times 2$ fundamental domain for the honeycomb lattice shown in Figure~\ref{fig:2x2}.
With the edge weights labeled in red and the monodromy variables in green, the Kasteleyn matrix is
\begin{align*}
\K=\begin{blockarray}{ccccc}
 \b_{11} & \b_{12} & \b_{21} & \b_{22}  \\
\begin{block}{[cccc]c}
  {a_{11}} & {{c_{11}} {w}} &  {b_{11}} {z}&0 &\w_{11}\\
  {c_{12}} & {a_{12}} &0& {b_{12}}{z}   &\w_{12}\\
{b_{21}}   & 0 &  {a_{21}} &{{c_{21}} {w}}&\w_{21}\\
  0& {b_{22}} &  {{c_{22}} } &{a_{22}}&\w_{22}\\
\end{block}
\end{blockarray}.
\end{align*}
Thus the characteristic polynomial
\begin{align}
\mathsf P
  &=a_{11} a_{12} a_{21} a_{22}-({a_{11} a_{12}
   c_{21} c_{22}}+a_{21} a_{22} c_{11} c_{12}) {w}-\nonumber\\
  &\quad-(a_{11} a_{21} b_{12}
   b_{22}+a_{12} a_{22} b_{11} b_{21})
  { z}+b_{12}
   b_{11} b_{21} b_{22} {z^2}-\nonumber\\
  &\quad-(b_{12} b_{21}
   c_{11} c_{22} +b_{11} b_{22} c_{12}
   c_{21}) {w z}+c_{11} c_{12} c_{21} c_{22} {w^2} .\la{p2x2}
\end{align}
\end{example}

Consider the standard open embedding
\[
(\C^\times)^2 \hookrightarrow \P^2,
\qquad
(z,w)\mapsto [1:z:w].
\]
Let \(C\subset \P^2\) be the projective closure of the curve \(\{\mathsf P=0\}\subset(\C^\times)^2\); we call \(C\) the \emph{spectral curve}. Although the Kasteleyn matrix \(\K\) and the polynomial \(\mathsf P\) depend on the chosen representative \(\wt\), the curves \(C^\circ\) and \(C\) depend only on the class \([\wt]\).

For generic \([\wt]\in\mathcal X\), the curve \(C\) is a plane curve of degree \(d\) (see~\cite[Theorem~1]{KO}); equivalently, the Newton polygon of \(\mathsf P\) is
\[
N=\operatorname{conv}\{(0,0),(d,0),(0,d)\}.
\]
Let \(N^\circ\) denote the interior of \(N\), and set
\[
N_{\Z}:=N\cap \Z^2,
\qquad
N_{\Z}^\circ:=N^\circ\cap \Z^2.
\]
If \(C\) is smooth, its genus is
\[
g=\binom{d-1}{2}=|N_{\Z}^\circ|.
\]

Normalize \(\mathsf P\) so that its constant term is \(H_{0,0}=1\), and write
\[
\mathsf P=\sum_{(i,j)\in N_{\Z}} H_{i,j}z^i w^j.
\]
With this normalization, each coefficient \(H_{i,j}\) depends only on \([\wt]\) (and not on the choice of representative \(\wt\)). 

\begin{theorem}[{\cite[Theorem~1.2]{GK13}}]
For generic $\bm c$, the symplectic leaf $\mathcal X_{\bm c}$ carries an algebraic completely integrable (Liouville) system whose Poisson-commuting Hamiltonians are the functions
\[
\bigl(H_{i,j}\bigr)_{(i,j)\in N^\circ_{\Z}}.
\]
\end{theorem}

\section{The Beauville integrable system}

In this section we recall the Poisson geometry underlying the Beauville integrable system for plane curves, and then introduce the equivariant variant that will be related to the dimer model.

\subsection{The Beauville--Bottacin Poisson structure} \label{sec:Beauville_Poisson_structure}

Fix $d$, and let $g=\binom{d-1}{2}$ denote the genus of a smooth plane curve of degree $d$.
Let $x_0,x_1,x_2$ be homogeneous coordinates on $\P^2$, and set
\[
z:=\frac{x_1}{x_0},\qquad w:=\frac{x_2}{x_0}
\]
for the coordinates on the dense torus $(\C^\times)^2\subset \P^2$.
Let
\[
D:=D_0\cup D_1\cup D_2,\qquad D_i:=\{x_i=0\}\subset \P^2
\]
be the toric boundary divisor. Note that $\cO_{\P^2}(D)\cong \cO_{\P^2}(3)\cong \omega_{\P^2}^{-1}$ and hence $\omega_{\P^2}\cong \cO_{\P^2}(-D)$.

    Let $\mathcal M$ be the moduli space of sheaves on $\P^2$ of the form \[\mathcal L=i_*L,\]
where $i:C\hookrightarrow \P^2$ is the embedding of a smooth degree $d$ curve and $L$ is a line bundle on $C$ of degree $g = \binom{d-1}{2}$, i.e. the genus of $C$.

Instead of giving a Poisson structure on ${\mathcal M}$ as a bivector field $\pi_{\mathcal M}\in H^0( {\mathcal M},\wedge^2 T {\mathcal M})$, we can encode it via the \emph{anchor map}
\[
\pi^\sharp_{\mathcal M}:T^*{\mathcal M} \rightarrow T {\mathcal M},
\]
so that
\[
\pi_{\mathcal M}(\alpha,\beta)=\alpha(\pi_{\mathcal M}^\sharp(\beta)), \qquad \alpha,\beta \in T^* {\mathcal M}.
\]

Standard deformation theory (see~\cite{Hartshorne}) identifies $T_{[\mathcal L]}\mathcal M$, the tangent space of $\mathcal M$ at $[\mathcal L]$, as follows
\[
T_{[\mathcal L]}\mathcal M\cong \Ext^1(\mathcal L,\mathcal L),
\]
and by Serre duality,
\[
T^*_{[\mathcal L]}\mathcal M\cong \Ext^1(\mathcal L,\mathcal L\otimes \omega_{\P^2})
\cong \Ext^1(\mathcal L,\mathcal L(-D)).
\]
Bottacin~\cite{Bottacin}, generalizing earlier work of Mukai~\cite{Mukai} and Tyurin~\cite{Tyurin},
showed that for any \[\theta\in H^0(\P^2,\omega_{\P^2}^{-1})\cong H^0(\P^2,\cO_{\P^2}(D)),\]
the natural map
\[
(\pi_{\mathcal M}^\sharp)_{[\mathcal L]}:\Ext^1(\mathcal L,\mathcal L(-D))
\xrightarrow{\ \theta\ }
\Ext^1(\mathcal L,\mathcal L),
\]
induced by the morphism of sheaves \[\mathcal L(-D)\xrightarrow{\ \theta\ } \mathcal L\]
is the anchor map of a Poisson structure on $\mathcal M$. We will take $\theta=x_0x_1x_2$, which restricts on $(\C^\times)^2$ to the bivector field
\[
(z\partial_z)\wedge(w\partial_w).
\]

\subsection{Equivariant modification and boundary labeling}\label{sec:equiv_bottacin}

The Poisson space $\mathcal M$ is the phase space of the Beauville integrable system in its classical form. For the dimer model, however, we need a modification that is defined on a finite cover and keeps track of additional boundary data.

More precisely, we will:
\begin{enumerate}
\item Pass to a $d^2$-fold cover $p:\widetilde\P^2\to\P^2$ with covering group $G$ and replace $\mathcal M$ by a moduli space $\widetilde{\mathcal M}$ of $G$-equivariant sheaves on $\widetilde\P^2$. 
\item Augment the moduli space by a discrete labeling of the points at infinity
$C\cap D$ by zig-zag paths.
\end{enumerate}
The remainder of the section carries out these two modifications and explains how the resulting Poisson space supports an algebraic completely integrable system whose Hamiltonians are the
interior coefficients of the defining polynomial of $C$.

\begin{remark}\label{rem:why-two-mods}
The second modification is easy to justify: without a labeling of the boundary points $C\cap D$, the spectral transform will only be a finite-to-one map (up to permuting these points).  Fixing a labeling removes this ambiguity and makes it birational.

The first modification is more subtle, and we postpone its justification to Remarks~\ref{rem:stack_1} and~\ref{rem:stack_2}. Here we simply note that it already appears in the work of Kenyon--Okounkov~\cite{KO} and in~\cite{TWZ}.
\end{remark}

Let $\widetilde\P^2$ and $\P^2$ be two copies of the projective plane, equipped with
homogeneous coordinates
\[
[\widetilde x_0:\widetilde x_1:\widetilde x_2]\in \widetilde\P^2,
\qquad
[x_0:x_1:x_2]\in \P^2.
\]
Consider the $d^2$-fold covering map
\[
p:\widetilde\P^2\rightarrow \P^2,
\qquad
[\widetilde x_0:\widetilde x_1:\widetilde x_2]\mapsto
[\widetilde x_0^{d}:\widetilde x_1^{d}:\widetilde x_2^{d}],
\]
which is ramified along $D$.

Let
\[
G:=\bmu_d^{\times 3}/\Delta,
\qquad
\Delta:=\{(\zeta,\zeta,\zeta):\zeta\in\bmu_d\}.
\]
The group $\bmu_d^{\times 3}$ acts on $\widetilde\P^2$ by coordinate-wise scaling,
\[
(\zeta_0,\zeta_1,\zeta_2)\cdot[\widetilde x_0:\widetilde x_1:\widetilde x_2]
=
[\zeta_0\widetilde x_0:\zeta_1\widetilde x_1:\zeta_2\widetilde x_2],
\]
and the diagonal subgroup $\Delta$ acts trivially, so the action factors through $G$.

We now introduce the space $\widetilde{\mathcal M}$, defined as a moduli space of $G$-equivariant sheaves together with certain discrete data. Roughly speaking, a $G$-equivariant sheaf is a sheaf equipped with a compatible action of $G$ called a \emph{$G$-linearization}. Since we will always study these sheaves through their locally free resolutions, it will be enough to understand $G$-equivariant vector bundles. A brief review of the necessary background is included in Appendix~\ref{appendix:G_sheaves}.

\begin{remark}\label{rem:stack_why}
Equivalently, one may work on the quotient stack
\[
 [\widetilde{\P}^2/G],
\]
whose coarse moduli space is $\P^2$. Giving a $G$-equivariant coherent sheaf on $\widetilde{\P}^2$ is the same as giving a coherent sheaf on $[\widetilde{\P}^2/G]$, and whenever we take $G$ invariants of a Hom/Ext group, this is exactly the corresponding Hom/Ext group computed on the stack $[\widetilde{\P}^2/G]$.

Readers comfortable with stacks are therefore welcome to regard the constructions and computations in
this section as taking place on $[\widetilde{\P}^2/G]$.
We have chosen the concrete $G$-equivariant language on the cover $\widetilde{\P}^2$ to avoid introducing
stacky notation and to keep the exposition more accessible.
\end{remark}

\begin{definition}
Let $\widetilde{\mathcal M}$ denote the moduli space of pairs $(\widetilde{\mathcal L},\bm\nu)$ consisting of:
\begin{enumerate}
\item a $G$-equivariant sheaf $\widetilde{\mathcal L}$ on $\widetilde\P^2$ of the form
\[
\widetilde{\mathcal L}=\widetilde i_*\widetilde L,
\]
where $\widetilde i:\widetilde C\hookrightarrow \widetilde\P^2$ is the inclusion of a smooth
$G$-invariant plane curve $\widetilde C=p^{-1}(C)$, with $C\subset \P^2$ a smooth plane curve of degree $d$, and where $\widetilde L$ is a $G$-equivariant line bundle on $\widetilde C$ of degree
\[
\widetilde g:=\binom{\widetilde d-1}{2},
\qquad\text{with}\qquad
\widetilde d:=\deg(\widetilde C)=d^2;
\]
\item a triple of bijections
\[
\bm\nu=(\nu_i)_{i=0,1,2},
\qquad
\nu_i:\zz_i \xrightarrow{\sim} C\cap D_i,
\]
called a \emph{labeling of the points at infinity of \(C\) by zig-zag paths}. 
Via these bijections, we henceforth identify each point \(\nu_i(\alpha)\in C\cap D_i\) with the corresponding zig-zag path \(\alpha\in\zz_i\).
\end{enumerate}
\end{definition}

Let
\[
\widetilde D:=\widetilde D_0\cup \widetilde D_1\cup \widetilde D_2,\qquad \widetilde D_i:=\{\widetilde x_i=0\}\subset \widetilde \P^2
\]
be the toric boundary divisor.

At a point $(\widetilde{\mathcal L},\nu)\in \widetilde{\mathcal M}$, the tangent and cotangent
spaces are given by $G$-invariants:
\[
T_{(\widetilde{\mathcal L},\bm\nu)}\widetilde{\mathcal M}
\cong \Ext^1(\widetilde{\mathcal L},\widetilde{\mathcal L})^{G},
\qquad
T^*_{(\widetilde{\mathcal L},\bm\nu)}\widetilde{\mathcal M}
\cong \Ext^1(\widetilde{\mathcal L},\widetilde{\mathcal L}(-\widetilde D))^{G}.
\]

The section
\[
\widetilde\theta:=\widetilde x_0\widetilde x_1\widetilde x_2 \in
H^0 \bigl(\widetilde\P^2,\mathcal O_{\widetilde\P^2}^G(\widetilde D)\bigr)
\]
is $G$-invariant. Multiplication by $\widetilde\theta$ induces a morphism of sheaves
\[
\widetilde{\mathcal L}(-\widetilde D)\xrightarrow{\ \widetilde\theta\ }\widetilde{\mathcal L},
\]
and hence, a map on $G$-invariants
\[
(\pi^\sharp_{\widetilde{\mathcal M}})_{(\widetilde{\mathcal L},\bm \nu)}:
\Ext^1(\widetilde{\mathcal L},\widetilde{\mathcal L}(-\widetilde D))^{G}
\xrightarrow{\ \widetilde\theta\ }
\Ext^1(\widetilde{\mathcal L},\widetilde{\mathcal L})^{G},
\]
which defines the anchor map of a Poisson structure on $\widetilde{\mathcal M}$.

Let 
\[
\mathsf P=\sum_{(i,j)\in N_{\Z}} H_{i,j}z^i w^j
\]
denote the polynomial defining $C$ normalized so that its constant term is \(H_{0,0}=1\). For $\alpha \in \zz_2$, we define
\[
C_\alpha := (-1)^d  z(\nu_2(\alpha))^{-1}
\]
and similarly $C_\alpha$ for $\alpha \in \zz_0,\zz_1$ is defined using toric coordinates on $D_0,D_1$ respectively. These functions are called \emph{Casimirs} because they generate the center of the Beauville Poisson structure.

Let 
\[
\bm c = (c_\alpha)_{\alpha \in \zz}, \qquad c_\alpha \in \C^\times
\]
be a collection of nonzero complex numbers. Let
\[
\widetilde{\mathcal M}_{\bm c}
\]
denote the subvariety where $C_\alpha = c_\alpha$ for all $\alpha \in \zz$. These are the symplectic leaves of $\widetilde{\mathcal M}$.

\begin{theorem}[{\cite{Beauville}}]
For generic values of the Casimirs $\bm c$, the symplectic leaf $\widetilde{\mathcal M}_{\bm c}$ carries an algebraic completely integrable (Liouville) system whose Poisson-commuting Hamiltonians are the functions
\[
\bigl(H_{i,j}\bigr)_{(i,j)\in N^\circ_{\Z}}.
\]
\end{theorem}

\section{Statement of the main theorem} \la{sec:spec}

\begin{figure}
\begin{center}\begin{tikzpicture}[scale=1.5] 
		\coordinate[bvert,label=below:$ \b$] (b) at (2,1.732-0.5773);
			\coordinate[wvert,label=below:$\w_0$] (w1) at (1,0.5773);
			\coordinate[wvert,label=below:$\w_1$] (w2) at (3,0.5773);
			\coordinate[wvert,label=above:$\w_2$] (w3) at (2,1.732+0.5773);

			\draw[-]
				  (b) edge  (w2) edge node[left] {$ c \widetilde w$} (w3) edge  (w1)
				;
				\node[] (no) at (2.85,1.) {$b \widetilde z$}; 
				\node[] (no) at (2-.75,1.) {$a $}; 
	\end{tikzpicture} \hspace{2cm}
	\begin{tikzpicture}[scale=1.5] 
		\coordinate[bvert,label=below:$ \b$] (b) at (2,1.732-0.5773);
			\coordinate[wvert,label=below:$\w_0$] (w1) at (1,0.5773);
			\coordinate[wvert,label=below:$\w_1$] (w2) at (3,0.5773);
			\coordinate[wvert,label=above:$\w_2$] (w3) at (2,1.732+0.5773);

			\draw[-]
				  (b) edge  (w2) edge node[left] {$ c\widetilde x_2$} (w3) edge  (w1)
				;
				\node[] (no) at (2.85,1.) {$b \widetilde x_1$}; 
				\node[] (no) at (2-.75,1.) {$a\widetilde x_0$}; 
	\end{tikzpicture}
\end{center}
\caption{The regularized Kasteleyn matrix $\widetilde\K$ on the dense torus (left) and its homogeneous form on $\widetilde\P^2$ (right).} \la{fig:homkast}
\end{figure}

In this section, following~\cite[Section~3.2]{KO}, we define the \emph{spectral transform}
\[
\kappa:\ \mathcal X \dashrightarrow \widetilde{\mathcal M},
\]
a rational map which assigns to a generic point $[\wt]\in\mathcal X$ a pair
\[
\kappa([\wt])=(\widetilde{\mathcal L},\bm \nu).
\]
In \cite{Fock,GGK}, it is shown that the spectral transform is birational, so it identifies the phase space of the cluster integrable system with that of the Beauville integrable system. Our main result is:

\begin{theorem}\label{thm:main}
The spectral transform is a birational isomorphism of integrable systems.
\end{theorem}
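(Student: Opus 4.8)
The plan is to upgrade Fock's birational equivalence to an isomorphism of integrable systems by verifying the two structural ingredients that a morphism of integrable systems must respect: the Lagrangian (abelian) fibration, equivalently the commuting Hamiltonians, and the symplectic form on a generic symplectic leaf. Write the spectral (characteristic) polynomial of a dimer configuration as $P(z,w)=\sum_{(i,j)\in N\cap\Z^2}c_{ij}z^iw^j$, with Newton polygon $N$. The boundary coefficients $c_{ij}$ for $(i,j)\in\partial N$ are the Casimirs of the cluster Poisson structure and are fixed on a symplectic leaf, while the $g=\binom{d-1}{2}$ interior coefficients (for this triangle the number of interior lattice points of $N$ equals the genus) are the commuting Hamiltonians. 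On the Beauville side the same interior coefficients are the Hamiltonians of the Bottacin system. Thus both integrable systems fibre over the same base $B\cong\A^g$ of interior coefficients, and—since the spectral transform records precisely the spectral curve $C=\{P=0\}\subset\P^2$ together with a line bundle on it—the spectral transform is, essentially by construction, a map over $B$. This disposes of the fibration and leaves the comparison of the symplectic forms.

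Next I would match the fibres. Fix a generic, hence smooth, spectral curve $C\in B$ of genus $g$. On the dimer side the cokernel of the Kasteleyn operator is a line bundle $\mathcal L$ on $C$, and the spectral transform sends the dimer data to $(C,\mathcal L)\in\operatorname{Pic}(C)$; on the Beauville side a generic point of the fibre is an effective divisor $D=p_1+\dots+p_g\in\sym^g C$, sent to $\mathcal O_C(D)\in\operatorname{Pic}^g(C)$, and the Abel--Jacobi map $\sym^g C\dashrightarrow\operatorname{Pic}^g(C)$ is birational because $g$ is the genus. Both fibres are therefore torsors over $\operatorname{Jac}(C)$, and I would check that the spectral transform restricts on each fibre to a translation, that is, an isomorphism of $\operatorname{Jac}(C)$-torsors. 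Together with the previous paragraph this identifies the two underlying abelian Lagrangian fibrations birationally over $B$.

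The heart of the argument is that the two symplectic forms agree. The strategy is to show that each equals the canonical separated-variables form
\[
\omega=\sum_{i=1}^{g}\delta\log z(p_i)\wedge\delta\log w(p_i),
\]
where $D=p_1+\dots+p_g$ is the varying divisor and $z,w$ are the toric coordinates on $(\C^\times)^2\subset\P^2$. For the Bottacin form this is immediate from $\theta=z\tfrac{\partial}{\partial z}\wedge w\tfrac{\partial}{\partial w}$: the relation $\{\log z,\log w\}=1$ makes $(\log z,\log w)$ a Darboux pair on $\P^2$, and the form induced on the symmetric product is exactly the sum of these pairings over the $g$ points. For the Goncharov--Kenyon form I would transport the cluster Poisson bracket through the spectral transform and verify $\{\log z(p_i),\log w(p_j)\}=\delta_{ij}$; concretely, the $(\log z(p_i),\log w(p_i))$ should be Sklyanin separated variables for the dimer system, and the log-canonical shape of the cluster bracket should reproduce precisely this pairing.

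I expect this last computation to be the main obstacle, since it requires pushing the combinatorially defined cluster Poisson structure on $\mathcal X$ through the transcendental-looking spectral transform and recognising the residue form. To make the argument robust I would run it in parallel with a uniqueness reduction: any two holomorphic symplectic forms on the $2g$-dimensional total space that make the fibration $\mu$ to $B$ Lagrangian with the same Hamiltonians and induce the same linear structure on the fibres $\operatorname{Jac}(C)$ differ by $\mu^*\beta$ for some $2$-form $\beta$ on $B$; since $B$ is affine and the action coordinates already coincide, $\beta$ is pinned down (and seen to vanish) by a single normalisation along one fibre. This reduces the necessary input to matching $\omega$ at one generic point, which the residue computation supplies, and completes the proof that the spectral transform is a birational isomorphism of integrable systems.
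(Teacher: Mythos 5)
There is a genuine gap: the step you yourself flag as ``the main obstacle'' --- verifying that the cluster Poisson bracket, transported through the spectral transform, satisfies $\{\log z(p_i),\log w(p_j)\}=\delta_{ij}$ --- is not an auxiliary computation but is \emph{equivalent to the theorem}. Everything before it (matching of Hamiltonians, identification of the fibres as torsors over $\operatorname{Jac}(C)$ via Abel--Jacobi) is uncontroversial and is also implicit in the paper's two-line reduction ``the only thing to be proved is that the spectral transform is Poisson.'' So the proposal, as written, restates the theorem in separated-variables form and defers its proof. Your fallback uniqueness reduction does not close this gap either, for two reasons. First, to conclude that two symplectic forms making $\mu$ Lagrangian differ by $\mu^*\beta$ you need the Hamiltonian \emph{vector fields} of the $H_{i,j}$ to coincide, not merely the Hamiltonians and the linear structure on the fibres: the linear structure on a $g$-torus does not single out which translation-invariant field is $X_{H_{i,j}}$, and identifying the two collections of fields is again essentially the content of the theorem. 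Second, even granting $\omega_1-\omega_2=\mu^*\beta$, the form $\beta$ is a $2$-form on the $g$-dimensional base $\mathcal B_Y$; checking agreement of $\omega$ at a single generic point (or along one fibre, where $\mu^*\beta$ restricts to zero in any case) forces $\beta$ to vanish only at that point, not identically.

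For contrast, the paper carries out the missing computation by purely homological means: the Kasteleyn matrix gives a $G$-equivariant locally free resolution $\e\xrightarrow{\widetilde K(x)}\f$ of the spectral sheaf $\mathcal L$, which yields combinatorial models $T_{[x]}\mathcal X\cong\loc(\Gamma,\C)\cong\ext^1(\mathcal L,\mathcal L)^G$ and $T^*_{[x]}\mathcal X\cong\loc^{\mathrm{fr}}(\Gamma,\C)\cong\ext^1(\mathcal L,\mathcal L(-D))^G$ in which $d\kappa$ is literally the identity (Theorem \ref{thm:dk}); the proof is then completed by checking that the Grothendieck--Serre duality pairing coincides with the combinatorial Poincar\'e duality pairing (Proposition \ref{prop:gsdual}) and that Bottacin's map $\theta$ corresponds to $\mathrm{pr}_1$ (Theorem \ref{thm:diagcomm}). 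If you want to pursue your residue/separated-variables route instead, you would need to actually compute $\{z(p_i),w(p_j)\}$ from the log-canonical face-weight bracket --- for instance via the adjugate of $K(z,w)$ defining the divisor $S$ --- which is a substantial calculation, not a normalization.
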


The rest of this section is devoted to the construction of the spectral transform.

\subsection{Pullback of the Kasteleyn matrix to the cover}

The Kasteleyn matrix $\K$ in~\eqref{eq:::kast} is equivalently a morphism of sheaves
\[
\K:{\bigoplus_{\b \in \B}} \cO_{(\C^\times)^2} \to {\bigoplus_{\w \in \W}} \cO_{(\C^\times)^2}
\]
where $(\C^\times)^2 \subset \P^2$ is the dense torus. Let $\widetilde{(\C^\times)}^2 $ denote the dense torus in $\widetilde\P^2$. On dense tori, the map $p:\widetilde\P^2\rightarrow \P^2$ is 
\[
p: \widetilde{(\C^\times)}^2  \to (\C^\times)^2, \qquad (\widetilde z,\widetilde w) \mapsto (\widetilde z^d,\widetilde w^d).
\]
As observed in \cite[Section~3.1]{KO}, the pullback 
\[
\widetilde \K = p^* \K
\]
is gauge equivalent to the matrix shown in Figure~\ref{fig:homkast}(left). 

\begin{remark} \label{rem:stack_1}
We have essentially ``{regularized}'' the Kasteleyn matrix by redistributing the monomial factors in $z$ and $w$ uniformly throughout the fundamental domain, rather than concentrating
them along the boundary. This will simplify the computations below, and it is
the one reason we pass to the cover $\widetilde\P^2$ instead of working directly on $\P^2$; see also~\ref{rem:stack_2}.
\end{remark}

\subsection{The discrete Abel map and the \texorpdfstring{$G$}{G}-equivariant extension of \texorpdfstring{$\widetilde\K$}{K}}

As already observed in \cite[Section~3.2]{KO}, $\widetilde \K$ can be viewed as a morphism of vector bundles 
\begin{equation} \label{eq:kast_extension}
    \widetilde \K:{\bigoplus_{\b \in \B}} \cO_{\widetilde \P^2}(-1) \to {\bigoplus_{\w \in \W}} \cO_{\widetilde \P^2}
\end{equation}
which in homogeneous coordinates is the matrix shown in
Figure~\ref{fig:homkast}(right). More precisely, to each edge \(\e\) we associate the monomial
\[
\widetilde x(\e):=\widetilde x_i,
\]
where \(\zz_i\) is the unique zig-zag class such that no zig-zag path in that class contains \(\e\). The \((\w,\b)\)-entry of \(\widetilde \K\) is then given by
\[
\widetilde \K_{\w,\b}
:=
\begin{cases}
\wt(\e)\widetilde x(\e), & \text{if } \e=\b\w \text{ is an edge},\\
0, & \text{otherwise}.
\end{cases}
\]

We will strengthen this by showing that $\widetilde\K$ extends to a
$G$-invariant morphism between $G$-equivariant locally free sheaves on $\widetilde\P^2$. The $G$-linearizations are constructed using Fock's \emph{discrete Abel map}~\cite{Fock}, a function
\[
\dabel:\B\sqcup\W \rightarrow \Z\{\widetilde D_0,\widetilde D_1,\widetilde D_2\},
\]
defined as follows.
\begin{enumerate}
\item Choose a base white vertex $\w_0\in\W$ (for concreteness, we take the white vertex at the bottom-left
corner of the fundamental domain) and set $\dabel(\w_0)=0$.
\item Let $\e=\b\w$ be an edge inside the fundamental domain with $\b\in\B$ and $\w\in\W$, and let $\alpha$ and $\beta$
be the two zig-zag paths containing $\e$. If $\alpha\in\zz_i$ and $\beta\in\zz_j$, then we set
\[
\dabel(\b)=\dabel(\w)+\widetilde D_i+\widetilde D_j.
\]
\end{enumerate}

For $\b\in\B$ and $\w\in\W$ define the $G$-equivariant line bundles on $\widetilde\P^2$
\[
\cale_{\b}:=\cO_{\widetilde\P^2}^G\bigl(\dabel(\b)-\widetilde D\bigr),
\qquad
\calf_{\w}:=\cO_{\widetilde\P^2}^G\bigl(\dabel(\w)\bigr),
\]
and set
\[
\cale:=\bigoplus_{\b\in\B}\cale_{\b},
\qquad
\calf:=\bigoplus_{\w\in\W}\calf_{\w}.
\]

The following proposition is a special case of a more general result proved in~\cite{GGK}.

\begin{proposition}\label{prop:kast_ext} With the $G$-linearizations on $\cale$ and $\calf$ determined by the discrete Abel map, the matrix $\widetilde\K$ defines a $G$-equivariant morphism \[ \widetilde\K:\ \cale \rightarrow \calf \] on $\widetilde\P^2$. \end{proposition}

\begin{proof}
It suffices to verify \(G\)-equivariance for each nonzero matrix entry. Let
\(\e=\b\w\) be an edge, and let \(\alpha\in\zz_i\) and \(\beta\in\zz_j\) be the two zig-zag paths containing \(\e\). If
\(k\in\{0,1,2\}\setminus\{i,j\}\), then
\[
\dabel(\b)=\dabel(\w)+\widetilde D_i+\widetilde D_j
+a(\e)d(\widetilde D_1-\widetilde D_0)
+b(\e)d(\widetilde D_2-\widetilde D_0).
\]
Hence, by definition of the line bundles \(\cale_\b\) and \(\calf_\w\),
\[
\cale_\b^\vee\otimes\calf_\w
\cong
\cO_{\widetilde\P^2}^G \left(
\widetilde D_k
-a(\e)d(\widetilde D_1-\widetilde D_0)
-b(\e)d(\widetilde D_2-\widetilde D_0)
\right)
\]
as \(G\)-linearized line bundles. The unique zig-zag class not containing \(\e\) is \(\zz_k\), so by definition
\(\widetilde x(\e)=\widetilde x_k\). By~\eqref{eq:h0G},
\[
H^0 \left(\widetilde\P^2,\cale_\b^\vee\otimes\calf_\w\right)^G
\cong
\C \{ \widetilde x_k \}.
\]
Thus \(\widetilde x(\e)\) is a \(G\)-invariant global section of
\(\cale_\b^\vee\otimes\calf_\w\), and so
\[
\widetilde\K_{\w,\b}=\wt(\e)\widetilde x(\e)
\]
is \(G\)-equivariant. 
\end{proof}

\begin{remark}\label{rem:stack_2}
The proof of Proposition~\ref{prop:kast_ext} shows more than extendability of $\K$ from $(\C^\times)^2$ to $\widetilde \P^2$. It canonically identifies 
\[
\Hom(\cale,\calf)^G \cong \C^\E.
\]
By contrast, on $\P^2$ one can always force an extension of $\K$ by twisting the source sufficiently negatively, but then the corresponding Hom-space of morphisms will be too large and there is no way to single out the distinguished collection of maps coming from edges. This is the main reason we work on $\widetilde\P^2$ rather than on $\P^2$.
\end{remark}

\subsection{The spectral transform}\label{sec:st}

By Proposition~\ref{prop:kast_ext}, the Kasteleyn matrix sits in an exact sequence of $G$-equivariant sheaves
\be \label{eq:kast_es}
0 \rightarrow \cale \xrightarrow{\ \widetilde\K\ } \calf \rightarrow \widetilde{\mathcal L} \rightarrow 0,
\ee
where
\[
\widetilde{\mathcal L}:=\coker(\widetilde\K).
\]
We call $\widetilde{\mathcal L}$ the \emph{spectral sheaf}. Its support is the $G$-invariant curve
\[
\widetilde C \ :=\ \{\det \widetilde\K = 0\}\ =\ p^{-1}(C)\ \subset\ \widetilde\P^2,
\]
where $C\subset\P^2$ denotes the spectral curve. For generic $[\wt]$, the curves $C$ and $\widetilde C$ are smooth, and $\widetilde{\mathcal L}$ is of the form $\widetilde i_* \widetilde L$, where $\widetilde L$ is a $G$-equivariant degree $\widetilde g$ line bundle on $\widetilde C$. The spectral sheaf is the first component of the spectral transform.

\begin{figure}
\begin{center}
	\begin{tikzpicture}[scale=1.5] 
		\coordinate[bvert,label=above:$\b_1$] (b1) at (2,1.732-0.5773);
		\coordinate[bvert,label=above:$\b_2$] (b2) at (4,1.732-0.5773);
			\coordinate[] (b0) at (0,1.732-0.5773);
		\coordinate[] (b3) at (6,1.732-0.5773);
			\coordinate[wvert,label=below:$\w_1$] (w1) at (1,0.5773);
			\coordinate[wvert,label=below:$\w_2$] (w2) at (3,0.5773);
			\coordinate[wvert,label=below:$\w_3$] (w3) at (5,0.5773);
	\draw[-] (w1) -- node[above]{$a_1 \widetilde x_0$}(b1)--node[above]{$b_1 \widetilde x_1$}(w2);
	\draw[-] (w2) -- node[above]{$a_2 \widetilde x_0$}(b2)--node[above]{$b_2 \widetilde x_1$}(w3);
	\draw[dashed] (b0)--node[above]{$b_d\widetilde x_1$}(w1);
		\draw[dashed] (b3)--node[above]{$a_3 \widetilde x_0$}(w3);
	\end{tikzpicture}
\end{center}
\caption{The Kasteleyn matrix $\widetilde \K$  restricted to a zig-zag path $\alpha \in \zz_2$. }\la{fig:kastzz}
\end{figure}

The second component records the intersection of the spectral curve with the boundary divisors. We explain this for the divisor $\widetilde D_2=\{\widetilde x_2=0\}$. Setting $\widetilde x_2=0$, the matrix $\widetilde\K$ becomes block-diagonal, with blocks indexed by zig-zag paths $\alpha\in\zz_2$. For such an $\alpha$, the corresponding block has the form
\begin{equation}\label{mat:kapl}
\widetilde \K_\alpha:=
\begin{bmatrix}
a_1 \widetilde x_0 & & &  b_d \widetilde x_1\\
b_1 \widetilde x_1 & a_2 \widetilde x_0 &&\\
& b_2 \widetilde x_1 & a_3 \widetilde x_0&\\
&&\ddots&\ddots
\end{bmatrix},
\end{equation}
where the edge weights along $\alpha$ are as in Figure~\ref{fig:kastzz}. This block is singular precisely at the points of $\widetilde D_2$ satisfying
\begin{equation} \label{eq:casimir}
    \Bigl(\frac{\widetilde x_1}{\widetilde x_0}\Bigr)^d
= (-1)^d\prod_{i=1}^d \frac{a_i}{b_i}
= (-1)^dC_\alpha^{-1},
\end{equation}
In particular, the image point on \(D_2=\{x_2=0\}\subset\P^2\) satisfies
\[
z  = (-1)^d C_\alpha^{-1}.
\]
We denote this boundary point by $\nu_2(\alpha)\in C\cap D_2$. As $\alpha$ ranges over $\zz_2$, these points are pairwise distinct for generic $[\wt]$, and hence determine a bijection
\[
\nu_2: \zz_2 \xrightarrow[]{\sim} C\cap D_2,\qquad \alpha \mapsto \nu_2(\alpha).
\]
The bijections $\nu_0$ and $\nu_1$ are defined analogously using the restrictions of $\widetilde \K$ to $\widetilde D_0$ and $\widetilde D_1$.

\begin{theorem}[\cite{Fock, GGK}]\label{thm:inverse}
The spectral transform $\kappa:\mathcal X \dashrightarrow \widetilde{\mathcal M}$ is birational.
\end{theorem}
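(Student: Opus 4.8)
The plan is to establish that $\kappa$ is birational by exhibiting an inverse rational map, constructing it from the spectral data $(C,S,\nu)$ by reversing the homological-algebra construction of the spectral sheaf. Given a generic triple $(C,S,\nu) \in \mathcal S$, the divisor $S$ of degree $g$ on the smooth degree $d$ curve $C$ determines a line bundle $L:=\mathcal O_C(S)$, and since $\deg S = g = \binom{d-1}{2}$ equals the genus of $C$, a generic such $L$ has a one-dimensional space of sections (vanishing exactly on $S$) and one-dimensional $H^1$. The first step is to lift $L$ to a $G$-equivariant line bundle $\widetilde L$ on $\widetilde C$ using the framing data $\nu$: the bijections $\nu_i : Z_i \to C \cap D_i$ tell us how to decorate the $d^2$ points over $C \cap D$ with the $G$-character data needed to pin down the equivariant structure, so that the pushforward $\mathcal L$ of $\widetilde L$ sits on $[\P^2/G]$.

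Next I would reconstruct the Kasteleyn matrix as a $G$-equivariant locally free resolution of $\mathcal L$. The key observation in the paper is that $\widetilde K(x)$ gives a length-one locally free resolution $0 \to \e \xrightarrow{\widetilde K(x)} \f \to \mathcal L \to 0$ with $\e,\f$ the fixed direct sums of the line bundles $\e_\bw, \f_\w$ determined by the discrete Abel map. So from $\mathcal L$ I want to recover such a resolution: one computes that $\Hom(\f,\mathcal L)^G$ and the vanishing $\ext^1(\e,\mathcal L)^G = 0$ force the surjection $\f \to \mathcal L$ to be essentially unique up to the torus of scalings on each summand, and its kernel is forced to be $\e$ by a Hilbert-polynomial/determinant computation (the cokernel is supported on a degree $d$ curve, so the map of locally free sheaves has the right ranks and determinant). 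By Corollary \ref{cor:kx}, any such $G$-invariant map $\e \to \f$ is $\widetilde K(\dot x)$ for a unique $\dot x \in \C^E$, and the nonvanishing of the relevant edge-entries makes $\dot x$ a genuine $\C^\times$-local system $x$, recovering $[x] \in \mathcal X$.

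The two maps are then mutually inverse by construction on the open locus where $C$ is smooth and $L$ is generic: going $\mathcal X \to \mathcal S \to \mathcal X$ returns the same resolution hence the same $[x]$ up to gauge, and going $\mathcal S \to \mathcal X \to \mathcal S$ returns the same $(C,\mathcal O_C(S),\nu)$ since the spectral curve of the reconstructed $x$ is $\{\det \widetilde K = 0\} = C$, the divisor $S$ is recovered as the vanishing of the $\w$-column of the adjugate (which represents the section of $L$ whose zero divisor is $S$), and $\nu$ is read off from the monodromies $\chi_\alpha$, which are preserved. The main obstacle I expect is the rigidity argument that the resolution of $\mathcal L$ is unique in the prescribed form: one must show both that $\ext^1(\e,\mathcal L)^G$ vanishes so the resolution is unobstructed, and that the graded pieces $\Hom(\f_\w,\mathcal L)^G$ are exactly one-dimensional so that the surjection $\f \to \mathcal L$ is determined up to the gauge group, thereby matching the ambiguity with the isomorphism class in $\mathcal X = H^1(\Gamma,\C^\times)$. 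This is precisely where the delicate combinatorics of the discrete Abel map and the equivariant cohomology computations of Section \ref{app:eq} (especially Examples \ref{eg:b1}--\ref{eg:b4} and the section-counting of Proposition \ref{prop:globalsec}) must be combined to control the relevant $\Hom$ and $\ext$ groups.
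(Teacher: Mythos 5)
First, note that the paper itself does not prove this statement: Theorem \ref{thm:inverse} is imported from Fock's work and used as a black box, so there is no in-paper argument to compare yours against. Your overall strategy --- inverting $\kappa$ by reconstructing the Kasteleyn resolution $0 \ra \e \xrightarrow{\widetilde K(x)} \f \ra \mathcal L \ra 0$ from the spectral data and then invoking Corollary \ref{cor:kx} --- is the right general idea and is close in spirit to the Fock/Kenyon--Okounkov inverse construction, but as written it has two genuine gaps.

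The first is the identification of $\ker(\f \twoheadrightarrow \mathcal L)$ with the specific split equivariant bundle $\e = \bigoplus_{\bw} \e_{\bw}$. This does not follow from ``a Hilbert-polynomial/determinant computation'': rank and determinant do not determine a vector bundle on $\P^2$, so you must actually prove that the kernel splits (e.g.\ via a Horrocks-type intermediate-cohomology vanishing, which amounts to surjectivity of $H^0(\f(n)) \ra H^0(\mathcal L(n))$ for all $n$ and is exactly where genericity of $S$ enters), and then further that it splits $G$-equivariantly into the prescribed summands $\e_{\bw}$. This is the real content of the theorem and it is missing from your sketch. The second gap is that the lift of $L=\mathcal O_C(S)$ to the equivariant line bundle $\widetilde L$ on $\widetilde C$ is not determined by $\nu$ alone: as the paper records in its plan of proof, $\mathcal L = \widetilde i_*\mathcal O_{\widetilde C}(\pi^{-1}(S)+\widetilde S_\infty)$ where $\widetilde S_\infty$ is a divisor of degree $\tfrac{3d^2(d-1)}{2}$ at infinity dictated by the discrete Abel map; without this precise twist the spaces $\Hom(\f_{\w},\mathcal L)^G$ will not be one-dimensional and the reconstruction of the surjection $\f \ra \mathcal L$ fails. (A minor slip: a generic line bundle of degree $g$ has $h^0=1$ and $h^1=0$, not $h^1=1$; the vanishing of $h^1$ is precisely what your dimension counts require.) With these points supplied, the mutual-inverse check you describe is the standard one.
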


\subsection{Plan of proof}
The remainder of the paper is devoted to the proof of Theorem~\ref{thm:main}.
In both integrable systems, the Hamiltonians are given by the interior coefficients of the spectral curve \(C\), and hence correspond under the spectral transform. Thus it remains to prove that the Poisson structures agree, or equivalently, that the spectral transform $\kappa$ is a Poisson map.

We formulate this in terms of the anchor maps
\[
\pi^\sharp_{\mathcal X}:T^*\mathcal X\to T\mathcal X,
\qquad
\pi^\sharp_{\widetilde{\mathcal M}}:T^*\widetilde{\mathcal M}\to T\widetilde{\mathcal M}
\]
of the two Poisson structures. Thus it is enough to prove that the following diagram commutes:
\begin{equation}\label{eq:small_square}
\begin{tikzcd}
 T^*_{[\wt]}\mathcal X \arrow[r,"\pi^\sharp_{\mathcal X}"]
 & T_{[\wt]}\mathcal X \arrow[d,"d\kappa"] \\
 T^*_{(\widetilde{\mathcal L},\bm\nu)}\widetilde{\mathcal M}
 \arrow[u,"d\kappa^*"]
 \arrow[r,"\pi^\sharp_{\widetilde{\mathcal M}}"]
 & T_{(\widetilde{\mathcal L},\bm\nu)}\widetilde{\mathcal M}.
\end{tikzcd}
\end{equation}

The proof is organized by giving concrete descriptions of the tangent and cotangent spaces on both sides and then identifying the corresponding maps. On each side there are two useful models for the cotangent space, and the key point is that these models match naturally across the spectral transform. Even more importantly, after passing to these concrete models, the comparison maps become almost tautological.

On the graph side, Section~\ref{sec:tangent_X} identifies the tangent space as
\[
T_{[\wt]}\mathcal X \cong H^1(\Gamma,\C)\cong H^1(\widehat\Gamma,\C),
\]
and introduces two models for the cotangent space:
\[
T^*_{[\wt]}\mathcal X
\cong H_1(\Gamma,\C)\cong H_1(\widehat\Gamma,\C),
\qquad
T^*_{[\wt]}\mathcal X
\cong H^1(\widehat\Gamma,\partial\widehat\Gamma;\C),
\]
the second being obtained from the first by Poincar\'e duality
\[
\PD: H^1(\widehat\Gamma,\partial\widehat\Gamma;\C) \xrightarrow[]{\cong} H_1(\widehat\Gamma,\C).
\]
Section~\ref{sec:complex_for_relcoh} computes \(H^1(\widehat\Gamma,\partial\widehat\Gamma;\C)\) explicitly, while Section~\ref{sec:pdandj} gives concrete formulas for the Poincar\'e duality isomorphism and for the natural map
\[
j^*:H^1(\widehat\Gamma,\partial\widehat\Gamma;\C)\to H^1(\widehat\Gamma,\C).
\]
With these identifications, the cluster Poisson anchor map factors as
\[
\pi_{\mathcal X}^\sharp: H_1(\widehat\Gamma,\C)
\xrightarrow{\ \PD^{-1} \ }
H^1(\widehat\Gamma,\partial\widehat\Gamma;\C)
\xrightarrow{\ j^*\ }
H^1(\widehat\Gamma,\C).
\]

On the sheaf side, Section~\ref{sec:dk} identifies the tangent space as
\[
T_{(\widetilde{\mathcal L},\bm\nu)}\widetilde{\mathcal M}
\cong
\Ext^1(\widetilde{\mathcal L},\widetilde{\mathcal L})^G.
\]
The cotangent space again has two models. The first is
\[
T^*_{(\widetilde{\mathcal L},\bm\nu)}\widetilde{\mathcal M}
\cong
\Ext^1(\widetilde{\mathcal L},\widetilde{\mathcal L}(-\widetilde D))^G,
\]
which is computed explicitly in Section~\ref{sec:ext2} and identified with the cotangent space via Serre duality in Section~\ref{sec:duality}. The second is
\[
T^*_{(\widetilde{\mathcal L},\bm\nu)}\widetilde{\mathcal M}
\cong \Ext^1\!\bigl(\widetilde{\mathcal L},
[\widetilde{\mathcal L}\to \widetilde{\mathcal L}|_{\widetilde D}]\bigr)^G,
\]
which is computed explicitly in Section~\ref{sec:ext-ld}. These two sheaf-theoretic models are related by the isomorphism induced by the short exact sequence
\[
0 \rightarrow\widetilde{\mathcal L}(-\widetilde D) \xrightarrow[]{\widetilde \theta} \widetilde{\mathcal L}\to \widetilde{\mathcal L}|_{\widetilde D} \rightarrow 0.
\]
Section~\ref{sec:anchor_sheaf} then computes the sheaf-side anchor map as the composition
\[
\pi^\sharp_{\widetilde{\mathcal M}}:
\Ext^1(\widetilde{\mathcal L},\widetilde{\mathcal L}(-\widetilde D))^{G}
\xrightarrow{\ \Psi^{-1}\ }
\Ext^1\!\bigl(\widetilde{\mathcal L},
[\widetilde{\mathcal L}\to \widetilde{\mathcal L}|_{\widetilde D}]\bigr)^G
\xrightarrow{\ \widetilde\theta\ }
\Ext^1(\widetilde{\mathcal L},\widetilde{\mathcal L})^{G}.
\]

Thus the proof of Theorem~\ref{thm:main} is reduced to showing that the diagram
\begin{equation}\label{eq:big_square}
\begin{tikzcd}
 H_1(\widehat \Gamma,\C) \arrow[r,"{\PD^{-1}}"]  & H^1(\widehat \Gamma,\partial \widehat \Gamma;\C) \arrow[r,"j^*"] \arrow[d,"\Phi"]
&  H^1(\Gamma,\C) \arrow[d,"d \kappa"] \\
\Ext^1(\widetilde{\mathcal L},\widetilde{\mathcal L}(-{\widetilde D}))^G \arrow[u,"d\kappa^*"]
 \arrow[r,"\Psi^{-1}"]
&\Ext^1\!\bigl(\widetilde{\mathcal L},[\widetilde{\mathcal L}\to \widetilde{\mathcal L}|_{\widetilde D}]\bigr)^G
 \arrow[r,"\widetilde \theta"]
& \Ext^1(\widetilde{\mathcal L},\widetilde{\mathcal L})^G
\end{tikzcd}
\end{equation}
commutes. The point is that, after all spaces are replaced by their concrete graph-theoretic models, the vertical maps in~\eqref{eq:big_square} are almost tautological: up to signs, they are obtained simply by inserting or removing the edge weight factors \(\wt\). Here \(\Phi\) is the comparison map between the two cotangent models, which is defined in Section~\ref{sec:last}. Section~\ref{sec:diff_kappa} computes the differential \(d\kappa\) of the spectral transform explicitly, and Section~\ref{sec:last} assembles all of these ingredients to prove that the anchor maps correspond.


\section{The tangent and cotangent spaces on \texorpdfstring{$\mathcal X$}{X}}
\label{sec:tangent_X}

\subsection{The tangent space}

Recall that the Zariski tangent space \(T_{[\wt]}\mathcal X\) is the set of morphisms
\[
\Spec\bigl(\C[\varepsilon]/(\varepsilon^2)\bigr)\to \mathcal X
\]
whose restriction to \(\varepsilon=0\) is \([\wt]\). Concretely, such a tangent vector is represented by an edge weight with values in
\(\bigl(\C[\varepsilon]/(\varepsilon^2)\bigr)^\times\) of the form
\[
\wt(\varepsilon)=\wt(1+\varepsilon u),
\]
modulo gauge transformations with values in
\(\bigl(\C[\varepsilon]/(\varepsilon^2)\bigr)^\times\).

\begin{proposition}\label{prop:xtan}
For any \([\wt]\in \mathcal X\), the map
\[
T_{[\wt]}\mathcal X \rightarrow H^1(\Gamma,\C),
\qquad
[\wt(1+\varepsilon u)] \mapsto [u]
\]
is an isomorphism.
\end{proposition}

\begin{proof}
Fix a representative edge weight \(\wt\) of the class \([\wt]\). Consider two tangent vectors $[\wt_1(\varepsilon)]$ and $[\wt_2(\varepsilon)]$ represented by
\[
\wt_i(\varepsilon)=\wt(1+\varepsilon u_i),
\qquad
u_i\in \C^{\E},
\]
for \(i=1,2\).

Suppose that \([\wt_1(\varepsilon)]=[\wt_2(\varepsilon)]\) determine the same tangent vector. Then there is a gauge transformation
\[
(f(\varepsilon),g(\varepsilon))
\in
\bigl(\C[\varepsilon]/(\varepsilon^2)\bigr)^{\B}
\times
\bigl(\C[\varepsilon]/(\varepsilon^2)\bigr)^{\W}
\]
such that for every edge \(\e=\b\w\),
\[
\wt_2(\varepsilon)(\e)
=
f(\varepsilon)(\b)^{-1}\,\wt_1(\varepsilon)(\e)\,g(\varepsilon)(\w).
\]
Write
\[
f(\varepsilon)=f_0(1+\varepsilon \phi),
\qquad
g(\varepsilon)=g_0(1+\varepsilon \psi),
\]
where
\[
f_0:\B\to\C^\times,\qquad g_0:\W\to\C^\times,
\qquad
\phi:\B\to\C,\qquad \psi:\W\to\C.
\]
Substituting
\[
\wt_i(\varepsilon)=\wt(1+\varepsilon u_i)
\]
into the gauge relation gives
\[
\wt(\e)\bigl(1+\varepsilon u_2(\e)\bigr)
=
f_0(\b)^{-1}(1-\varepsilon \phi(\b))\,
\wt(\e)\bigl(1+\varepsilon u_1(\e)\bigr)\,
g_0(\w)(1+\varepsilon \psi(\w)).
\]
Comparing constant terms, we obtain
\[
\wt(\e)=f_0(\b)^{-1}\wt(\e)g_0(\w),
\]
hence
\[
f_0(\b)=g_0(\w)
\qquad
\text{for every edge } \e=\b\w.
\]
After dividing the gauge transformation by this constant gauge, we may assume
\[
f_0\equiv 1,
\qquad
g_0\equiv 1.
\]
Then the above relation becomes
\[
1+\varepsilon u_2(\e)
=
(1-\varepsilon \phi(\b))(1+\varepsilon u_1(\e))(1+\varepsilon \psi(\w))
=
1+\varepsilon\bigl(u_1(\e)-\phi(\b)+\psi(\w)\bigr),
\]
so
\[
u_2(\e)=u_1(\e)-\phi(\b)+\psi(\w).
\]
Equivalently,
\[
u_2-u_1=\delta(\phi,\psi).
\]
Therefore two first-order deformations determine the same tangent vector if and only if the corresponding \(1\)-cochains differ by a \(1\)-coboundary. This shows that the map in the statement is well defined and injective.

Conversely, given any \(u\in \C^{\E}\), the formula
\[
\wt(\varepsilon):=\wt(1+\varepsilon u)
\]
defines a first-order deformation of \(\wt\), hence a tangent vector at \([\wt]\). If \(u'=u+\delta(\phi,\psi)\), then
\[
\wt(1+\varepsilon u')
=
(1+\varepsilon \phi)^{-1}\,\wt(1+\varepsilon u)\,(1+\varepsilon \psi),
\]
so the deformations defined by \(u\) and \(u'\) are gauge equivalent. Thus the tangent vector depends only on the class of \(u\) in
\[
\C^{\E}/\im(\delta)=H^1(\Gamma,\C).
\]
This proves surjectivity.
\end{proof}

\subsection{Two models for the cotangent space}

By Proposition~\ref{prop:xtan} and the identification
\[
H_1(\Gamma,\C) \cong H^1(\Gamma,\C)^*,
\]
we obtain the first model for the cotangent space on \(\mathcal X\):
\[
T^*_{[\wt]}\mathcal X \cong H_1(\Gamma,\C)\cong H_1(\widehat\Gamma,\C).
\]
The second model is obtained via the Poincar\'e duality isomorphism
\[
\PD: H^1(\widehat\Gamma,\partial\widehat\Gamma;\C) \rightarrow H_1(\widehat\Gamma,\C).
\]
We now compute \(H^1(\widehat\Gamma,\partial\widehat\Gamma;\C)\) concretely.

\subsection{A complex computing \texorpdfstring{$H^1(\widehat\Gamma,\partial\widehat\Gamma;\C)$}{the relative cohomology}}
\label{sec:complex_for_relcoh}

For \(\alpha\in\zz\) and \(\b\in\B\cap\alpha\), let
\[
\w_\alpha^{-}(\b),\ \w_\alpha^{+}(\b)\in \W\cap\alpha
\]
denote the white vertices immediately preceding and following \(\b\) along \(\alpha\). Thus
\[
\w_\alpha^{-}(\b)\;-\;\b\;-\;\w_\alpha^{+}(\b)
\]
is the \emph{wedge of \(\alpha\) at \(\b\)} (see~Figure~\ref{fig:wedge}).

\begin{figure}
\begin{center}
\begin{tikzpicture}[scale=1]
\coordinate[bvert,label=above:$\b$] (b) at (2,1.732-0.5773);
\coordinate[wvert,label=below:$\w_\alpha^{-}(\b)$] (w1) at (1,0.5773);
\coordinate[wvert,label=below:$\w_\alpha^{+}(\b)$] (w2) at (3,0.5773);
\draw[-] (b) edge (w2) edge (w1);
\end{tikzpicture}
\end{center}
\caption{The wedge of $\alpha$ at the black vertex \(\b\).}
\la{fig:wedge}
\end{figure}

\begin{figure}
\begin{center}
\begin{tikzpicture}[scale=1.7]
\begin{scope}
    \pgfmathsetmacro{\L}{2/sqrt(3)} 

    \coordinate[bvert,label=below:$\b$] (b) at (2,1.732-0.5773);
    \coordinate[wvert,label=93:$\w_0$] (w1) at (1,0.5773);
    \coordinate[wvert,label=93:$\w_1$] (w2) at (3,0.5773);
    \coordinate[wvert,label=93:$\w_2$] (w3) at (2,1.732+0.5773);

    \coordinate[bvert,label=below:$\b_{01}$] (b1) at ($(w1)+(150:\L)$);
    \coordinate[bvert,label=below:$\b_{02}$] (b2) at ($(w1)+(270:\L)$);

    \coordinate[bvert,label=below:$\b_{10}$] (b3) at ($(w2)+(30:\L)$);
    \coordinate[bvert,label=below:$\b_{12}$] (b4) at ($(w2)+(270:\L)$);

    \coordinate[bvert,label=below:$\b_{21}$] (b5) at ($(w3)+(150:\L)$);
    \coordinate[bvert,label=below:$\b_{20}$] (b6) at ($(w3)+(30:\L)$);

    \draw[-]
      (b) edge (w1) edge (w2) edge (w3)
      (w1) edge (b1) edge (b2)
      (w2) edge (b3) edge (b4)
      (w3) edge (b5) edge (b6);

    \coordinate (bt1) at ($(b)+(30:0.5)$);
    \coordinate (bt2) at ($(b)+(150:0.5)$);
    \coordinate (bt3) at ($(b)+(270:0.5)$);

    \coordinate (w1t1) at ($(w1)+(90:0.5)$);
    \coordinate (w1t2) at ($(w1)+(210:0.5)$);
    \coordinate (w1t3) at ($(w1)+(330:0.5)$);
    \draw (w1) edge (w1t1) edge (w1t2) edge (w1t3) ;

    \coordinate (w2t1) at ($(w2)+(90:0.5)$);
    \coordinate (w2t2) at ($(w2)+(210:0.5)$);
    \coordinate (w2t3) at ($(w2)+(330:0.5)$);
    \draw (w2) edge (w2t1) edge (w2t2) edge (w2t3) ;

    \coordinate (w3t1) at ($(w3)+(90:0.5)$);
    \coordinate (w3t2) at ($(w3)+(210:0.5)$);
    \coordinate (w3t3) at ($(w3)+(330:0.5)$);
    \draw (w3) edge (w3t1) edge (w3t2) edge (w3t3) ;

    \coordinate (b1t1) at ($(b1)+(30:0.5)$);
    \coordinate (b1t2) at ($(b1)+(150:0.5)$);
    \coordinate (b1t3) at ($(b1)+(270:0.5)$);

    \coordinate (b2t1) at ($(b2)+(30:0.5)$);
    \coordinate (b2t2) at ($(b2)+(150:0.5)$);
    \coordinate (b2t3) at ($(b2)+(270:0.5)$);

    \coordinate (b3t1) at ($(b3)+(30:0.5)$);
    \coordinate (b3t2) at ($(b3)+(150:0.5)$);
    \coordinate (b3t3) at ($(b3)+(270:0.5)$);

    \coordinate (b4t1) at ($(b4)+(30:0.5)$);
    \coordinate (b4t2) at ($(b4)+(150:0.5)$);
    \coordinate (b4t3) at ($(b4)+(270:0.5)$);

    \coordinate (b5t1) at ($(b5)+(30:0.5)$);
    \coordinate (b5t2) at ($(b5)+(150:0.5)$);
    \coordinate (b5t3) at ($(b5)+(270:0.5)$);

    \coordinate (b6t1) at ($(b6)+(30:0.5)$);
    \coordinate (b6t2) at ($(b6)+(150:0.5)$);
    \coordinate (b6t3) at ($(b6)+(270:0.5)$);

    \draw (w1t1) -- (bt2)
          (w1t3) -- (bt3);

    \draw (w2t1) -- (bt1)
          (w2t2) -- (bt3);

    \draw (w3t2) -- (bt2)
          (w3t3) -- (bt1);

    \draw (w1t2) -- (b1t3)
          (w1t1) -- (b1t1);

    \draw (w1t2) -- (b2t2)
          (w1t3) -- (b2t1);

    \draw (w2t1) -- (b3t2)
          (w2t3) -- (b3t3);

    \draw (w2t2) -- (b4t2)
          (w2t3) -- (b4t1);

    \draw (w3t2) -- (b5t3)
          (w3t1) -- (b5t1);

    \draw (w3t1) -- (b6t2)
          (w3t3) -- (b6t3);

\end{scope}
\end{tikzpicture}
\end{center}

\caption{The cellular decomposition of \(\widehat \Gamma\) used to compute \(H^1(\widehat\Gamma,\partial\widehat\Gamma;\C)\).}
\la{fig:cell_decomp}
\end{figure}

\begin{proposition}\label{prop:relcoh}
There is an identification
\begin{equation}\label{eq:relcoh-concrete}
\begin{aligned}
H^1(\widehat\Gamma,\partial\widehat\Gamma;\C)
\cong
H^1\Bigl(\Bigl[
&
\underset{0}{\C^{\B}\oplus \C^{\W}}
\xrightarrow{\,d^{0}\,}
\underset{1}{\C^{\E}\oplus \displaystyle\bigoplus_{\alpha\in\zz}\Hom(\C^{\W\cap\alpha},\C)}\\
&\xrightarrow{\,d^{1}\,}
\underset{2}{\displaystyle\bigoplus_{\alpha\in\zz}\Hom(\C^{\B\cap\alpha},\C)}
\Bigr]\Bigr).
\end{aligned}
\end{equation}
The differentials are given by
\begin{align*}
d^{0}(f,g)
&=
\left(
\bigl(g(\w)-f(\b)\bigr)_{\e=\b\w\in\E},
\ (-g|_{\W\cap\alpha})_{\alpha\in\zz}
\right),\\
d^{1}(u,t)_\alpha
&=
\left(
u\bigl(\b \w_\alpha^{-}(\b)\bigr)
-
u\bigl(\b \w_\alpha^{+}(\b)\bigr)
+
t_\alpha\bigl(\w_\alpha^{-}(\b)\bigr)
-
t_\alpha\bigl(\w_\alpha^{+}(\b)\bigr)
\right)_{\b\in \B\cap\alpha},
\end{align*}
for each \(\alpha\in\zz\), where \(t=(t_\alpha)_{\alpha\in\zz}\).
\end{proposition}

\begin{proof}
Consider the cellular decomposition of \(\widehat\Gamma\) shown in Figure~\ref{fig:cell_decomp}. The \(0\)-cells are the vertices \(\B\sqcup \W\) of \(\Gamma\) together with the boundary vertex
\[
v_{\w,\alpha}:=\Delta_\w\cap \alpha
\]
for each \(\alpha\in\zz\) and \(\w\in \W\cap\alpha\). The \(1\)-cells are the edges \(\e=\b\w\) of \(\Gamma\), oriented from \(\b\) to \(\w\), together with, for each \(\alpha\in\zz\) and \(\w\in \W\cap\alpha\), the line segment \(r_{\w,\alpha}\) oriented from \(\w\) to \(v_{\w,\alpha}\), and the boundary segments of \(\partial\widehat\Gamma\) joining consecutive vertices \(v_{\w,\alpha}\). The \(2\)-cells are the topological disks cut out by these \(0\)- and \(1\)-cells.

For each \(\alpha\in\zz\) and \(\b\in \B\cap\alpha\), let \(D_{\b,\alpha}\) denote the unique \(2\)-cell adjacent to the boundary arc of \(\partial\widehat\Gamma\) indexed by \(\b\), oriented so that its relative boundary is
\begin{equation}\label{eq:rel_boundary_cell_prop}
\partial D_{\b,\alpha}
=
(\b\,\w_\alpha^{-}(\b))
-
(\b\,\w_\alpha^{+}(\b))
+
r_{\w_\alpha^{-}(\b),\alpha}
-
r_{\w_\alpha^{+}(\b),\alpha}.
\end{equation}

In the relative cellular chain complex, the boundary vertices \(v_{\w,\alpha}\) and the boundary segments are zero. Hence the degree-\(0\) cochains are indexed by \(\B\sqcup\W\), the degree-\(1\) cochains by the edges of \(\Gamma\) together with the segments \(r_{\w,\alpha}\), and the degree-\(2\) cochains by the disks \(D_{\b,\alpha}\). This identifies the relative cellular cochain complex \(C^\bullet(\widehat\Gamma,\partial\widehat\Gamma;\C)\) with the complex in~\eqref{eq:relcoh-concrete}.

The formula for \(d^0\) follows from
\[
\partial(\b\w)=\w-\b,
\qquad
\partial r_{\w,\alpha}=v_{\w,\alpha}-\w = -\w.
\]
Similarly, the formula for \(d^1\) is obtained by dualizing~\eqref{eq:rel_boundary_cell_prop}.
\end{proof}

\section{The anchor map on the graph side}\label{sec:anchor_cps}

\subsection{The cluster Poisson anchor map}

Recall from Proposition~\ref{prop:xtan} that
\[
T_{[\wt]}\mathcal X \cong H^1(\Gamma,\C)\cong H^1(\widehat\Gamma,\C),
\qquad
T^*_{[\wt]}\mathcal X \cong H_1(\Gamma,\C)\cong H_1(\widehat\Gamma,\C).
\]
For \(L \in H_1(\Gamma,\Z)\), the differential \(d\mon_L\in T^*_{[\wt]}\mathcal X\) is determined by the first-order expansion
\[
\mon_L([\wt(1+\varepsilon u)])
=
\mon_L([\wt])\bigl(1+\varepsilon\,\mon_L([u])\bigr).
\]
Hence under the identification in Proposition~\ref{prop:xtan},
\be \label{eq:diff}
d \mon_L \in T^*_{[\wt]}\mathcal X \mapsto \mon_L([\wt]) \cdot  L \in H_1(\graph,\C).
\ee
Thus, the anchor map of the cluster Poisson structure is given by
\[
\pi^\sharp_{\mathcal X}:H_1(\graph,\C) \rightarrow H^1(\graph,\C), \qquad L_1 \mapsto (L_2 \mapsto \langle L_1,L_2 \rangle ).
\]

\subsection{Poincar\'e duality in coordinates}

Next, we compute the Poincar\'e duality isomorphism.

Each edge \(\e=\b\w\in\E\) is contained in exactly two zig-zag paths. We denote them by \(\alpha_\e^+\) and \(\alpha_\e^-\), where \(\alpha_\e^+\) traverses \(\e\) from \(\w\) to \(\b\) and \(\alpha_\e^-\) traverses \(\e\) from \(\b\) to \(\w\).

\begin{proposition}\label{prop:PDiso}
With the identification in Proposition~\ref{prop:relcoh}, the isomorphism
\[
\PD:H^1(\widehat\Gamma,\partial\widehat\Gamma;\C)\xrightarrow{\cong} H_1(\widehat\Gamma,\C) \cong \ker(\C^\E \xrightarrow[]{\partial} \C^{\B} \oplus \C^\W)
\]
is induced by the map
\[
\C^\E\oplus \bigoplus_{\alpha\in\zz}\Hom(\C^{\W\cap\alpha},\C)
\rightarrow \C^\E,\qquad
(u,t)
\mapsto
\bigl(t_{\alpha_\e^+}(\w)-t_{\alpha_\e^-}(\w)\bigr)_{\e=\b\w\in \E}.
\]
\end{proposition}

\begin{proof}
For each edge \(\e=\b\w\in\E\), choose an oriented arc
\[
\e^\vee\subset \square_\e
\]
from the boundary component \(\alpha_\e^{-}\) to the boundary component \(\alpha_\e^{+}\). By our choice of orientation, the edge \(\e\) and the arc \(\e^\vee\) have local intersection number \(1\). The relative homology classes \([\e^\vee]\) generate \(H_1(\widehat\Gamma,\partial\widehat\Gamma;\C)\), and the intersection pairing
\[
\wedge : H_1(\widehat\Gamma,\C)\otimes H_1(\widehat\Gamma,\partial\widehat\Gamma;\C)\to \C
\]
is induced by
\[
\e_1\otimes \e_2^\vee \mapsto \delta_{\e_1,\e_2}.
\]
Moreover, \(\e^\vee\) is homologous in \(H_1(\widehat\Gamma,\partial\widehat\Gamma;\C)\) to
\[
r_{\w,\alpha_\e^+}-r_{\w,\alpha_\e^-}.
\]
The intersection pairing induces the isomorphism \(\PD\). Since \([u,t]\) evaluates on \([\e^\vee]=[r_{\w,\alpha_\e^+}-r_{\w,\alpha_\e^-}]\) to give \(t_{\alpha_\e^+}(\w)-t_{\alpha_\e^-}(\w)\), the isomorphism is induced by the map
\[
(u,t) \mapsto \sum_{\e \in \E} (t_{\alpha_\e^+}(\w)-t_{\alpha_\e^-}(\w)) \e
\]
as claimed.
\end{proof}

\subsection{Factorization through relative cohomology}

We now rephrase the anchor map in terms of Poincar\'e duality on \(\widehat \Gamma\). Let
\[
\PD:
H^1(\widehat\Gamma,\partial\widehat\Gamma;\C) \xrightarrow{\cong}
H_1(\widehat\Gamma,\C)
\]
be the Poincar\'e duality isomorphism, and let
\[
j^*:H^1(\widehat\Gamma,\partial\widehat\Gamma;\C)\to H^1(\widehat\Gamma,\C)
\]
be the natural map obtained by forgetting the relative structure.

\begin{proposition}
Under the identification in Proposition~\ref{prop:xtan}, the anchor map \(\pi^\sharp_{\mathcal X}\) is given by the composition
\[
H_1(\widehat\Gamma,\C) \xrightarrow[]{\PD^{-1}} H^1(\widehat\Gamma,\partial\widehat\Gamma;\C)\xrightarrow[]{j^*} H^1(\widehat\Gamma,\C).
\]
\end{proposition}

\begin{proof}
For \(L_1\in H_1(\widehat\Gamma,\C)\), the class
\[
\PD^{-1}(L_1)\in H^1(\widehat\Gamma,\partial\widehat\Gamma;\C)
\]
is the relative cohomology class whose evaluation on a relative \(1\)-cycle \(L_2\) is the intersection number
\[
L_2\mapsto \langle L_1,L_2\rangle.
\]
Under the identification
\[
H^1(\widehat\Gamma,\C)\cong H_1(\widehat \Gamma,\C)^*,
\]
the map
\[
j^*:H^1(\widehat\Gamma,\partial\widehat\Gamma;\C)\to H^1(\widehat\Gamma,\C)
\]
is dual to the natural map
\[
j:H_1(\widehat\Gamma,\C)\to H_1(\widehat\Gamma,\partial\widehat\Gamma;\C)
\]
which views an absolute cycle as a relative cycle. Therefore, for every
\(L_2\in H_1(\widehat\Gamma,\C)\),
\[
\bigl(j^*\PD^{-1}(L_1)\bigr)(L_2)
=
\PD^{-1}(L_1)\bigl(j(L_2)\bigr)
=
\langle L_1,L_2\rangle.
\]
Hence \(j^*\PD^{-1}(L_1)\) is the functional
\[
L_2\mapsto \langle L_1,L_2\rangle,
\]
which is exactly the anchor map.
\end{proof}

\subsection{The map \texorpdfstring{$j^*$}{j*} in coordinates}
\label{sec:pdandj}

\begin{proposition}\label{prop:jstar}
With the identification in Proposition~\ref{prop:relcoh}, the map
\[
j^*:H^1(\widehat\Gamma,\partial\widehat\Gamma;\C)\to H^1(\widehat\Gamma,\C)\cong H^1(\Gamma,\C)
\]
is induced by the map
\[
\C^\E\oplus \bigoplus_{\alpha\in\zz}\Hom(\C^{\W\cap\alpha},\C)
\rightarrow \C^\E,
\qquad
(u,t)\mapsto u.
\]
\end{proposition}

\begin{proof}
Let \([u,t]\in H^1(\widehat\Gamma,\partial\widehat\Gamma;\C)\). Under the identification
\[
H^1(\Gamma,\C)\cong H_1(\Gamma,\C)^*,
\]
it suffices to compute the functional corresponding to \(j^*([u,t])\).

Let \(L\in H_1(\Gamma,\C)\). Viewing \(L\) as an absolute \(1\)-cycle in \(\widehat\Gamma\), its support is contained in the graph \(\Gamma\), so it involves only the edges \(\e\in\E\). Hence
\[
j^*([u,t])(L)
=
[u,t](j(L))
=
\sum_{\e\in\E} u(\e)\,L(\e).
\]
Therefore \(j^*([u,t])\) is exactly the cohomology class represented by \(u\in \C^\E\). This proves the claim.
\end{proof}

\section{The tangent and cotangent spaces on the sheaf side}

\subsection{The tangent space}\la{sec:dk}

The goal of this subsection is to prove the following explicit description of $\Ext^\bullet_{\widetilde\P^2}(\widetilde{\mathcal L},\widetilde{\mathcal L})^{G}$.

\begin{proposition}\label{prop:repext}
There is an identification
\[
\Ext^\bullet_{\widetilde\P^2}(\widetilde{\mathcal L},\widetilde{\mathcal L})^{G}
\cong
H^\bullet\Bigl(\Bigl[
\begin{array}{c c c}
\C^{\B}\oplus \C^{\W} & \xrightarrow{\ d\ } & \C^{\E}\\[-.4ex]
{\scriptstyle 0} && {\scriptstyle 1}
\end{array}
\Bigr]\Bigr),
\]
where
\[
d(f,g)(\e)=\wt(\e)\bigl(f(\b)-g(\w)\bigr)
\qquad\text{for }\e=\b\w\in \E.
\]
In particular,
\[
\Ext^1_{\widetilde\P^2}(\widetilde{\mathcal L},\widetilde{\mathcal L})^{G}
\cong
\C^{\E}/\operatorname{im}(d).
\]
\end{proposition}

The rest of this subsection is devoted to the proof of Proposition~\ref{prop:repext}. Let $K^\bullet$ denote the two-term complex
\[
K^\bullet := \left[
\begin{array}{c c c}
\cale & \xrightarrow{\ \widetilde \K\ } & \calf\\[-.4ex]
{\scriptstyle -1} & & {\scriptstyle 0}
\end{array}
\right].
\] 
Since $K^\bullet$ is a resolution of $\widetilde{\mathcal L}$ by locally free sheaves, $\Ext$ may be computed as the hypercohomology of the \v{C}ech-$\mathcal Hom$ double complex
\[
\Ext^i(\widetilde{\mathcal L},\widetilde{\mathcal L})^G
\cong
\mathbb H^i\!\bigl(\widetilde \P^2,\mathcal H om^\bullet(K^\bullet,K^\bullet)\bigr)^G.
\]
The computation of this hypercohomology is an especially simple application of a spectral sequence. The relevant background is reviewed in Appendix~\ref{app:ss}.

Recall (cf.~\cite[\S10.1.1]{HL}) that for bounded complexes $E^\bullet,F^\bullet$ of locally free sheaves,
\[
\mathcal Hom^n(E^\bullet,F^\bullet)
=\bigoplus_{i\in\Z}\mathcal Hom(E^i,F^{i+n}),
\]
with differential $(df)^i=d_F^{\,i+n}\circ f^i-(-1)^n f^{i+1}\circ d_E^{\,i}$.
Applying this to $K^\bullet$ gives 
\[ \mathcal Hom^\bullet(K^\bullet,K^\bullet) = \left[\, \begin{array}{c c c c c} \mathcal Hom(\calf,\cale) & \xrightarrow{\ d^{-1}\ } & \mathcal Hom(\cale,\cale)\oplus \mathcal Hom(\calf,\calf) & \xrightarrow{\ d^{0}\ } & \mathcal Hom(\cale,\calf)\\[-.4ex] {\scriptstyle -1} && {\scriptstyle 0} && {\scriptstyle 1} \end{array} \,\right], \]
where
\[
d^{-1}(f)=\bigl(f\circ \widetilde\K(x),\ \widetilde\K(x)\circ f\bigr),\qquad
d^{0}(f,g)=\widetilde\K(x)\circ f - g \circ \widetilde\K(x).
\]
As in Appendix~\ref{appendix:G_sheaves}, we compute \v Cech cohomology using the standard affine cover \(\widetilde{\mathcal U}=(\widetilde U_i)_{i=0,1,2}\) of \(\widetilde{\P}^2\). Then the double complex computing $\Ext$ is
\[
\check C^{\,p}\!\bigl(\widetilde{\mathcal U},\,\mathcal Hom^{q}(K^\bullet,K^\bullet)\bigr)^{G},
\]
with total differential
\[
D = \check d + (-1)^p d,
\]
where $\check d$ is the \v{C}ech differential and $d$ is the differential of the complex $\mathcal Hom^\bullet(K^\bullet,K^\bullet)$.

Since only the rows $q=-1,0,1$ are nonzero, the double complex has the form
\begin{equation}\label{ss diagram}
\begin{tikzcd}[scale cd=0.9]
\check C^0(\mathcal H om(\cale,\calf))^G \arrow[r]
& \check C^1(\mathcal H om(\cale,\calf))^G \arrow[r]
& \check C^2(\mathcal H om(\cale,\calf))^G \\
\check C^0(\mathcal H om(\cale,\cale)\oplus\mathcal H om(\calf,\calf))^G \arrow[u] \arrow[r]
& \check C^1(\mathcal H om(\cale,\cale)\oplus\mathcal H om(\calf,\calf))^G \arrow[u] \arrow[r]
& \check C^2(\mathcal H om(\cale,\cale)\oplus\mathcal H om(\calf,\calf))^G \arrow[u] \\
\check C^0(\mathcal H om(\calf,\cale))^G \arrow[u] \arrow[r]
& \check C^1(\mathcal H om(\calf,\cale))^G \arrow[u] \arrow[r]
& \check C^2(\mathcal H om(\calf,\cale))^G \arrow[u].
\end{tikzcd}
\end{equation}

\begin{lemma}\label{lem:Ext1}
There is an identification
\begin{equation}\label{eq:complexext-tilde}
\Ext^\bullet_{\widetilde\P^2}(\widetilde{\mathcal L},\widetilde{\mathcal L})^{G}
\cong
H^\bullet\Bigl(\Bigl[
\begin{array}{c c c}
H^0(\mathcal H om(\cale,\cale)\oplus \mathcal H om(\calf,\calf))^G
& \xrightarrow{\ d^0\ } &
H^0(\mathcal H om(\cale,\calf))^G\\[-.4ex]
{\scriptstyle 0} & & {\scriptstyle 1}
\end{array}
\Bigr]\Bigr).
\end{equation}
\end{lemma}

\begin{proof}
Taking cohomology first in the \v{C}ech direction, 
\begin{equation}\label{e1h}
\begin{tikzcd}[scale cd=0.9]
H^0(\mathcal H om(\cale,\calf))^G \arrow[r]
& H^1(\mathcal H om(\cale,\calf))^G \arrow[r]
& H^2(\mathcal H om(\cale,\calf))^G \\
H^0(\mathcal H om(\cale,\cale)\oplus\mathcal H om(\calf,\calf))^G \arrow[u] \arrow[r]
& H^1(\mathcal H om(\cale,\cale)\oplus\mathcal H om(\calf,\calf))^G \arrow[u] \arrow[r]
& H^2(\mathcal H om(\cale,\cale)\oplus\mathcal H om(\calf,\calf))^G \arrow[u] \\
H^0(\mathcal H om(\calf,\cale))^G \arrow[u] \arrow[r]
& H^1(\mathcal H om(\calf,\cale))^G \arrow[u] \arrow[r]
& H^2(\mathcal H om(\calf,\cale))^G \arrow[u].
\end{tikzcd}
\end{equation}
Since line bundles on $\widetilde\P^2$ have no $H^1$, the $p=1$ column of~\eqref{e1h} vanishes. Moreover, after forgetting the $G$-equivariant structure we have
\[
\cale_{\b}\cong \mathcal O_{\widetilde\P^2}(-1)
\qquad\text{and}\qquad
\calf_{\w}\cong \mathcal O_{\widetilde\P^2},
\]
so $\mathcal H om(\calf,\cale)$ is a sum of copies of $\mathcal O_{\widetilde\P^2}(-1)$, while
$\mathcal H om(\cale,\calf)$, $\mathcal H om(\cale,\cale)$, and $\mathcal H om(\calf,\calf)$ are sums of copies of
$\mathcal O_{\widetilde\P^2}(1)$ and $\mathcal O_{\widetilde\P^2}$. Hence
\begin{align*}
H^0(\widetilde\P^2,\mathcal H om(\calf,\cale))^G&=0
\\
H^2(\widetilde\P^2,\mathcal H om(\calf,\cale))^G&=
H^2(\widetilde\P^2,\mathcal H om(\cale,\cale)\oplus\mathcal H om(\calf,\calf))^G
=
H^2(\widetilde\P^2,\mathcal H om(\cale,\calf))^G=0.
\end{align*}
Therefore the only nonzero terms are
\[
H^0(\mathcal H om(\cale,\cale)\oplus \mathcal H om(\calf,\calf))^G,
\qquad
H^0(\mathcal H om(\cale,\calf))^G.
\]
By Proposition~\ref{prop:ttt}, the hypercohomology is computed by the two-term complex in~\eqref{eq:complexext-tilde}.
\end{proof}

\begin{proof}[Proof of Proposition~\ref{prop:repext}]
We now identify the terms appearing in~\eqref{eq:complexext-tilde} with vector spaces associated to the graph. By~\eqref{eq:h0G}, we have
\begin{align*}
\Hom(\cale_{\b},\calf_{\w})^G &\cong
\begin{cases}
\C\{ \widetilde x(\e)\} & \text{if $\e= \b\w\in \E$},\\
0 & \text{otherwise},
\end{cases}\\
\Hom(\cale_{\b},\cale_{\b'})^G &\cong
\begin{cases}
\C \{1\} & \text{if $\b=\b'$},\\
0 & \text{otherwise},
\end{cases}
\\
\Hom(\calf_{\w},\calf_{\w'})^G &\cong
\begin{cases}
\C\{1\} & \text{if $\w=\w'$},\\
0 & \text{otherwise}.
\end{cases}
\end{align*}
Taking direct sums gives canonical identifications
\begin{align}
H^0(\mathcal H om(\cale,\calf))^G
&=
\Hom(\cale,\calf)^G \cong \bigoplus_{\e \in \E} \C\{ \widetilde x(\e)\} \cong  \C^{\E},\nonumber \\
H^0(\mathcal H om(\cale,\cale)\oplus \mathcal H om(\calf,\calf))^G
&=
\Hom(\cale,\cale)^G\oplus \Hom(\calf,\calf)^G \cong \C^{\B}\{1\}\oplus \C^{\W}\{1\} \cong \C^\B \oplus \C^\W. \label{eq:concrete_vs}
\end{align}
Under the identifications~\eqref{eq:concrete_vs}, the complex~\eqref{eq:complexext-tilde} becomes
\[
\Bigl[
\begin{array}{c c c}
\C^{\B}\oplus \C^{\W} & \xrightarrow{\ d \ } & \C^{\E}\\[-.4ex]
{\scriptstyle 0} && {\scriptstyle 1}
\end{array}
\Bigr].
\]
Moreover, if $\e=\b\w\in \E$, then the $\e$-component of
\[
d^0(f,g)=\widetilde\K\circ f-g\circ \widetilde\K
\]
is
\[
\wt(\e)\bigl(f(\b)-g(\w)\bigr).
\]
This proves Proposition~\ref{prop:repext}.
\end{proof}

\subsection{Two models for the cotangent space}

We now turn to the cotangent space on the sheaf side. As on the graph side, there are two useful models.

The first model is
\[
T^*_{(\widetilde{\mathcal L},\bm\nu)}\widetilde{\mathcal M}
\cong
\Ext^1(\widetilde{\mathcal L},\widetilde{\mathcal L}(-\widetilde D))^G.
\]
This identification comes from Serre duality, which identifies
\[
\Ext^1(\widetilde{\mathcal L},\widetilde{\mathcal L}(-\widetilde D))^G
\cong
\Ext^1(\widetilde{\mathcal L},\widetilde{\mathcal L})^{G,*}
=
T^*_{(\widetilde{\mathcal L},\bm\nu)}\widetilde{\mathcal M}.
\]
The second model is obtained from the restriction map
\[
r:\widetilde{\mathcal L}\to \widetilde{\mathcal L}|_{\widetilde D},
\]
which induces an isomorphism
\[
\Psi:
\Ext^1\!\bigl(\widetilde{\mathcal L},
[\widetilde{\mathcal L}\to \widetilde{\mathcal L}|_{\widetilde D}]\bigr)^G
\xrightarrow{\ \cong\ }
\Ext^1(\widetilde{\mathcal L},\widetilde{\mathcal L}(-\widetilde D))^G.
\]
Thus the cotangent space admits the two descriptions
\[
T^*_{(\widetilde{\mathcal L},\bm\nu)}\widetilde{\mathcal M}
\cong
\Ext^1(\widetilde{\mathcal L},\widetilde{\mathcal L}(-\widetilde D))^G
\cong
\Ext^1\!\bigl(\widetilde{\mathcal L},
[\widetilde{\mathcal L}\to \widetilde{\mathcal L}|_{\widetilde D}]\bigr)^G.
\]
The next three subsections compute these models and the Serre duality pairing concretely.

\subsection{The first model for the cotangent space}\label{sec:ext2}

The goal of this subsection is to prove the following explicit description of
\(
\Ext^\bullet(\widetilde{\mathcal L},\widetilde{\mathcal L}(-\widetilde D))^G.
\)

\begin{proposition}\label{prop:repext2}
There is an identification
\begin{equation}\label{eq:chain_ext-d}
\Ext^\bullet(\widetilde{\mathcal L},\widetilde{\mathcal L}(-\widetilde D))^G
\cong
H^\bullet\Bigl(\Bigl[
\begin{array}{c c c}
\C^{\E} & \xrightarrow{\ d^1\ } & \C^{\B}\oplus \C^{\W}\\[-.4ex]
{\scriptstyle 1} && {\scriptstyle 2}
\end{array}
\Bigr]\Bigr),
\end{equation}
where
\[
d^1(v)
=
\left(
\left(\sum_{\w:\b\w\in\E}\wt(\b\w)v(\b\w)\right)_{\b\in\B},
\left(\sum_{\b:\b\w\in\E}\wt(\b\w)v(\b\w)\right)_{\w\in\W}
\right).
\]
In particular,
\[
\Ext^1(\widetilde{\mathcal L},\widetilde{\mathcal L}(-\widetilde D))^G
\cong
\ker(d^1), \qquad
\Ext^2(\widetilde{\mathcal L},\widetilde{\mathcal L}(-\widetilde D))^G
\cong
(\C^{\B}\oplus \C^{\W})/\operatorname{im}(d^1).
\]
\end{proposition}

The rest of this subsection is devoted to the proof of Proposition~\ref{prop:repext2}. We compute $\Ext$ as the hypercohomology
\[
\Ext^i(\widetilde{\mathcal L},\widetilde{\mathcal L}(-\widetilde D))^G
\cong
\mathbb H^i\!\bigl(\widetilde\P^2,\mathcal H om^\bullet(K^\bullet,K^\bullet(-\widetilde D))\bigr)^G,
\]
where $\mathcal H om^\bullet(K^\bullet,K^\bullet(-\widetilde D))$ is
\[
\left[\,
\begin{array}{c c c c c}
\mathcal H om(\calf,\cale(-\widetilde D))
&\xrightarrow{\ d^{-1}\ }&
\mathcal H om(\cale,\cale(-\widetilde D))\oplus \mathcal H om(\calf,\calf(-\widetilde D))
&\xrightarrow{\ d^0\ }&
\mathcal H om(\cale,\calf(-\widetilde D))\\[-.4ex]
{\scriptstyle -1} && {\scriptstyle 0} && {\scriptstyle 1}
\end{array}
\,\right],
\]
with
\[
d^{-1}(u)=\bigl(u\circ \widetilde\K,\ \widetilde\K\circ u\bigr),
\qquad
d^0(a,b)=\widetilde\K\circ a-b\circ \widetilde\K.
\]

The double complex whose hypercohomology we need to compute is
\[
\check C^{\,p}\!\bigl(\widetilde{\mathcal U},\mathcal H om^{q}(K^\bullet,K^\bullet(-\widetilde D))\bigr)^G.
\]
The nonzero portion is
\begin{equation}\label{eq:e0-trace}
\begin{tikzcd}[scale cd=0.7]
\check C^0(\mathcal H om(\cale,\calf(-\widetilde D)))^G \arrow[r] &
\check C^1(\mathcal H om(\cale,\calf(-\widetilde D)))^G \arrow[r] &
\check C^2(\mathcal H om(\cale,\calf(-\widetilde D)))^G \\
\check C^0(\mathcal H om(\cale,\cale(-\widetilde D))\oplus \mathcal H om(\calf,\calf(-\widetilde D)))^G
\arrow[u] \arrow[r] &
\check C^1(\mathcal H om(\cale,\cale(-\widetilde D))\oplus \mathcal H om(\calf,\calf(-\widetilde D)))^G
\arrow[u] \arrow[r] &
\check C^2(\mathcal H om(\cale,\cale(-\widetilde D))\oplus \mathcal H om(\calf,\calf(-\widetilde D)))^G
\arrow[u]\\
\check C^0(\mathcal H om(\calf,\cale(-\widetilde D)))^G \arrow[r] \arrow[u] &
\check C^1(\mathcal H om(\calf,\cale(-\widetilde D)))^G \arrow[r] \arrow[u] &
\check C^2(\mathcal H om(\calf,\cale(-\widetilde D)))^G \arrow[u]
\end{tikzcd}.
\end{equation}

\begin{lemma}\label{lem:Ext2}
There is an identification
\begin{equation}\label{eq:Ext2-concrete}
\begin{aligned}
\Ext^\bullet(\widetilde{\mathcal L},\widetilde{\mathcal L}(-\widetilde D))^G
= H^\bullet\Bigl(\Bigl[ &
\underset{1}{H^2(\mathcal H om(\calf,\cale(-\widetilde D)))^G} \\
&\xrightarrow{\,d^{-1}\,}
\underset{2}{H^2(\mathcal H om(\cale,\cale(-\widetilde D))\oplus \mathcal H om(\calf,\calf(-\widetilde D)))^G}
\Bigr]\Bigr).
\end{aligned}
\end{equation}
\end{lemma}

\begin{proof}
Taking cohomology first in the \v Cech direction, all groups vanish except the two \(H^2\)-terms shown above. Hence we obtain
\[
\begin{tikzcd}
0 \arrow[r] & 0 \arrow[r] & 0 \\
0 \arrow[u] \arrow[r] & 0 \arrow[u] \arrow[r] &
H^2(\mathcal H om(\cale,\cale(-\widetilde D))\oplus \mathcal H om(\calf,\calf(-\widetilde D)))^G \arrow[u]\\
0 \arrow[u] \arrow[r] & 0 \arrow[u] \arrow[r] &
H^2(\mathcal H om(\calf,\cale(-\widetilde D)))^G \arrow[u]
\end{tikzcd}.
\]
Proposition~\ref{prop:ttt} gives the claimed two-term complex.
\end{proof}

We now identify the two cohomology groups appearing in Lemma~\ref{lem:Ext2} with vector spaces associated to the graph. By~\eqref{eq:h2G}, we have
\[
H^2(\mathcal H om(\cale_\b,\cale_{\b'}(-\widetilde D)))^G
\cong
\begin{cases}
\frac{1}{\widetilde x_0 \widetilde x_1 \widetilde x_2}\C\{1\} & \text{if $\b=\b'$},\\
0 & \text{otherwise},
\end{cases}
\]
and similarly
\[
H^2(\mathcal H om(\calf_\w,\calf_{\w'}(-\widetilde D)))^G
\cong
\begin{cases}
\frac{1}{\widetilde x_0 \widetilde x_1 \widetilde x_2}\C\{1\} & \text{if $\w=\w'$},\\
0 & \text{otherwise},
\end{cases}
\]
while
\[
H^2(\mathcal H om(\calf_\w,\cale_\b(-\widetilde D)))^G
\cong
\begin{cases}
\frac{1}{\widetilde x_0 \widetilde x_1 \widetilde x_2}\C\!\left\{\frac{1}{\widetilde x(\e)}\right\} & \text{if $\e=\b\w\in \E$},\\
0 & \text{otherwise}.
\end{cases}
\]
Taking direct sums, we obtain canonical identifications
\begin{equation}\label{eq:concrete-vs-trace}
H^2(\mathcal H om(\calf,\cale(-\widetilde D)))^G \cong \C^{\E},
\qquad
H^2(\mathcal H om(\cale,\cale(-\widetilde D))\oplus \mathcal H om(\calf,\calf(-\widetilde D)))^G
\cong \C^{\B}\oplus \C^{\W}.
\end{equation}

Under these identifications, a class in
\[
H^2(\mathcal H om(\cale,\cale(-\widetilde D))\oplus \mathcal H om(\calf,\calf(-\widetilde D)))^G
\]
is represented by a \(2\)-\v Cech cocycle of the form
\[
\frac{1}{\widetilde x_0 \widetilde x_1 \widetilde x_2}(f(\b),g(\w))_{\b\in\B,\ \w\in\W},
\qquad
f\in \C^\B,\ g\in \C^\W,
\]
and a class in
\[
H^2(\mathcal H om(\calf,\cale(-\widetilde D)))^G
\]
is represented by a cocycle of the form
\[
\frac{1}{\widetilde x_0 \widetilde x_1 \widetilde x_2}
\left(
\frac{v(\e)}{\widetilde x(\e)}
\right)_{\e\in\E},
\qquad
v\in \C^\E.
\]

\begin{lemma}\label{lem:diff-ext2}
Under the identifications~\eqref{eq:concrete-vs-trace}, the differential in Lemma~\ref{lem:Ext2} becomes the map
\[
d^1:\C^\E\rightarrow \C^\B\oplus \C^\W
\]
given by
\[
d^1(v)
=
\left(
\left(\sum_{\w:\b\w\in\E}\wt(\b\w)v(\b\w)\right)_{\b\in\B},
\left(\sum_{\b:\b\w\in\E}\wt(\b\w)v(\b\w)\right)_{\w\in\W}
\right).
\]
\end{lemma}

\begin{proof}
The differential is induced by the map
\[
d^{-1}:
\mathcal H om(\calf,\cale(-\widetilde D))
\rightarrow
\mathcal H om(\cale,\cale(-\widetilde D))
\oplus
\mathcal H om(\calf,\calf(-\widetilde D)),
\qquad
u\mapsto \bigl(u\circ \widetilde\K,\ \widetilde\K\circ u\bigr).
\]
Let
\[
u=\frac{1}{\widetilde x_0 \widetilde x_1 \widetilde x_2}
\left(
\frac{v(\e)}{\widetilde x(\e)}
\right)_{\e\in\E}.
\]
Then the diagonal \((\b,\b)\)-entry of \(u\circ \widetilde\K\) is
\[
\sum_{\w:\b\w\in\E}
\frac{\wt(\b\w)v(\b\w)}{\widetilde x_0 \widetilde x_1 \widetilde x_2},
\]
while the diagonal \((\w,\w)\)-entry of \(\widetilde\K\circ u\) is
\[
\sum_{\b:\b\w\in\E}
\frac{\wt(\b\w)v(\b\w)}{\widetilde x_0 \widetilde x_1 \widetilde x_2}.
\]
These are exactly the two components of \(d^1(v)\).
\end{proof}

Combining Lemma~\ref{lem:Ext2}, the identifications~\eqref{eq:concrete-vs-trace}, and Lemma~\ref{lem:diff-ext2}, we obtain
\[
\Ext^\bullet(\widetilde{\mathcal L},\widetilde{\mathcal L}(-\widetilde D))^G
\cong
H^\bullet\Bigl(\Bigl[
\begin{array}{c c c}
\C^{\E} & \xrightarrow{\ d^1\ } & \C^{\B}\oplus \C^{\W}\\[-.4ex]
{\scriptstyle 1} && {\scriptstyle 2}
\end{array}
\Bigr]\Bigr),
\]
with \(d^1\) as above. This proves Proposition~\ref{prop:repext2}.

\subsection{Serre duality}\la{sec:duality}

The goal of this subsection is to compute the Serre duality pairing
\[
\Ext^1(\widetilde{\mathcal L},\widetilde{\mathcal L})^G
\otimes
\Ext^{1}\!\bigl(\widetilde{\mathcal L},\widetilde{\mathcal L}(-\widetilde D)\bigr)^G
\xrightarrow{\ \cup \ }
\Ext^{2}\!\bigl(\widetilde{\mathcal L},\widetilde{\mathcal L}(-\widetilde D)\bigr)^G
\xrightarrow{\ \mathrm{tr}\ }
H^2(\widetilde\P^2,\mathcal O_{\widetilde\P^2}(-\widetilde D))^G
\xrightarrow[\cong]{\widetilde \theta}
\C.
\]
We first state the result.

\begin{proposition}\la{prop:gsdual}
With the identifications in Proposition~\ref{prop:repext} and Proposition~\ref{prop:repext2}, the Serre duality pairing
\[
\Ext^1(\widetilde{\mathcal L},\widetilde{\mathcal L})^G
\otimes
\Ext^{1}\!\bigl(\widetilde{\mathcal L},\widetilde{\mathcal L}(-\widetilde D)\bigr)^G
\rightarrow \C
\]
is induced by the map
\begin{align}
\C^\E \otimes \C^\E &\rightarrow \C \nonumber\\
u\otimes v &\mapsto -\sum_{\e\in\E}u(\e)v(\e).
\la{eq:gsd}
\end{align}
\end{proposition}

\subsubsection{The cup product}

In this subsubsection we compute the cup product using \v{C}ech-$\mathcal Hom$ double complex.

\begin{lemma}\la{lem::cup1}
With the identifications in Proposition~\ref{prop:repext} and Proposition~\ref{prop:repext2}, the cup product map
\[
\cup:
\Ext^1(\widetilde{\mathcal L},\widetilde{\mathcal L})^G \otimes
\Ext^1(\widetilde{\mathcal L},\widetilde{\mathcal L}(-\widetilde D))^G
\rightarrow
\Ext^2(\widetilde{\mathcal L},\widetilde{\mathcal L}(-\widetilde D))^G
\]
is induced by the map
\[
\cup: \C^\E\otimes \C^\E \rightarrow \C^\B\oplus \C^\W
\]
given by
\[
u \cup v
=
\left(
\left(\sum_{\w:\b\w\in \E} u(\b\w)v(\b\w)\right)_{\b\in \B},
0
\right).
\]
\end{lemma}

\begin{proof}
By definition, the cup product is induced by the map
\[
\cup :
\check C^{p}\bigl(\widetilde{\mathcal U},\mathcal H om(K^\bullet,K^\bullet)^q\bigr)^G
\otimes
\check C^{p'}\bigl(\widetilde{\mathcal U},\mathcal H om(K^\bullet,K^\bullet(-\widetilde D))^{q'}\bigr)^G
\rightarrow
\check C^{p+p'}\bigl(\widetilde{\mathcal U},\mathcal H om(K^\bullet,K^\bullet(-\widetilde D))^{q+q'}\bigr)^G
\]
given by
\[
(\alpha\cup\beta)_{i_0\cdots i_{p+p'}}
=
(-1)^{q p'}
\beta_{i_p\cdots i_{p+p'}}\big|_{\widetilde U_{i_0\cdots i_{p+p'}}}
\circ
\alpha_{i_0\cdots i_p}\big|_{\widetilde U_{i_0\cdots i_{p+p'}}}.
\]

By Proposition~\ref{prop:repext}, an element \(u\in \C^\E\) determines an element of
\[
H^0(\mathcal H om(\cale,\calf))^G,
\]
represented by the \v Cech \(0\)-cocycle
\[
l=\bigl(u(\e)\,\widetilde x(\e)\bigr)_{\e\in\E}
\in \check C^0(\widetilde{\mathcal U},\mathcal H om(\cale,\calf))^G,
\]
where each \(u(\e)\,\widetilde x(\e)\) is viewed as a global section, hence as a \v Cech \(0\)-cocycle with identical components on the three open sets of \(\widetilde{\mathcal U}\). This class is represented by the $D$-cocycle
\[
(l,0,0)
\in
\check C^0(\widetilde{\mathcal U},\mathcal H om(\cale,\calf))^G
\oplus
\check C^1(\widetilde{\mathcal U},\mathcal H om(\cale,\cale)\oplus \mathcal H om(\calf,\calf))^G
\oplus
\check C^2(\widetilde{\mathcal U},\mathcal H om(\calf,\cale))^G,
\]
where \(\check d(l)=0\).

Similarly, let \(v\in \ker(d^1)\subset \C^\E\). A class in
\[
\Ext^1(\widetilde{\mathcal L},\widetilde{\mathcal L}(-\widetilde D))^G
\]
may be represented by a $D$-cocycle $(l',m',n')$ in
\[
\check C^0(\widetilde{\mathcal U},\mathcal H om(\cale,\calf(-\widetilde D)))^G
\oplus
\check C^1\!\bigl(\widetilde{\mathcal U},
\mathcal H om(\cale,\cale(-\widetilde D))
\oplus
\mathcal H om(\calf,\calf(-\widetilde D))\bigr)^G
\oplus
\check C^2(\widetilde{\mathcal U},\mathcal H om(\calf,\cale(-\widetilde D)))^G,
\]
satisfying
\begin{align*}
\check d(l')-d^0(m')=0,\qquad
\check d(m')+d^{-1}(n')=0,\qquad
\check d(n')=0.
\end{align*}
Here the \(\check C^2\)-component is
\[
n'
=
\frac{1}{\widetilde x_0 \widetilde x_1 \widetilde x_2}
\left( \frac{v(\e)}{\widetilde x(\e)} \right)_{\e\in\E}.
\]
The nonzero part of the cup product is
\[
(r,s)=(l,0,0)\cup(l',m',n'),
\]
where
\[
(r,s)\in
\check C^1(\widetilde{\mathcal U},\mathcal H om(\cale,\calf(-\widetilde D)))^G
\oplus
\check C^2\!\bigl(\widetilde{\mathcal U},
\mathcal H om(\cale,\cale(-\widetilde D))
\oplus
\mathcal H om(\calf,\calf(-\widetilde D))\bigr)^G
\]
is explicitly given by
\begin{align*}
r_{ij}
&=
-m_{ij}^{\prime,\calf}\circ l_i
\in
\Hom(\cale|_{\widetilde U_{ij}},\calf(-\widetilde D)|_{\widetilde U_{ij}})^G,\\[0.5ex]
s_{ijk}
&=
\bigl(n'_{ijk}\circ l_i,\ 0\bigr)
\in
\Hom(\cale|_{\widetilde U_{ijk}},\cale(-\widetilde D)|_{\widetilde U_{ijk}})^G
\oplus
\Hom(\calf|_{\widetilde U_{ijk}},\calf(-\widetilde D)|_{\widetilde U_{ijk}})^G.
\end{align*}

Under the identification of Lemma~\ref{lem:Ext2}, the class of the total cocycle \((r,s)\) is sent precisely to the class of \(s\) in
\[
H^2(\mathcal H om(\cale,\cale(-\widetilde D))\oplus \mathcal H om(\calf,\calf(-\widetilde D)))^G
\Big/
\operatorname{im}\bigl(d^{-1}\bigr).
\]
It remains to identify this class explicitly. The \((\b,\b)\)-entry of \(n'\circ l\) is
\[
\sum_{\w:\b\w\in \E}
\frac{u(\b\w)v(\b\w)}{\widetilde x_0\widetilde x_1\widetilde x_2}.
\]
Hence under the identification
\[
H^2(\mathcal H om(\cale,\cale(-\widetilde D))\oplus \mathcal H om(\calf,\calf(-\widetilde D)))^G
\cong \C^\B\oplus \C^\W,
\]
the class of \(s\) is
\[
\left(
\left(\sum_{\w:\b\w\in \E}u(\b\w)v(\b\w)\right)_{\b\in \B},
0
\right).
\]
\end{proof}

\subsubsection{The trace map}\label{sec:trace}

The goal of this subsubsection is to compute the trace map
\[
\widetilde \theta\circ\tr:
\Ext^2(\widetilde{\mathcal L},\widetilde{\mathcal L}(-\widetilde D))^G
\rightarrow
H^2(\widetilde\P^2,\mathcal O_{\widetilde\P^2}(-\widetilde D))^G
\xrightarrow[\cong]{\widetilde \theta}
\C.
\]

\begin{proposition}\label{prop:trace-ext2}
Under the identification of Proposition~\ref{prop:repext2}, the trace map
\[
\widetilde \theta\circ\tr:
\Ext^2(\widetilde{\mathcal L},\widetilde{\mathcal L}(-\widetilde D))^G
\rightarrow
H^2(\widetilde\P^2,\mathcal O_{\widetilde\P^2}(-\widetilde D))^G
\xrightarrow[\cong]{\widetilde \theta}
\C
\]
is induced by
\[
\C^{\B}\oplus \C^{\W}\rightarrow \C,
\qquad
(f,g)\mapsto \sum_{\w\in\W}g(\w)-\sum_{\b\in\B}f(\b).
\]
\end{proposition}

\begin{proof}
Consider the trace morphism of complexes
\[
\tr:\mathcal H om^\bullet(K^\bullet,K^\bullet(-\widetilde D)) \rightarrow
\left[
\begin{array}{c}
\mathcal O_{\widetilde\P^2}(-\widetilde D)\\[-.4ex]
{\scriptstyle 0}
\end{array}
\right],
\]
defined by
\[
\restr{\tr}{\mathcal H om(K^i,K^j(-\widetilde D))}
=
\begin{cases}
(-1)^i\tr_{K^i} & \text{if $i=j$},\\
0 & \text{otherwise}.
\end{cases}
\]
We have
\[
H^2(\widetilde\P^2,\mathcal O_{\widetilde\P^2}(-\widetilde D))^G
=
\frac{1}{\widetilde x_0 \widetilde x_1 \widetilde x_2}\C\{1\},
\qquad
\widetilde \theta \left(\frac{1}{\widetilde x_0 \widetilde x_1 \widetilde x_2}\right)=1.
\]
Hence \(\widetilde\theta\circ\tr\) is determined by its value on a cocycle
\[
\frac{1}{\widetilde x_0 \widetilde x_1 \widetilde x_2}(f(\b),g(\w))_{\b\in\B,\ \w\in\W}\in H^2(\mathcal H om(\cale,\cale(-\widetilde D))\oplus \mathcal H om(\calf,\calf(-\widetilde D)))^G.
\]
Since the trace is the alternating sum of the diagonal traces, we obtain
\[
(\widetilde\theta\circ\tr)(f,g)
=
\sum_{\w\in\W}g(\w)-\sum_{\b\in\B}f(\b).
\]
\end{proof}

\begin{proof}[Proof of Proposition~\ref{prop:gsdual}]
By Lemma~\ref{lem::cup1}, the cup product
\[
\Ext^1(\widetilde{\mathcal L},\widetilde{\mathcal L})^G \otimes
\Ext^1(\widetilde{\mathcal L},\widetilde{\mathcal L}(-\widetilde D))^G
\rightarrow
\Ext^2(\widetilde{\mathcal L},\widetilde{\mathcal L}(-\widetilde D))^G
\]
is induced by
\[
u\otimes v\mapsto
\left(
\left(\sum_{\w:\b\w\in \E}u(\b\w)v(\b\w)\right)_{\b\in \B},
0
\right).
\]
Applying Proposition~\ref{prop:trace-ext2}, we obtain
\begin{align*}
(\widetilde\theta\circ\tr)(u\cup v)
&=
-\sum_{\b\in\B}\sum_{\w:\b\w\in \E}u(\b\w)v(\b\w)\\
&=
-\sum_{\e\in\E}u(\e)v(\e).
\end{align*}
Thus the Serre duality pairing is induced by the map~\eqref{eq:gsd}.
\end{proof}

\subsection{The second model for the cotangent space}\label{sec:ext-ld}

In this section, we compute $\Ext^\bullet\!\bigl(\widetilde{\mathcal L},[\widetilde{\mathcal L}\to \widetilde{\mathcal L}|_{\widetilde D}]\bigr)^G$ by the same \v{C}ech-$\mathcal Hom$ double-complex method used in the previous subsection.

For each zig-zag path \(\alpha\in\zz\) and each black vertex \(\b\in \B\cap\alpha\), recall that
\[
\w_\alpha^{-}(\b)\;-\;\b\;-\;\w_\alpha^{+}(\b)
\]
denotes the wedge of \(\alpha\) at \(\b\). 

\begin{proposition}\label{prop:repextld}
There is an identification
\begin{equation}\label{eq:Fbullet-concrete}
\begin{aligned}
\Ext^\bullet\!\bigl(\widetilde{\mathcal L},[\widetilde{\mathcal L}\to \widetilde{\mathcal L}|_{\widetilde D}]\bigr)^G
\cong
H^\bullet\Bigl(\Bigl[
&
\underset{0}{\C^{\B}\oplus \C^{\W}}
\xrightarrow{\,d^{0}\,}
\underset{1}{\C^{\E}\oplus \displaystyle\bigoplus_{\alpha\in\zz}\Hom(\C^{\W\cap\alpha},\C)}\\
&\xrightarrow{\,d^{1}\,}
\underset{2}{\displaystyle\bigoplus_{\alpha\in\zz}\Hom(\C^{\B\cap\alpha},\C)}
\Bigr]\Bigr).
\end{aligned}
\end{equation}
The differentials are given by
\begin{align*}
d^{0}(f,g)
&=
\left(
\bigl(\wt(\e)(f(\b)-g(\w))\bigr)_{\e=\b\w\in\E},
\ (g|_{\W\cap\alpha})_{\alpha\in\zz}
\right),\\
d^{1}(u,t)_\alpha
&=
\left(
\frac{u(\b \w_\alpha^{-}(\b))}{\wt(\b \w_\alpha^{-}(\b))}
-
\frac{u(\b \w_\alpha^{+}(\b))}{\wt(\b \w_\alpha^{+}(\b))}
+
t_\alpha(\w_\alpha^{-}(\b))
-
t_\alpha(\w_\alpha^{+}(\b))
\right)_{\b\in \B\cap\alpha},
\end{align*}
for each \(\alpha\in\zz\), where \(t=(t_\alpha)_{\alpha\in\zz}\). 
\end{proposition}

The rest of this subsection is devoted to the proof of Proposition~\ref{prop:repextld}. Let
\[
K_{\widetilde D}^\bullet
:=
\Bigl[
\begin{array}{c c c c c}
\cale & \xrightarrow{\ \widetilde \K\ } & \calf & \to & \widetilde{\mathcal L}|_{\widetilde D}\\[-.4ex]
{\scriptstyle -1} && {\scriptstyle 0} && {\scriptstyle 1}
\end{array}
\Bigr]
\]
be the three-term complex obtained by composing \(\calf\to \widetilde{\mathcal L}\) with the restriction map
\[
r:\widetilde{\mathcal L}\rightarrow \widetilde{\mathcal L}|_{\widetilde D}.
\]
Then
\[
\Ext^i\!\bigl(\widetilde{\mathcal L},[\widetilde{\mathcal L}\to \widetilde{\mathcal L}|_{\widetilde D}]\bigr)^G
\cong
\mathbb H^i\!\bigl(\widetilde \P^2,\mathcal H om^\bullet(K^\bullet,K_{\widetilde D}^\bullet)\bigr)^G,
\]
where
\[
\mathcal H om^\bullet(K^\bullet,K_{\widetilde D}^\bullet)
=
\begin{aligned}
\Bigl[
\underset{-1}{\mathcal Hom(\calf,\cale)}
&\xrightarrow{\,d^{-1}\,}
\underset{0}{\mathcal Hom(\cale,\cale)\oplus\mathcal Hom(\calf,\calf)} \\
&\xrightarrow{\,d^{0}\,}
\underset{1}{\mathcal Hom(\cale,\calf)\oplus\mathcal Hom(\calf,\widetilde{\mathcal L}|_{\widetilde D})}
\xrightarrow{\,d^{1}\,}
\underset{2}{\mathcal Hom(\cale,\widetilde{\mathcal L}|_{\widetilde D})}
\Bigr].
\end{aligned}
\]
and, writing \(m=(m^\cale,m^\calf)\),
\begin{align}
d^{-1}(f)&=\bigl(f\circ \widetilde \K,\ \widetilde \K \circ f\bigr), \nonumber \\
d^{0}(m^\cale,m^\calf)&=\bigl(\widetilde \K \circ m^\cale - m^\calf\circ \widetilde \K,\ r\circ m^\calf\bigr),\nonumber \\
d^{1}(u,t)&=r\circ u + t\circ \widetilde \K.
\label{eq:diff-ld}
\end{align}

\begin{lemma}\label{lem:Fbullet}
There is an identification
\begin{equation}\label{eq:Fbullet}
\begin{aligned}
\Ext^\bullet\!\bigl(\widetilde{\mathcal L},[\widetilde{\mathcal L}\to \widetilde{\mathcal L}|_{\widetilde D}]\bigr)^G
\cong
H^\bullet\Bigl(\Bigl[
&
\underset{0}{\Hom(\cale,\cale)^G\oplus\Hom(\calf,\calf)^G}\\
&\xrightarrow{\,d^{0}\,}
\underset{1}{\Hom(\cale,\calf)^G\oplus\Hom(\calf,\widetilde{\mathcal L}|_{\widetilde D})^G}
\xrightarrow{\,d^{1}\,}
\underset{2}{\Hom(\cale,\widetilde{\mathcal L}|_{\widetilde D})^G}
\Bigr]\Bigr),
\end{aligned}
\end{equation}
where the differentials are induced by~\eqref{eq:diff-ld}.
\end{lemma}

\begin{proof}
Since \(\mathcal H om^q(K^\bullet,K_{\widetilde D}^\bullet)=0\) unless \(q\in\{-1,0,1,2\}\), the
\v{C}ech-$\mathcal Hom$ double complex has four nonzero rows:
\begin{equation}\label{eq:e0-ld}
\begin{tikzcd}[scale cd=0.85]
\check C^0(\mathcal H om(\cale,\widetilde{\mathcal L}|_{\widetilde D}))^G \arrow[r] &
\check C^1(\mathcal H om(\cale,\widetilde{\mathcal L}|_{\widetilde D}))^G\\
\check C^0(\mathcal H om(\cale,\calf)\oplus \mathcal H om(\calf,\widetilde{\mathcal L}|_{\widetilde D}))^G \arrow[u] \arrow[r] &
\check C^1(\mathcal H om(\cale,\calf)\oplus \mathcal H om(\calf,\widetilde{\mathcal L}|_{\widetilde D}))^G \arrow[u]\\
\check C^0(\mathcal H om(\cale,\cale)\oplus \mathcal H om(\calf,\calf))^G \arrow[u] \arrow[r] &
\check C^1(\mathcal H om(\cale,\cale)\oplus \mathcal H om(\calf,\calf))^G \arrow[u]\\
\check C^0(\mathcal H om(\calf,\cale))^G \arrow[u] \arrow[r] &
\check C^1(\mathcal H om(\calf,\cale))^G \arrow[u].
\end{tikzcd}
\end{equation}
Taking cohomology first in the \v{C}ech direction and using the same vanishing statements as in Lemma~\ref{lem:Ext1}, the only surviving terms occur in \v{C}ech degree \(p=0\). It follows by Proposition~\ref{prop:ttt} that the hypercohomology is computed by the cohomology of the complex~\eqref{eq:Fbullet}.
\end{proof}

We now describe the two spaces
\[
\Hom(\cale,\widetilde{\mathcal L}|_{\widetilde D})^G,
\qquad
\Hom(\calf,\widetilde{\mathcal L}|_{\widetilde D})^G
\]
in concrete terms. For \(\alpha\in\zz\), let
\[
\pi^{-1}(\alpha)=\{\widetilde\alpha_1,\dots,\widetilde\alpha_d\}\subset \widetilde D
\]
be the set of points of \(\widetilde D\) lying over \(\alpha\). Since \(\widetilde{\mathcal L}|_{\widetilde D}\) is a collection of one dimensional spaces supported on the finite set \( \widetilde C \cap \widetilde D\), we have
\[
\Hom(\cale,\widetilde{\mathcal L}|_{\widetilde D})
\cong
\bigoplus_{\alpha\in\zz}\ \bigoplus_{i=1}^d
\Hom\!\bigl(\restr{\cale}{\widetilde\alpha_i},\restr{\widetilde{\mathcal L}}{\widetilde\alpha_i}\bigr)
,\qquad 
\Hom(\calf,\widetilde{\mathcal L}|_{\widetilde D})
\cong
\bigoplus_{\alpha\in\zz}\ \bigoplus_{i=1}^d
\Hom\!\bigl(\restr{\calf}{\widetilde\alpha_i},\restr{\widetilde{\mathcal L}}{\widetilde\alpha_i}\bigr).
\]
Moreover, at the point \(\widetilde\alpha_i\), only the \(\alpha\)-block of \(\widetilde\K\) contributes. Define
\[
E_{\alpha,i}:=\bigoplus_{\b\in\B\cap\alpha}\restr{\cale_\b}{\widetilde\alpha_i},
\qquad
F_{\alpha,i}:=\bigoplus_{\w\in\W\cap\alpha}\restr{\calf_\w}{\widetilde\alpha_i},
\qquad 
L_{\alpha,i}:=\coker\!\left(\restr{\widetilde\K_\alpha}{\widetilde\alpha_i}:E_{\alpha,i}\to F_{\alpha,i}\right) \cong \restr{\widetilde{\mathcal L}}{\widetilde\alpha_i},
\]
so that
\[
\Hom(\cale,\widetilde{\mathcal L}|_{\widetilde D})
\cong
\bigoplus_{\alpha\in\zz}\ \bigoplus_{i=1}^d \Hom(E_{\alpha,i},L_{\alpha,i}),
\]
\[
\Hom(\calf,\widetilde{\mathcal L}|_{\widetilde D})
\cong
\bigoplus_{\alpha\in\zz}\ \bigoplus_{i=1}^d \Hom(F_{\alpha,i},L_{\alpha,i}).
\]

To obtain explicit coordinates, choose for each \(\alpha\in\zz\) and each \(\widetilde\alpha_i\in\pi^{-1}(\alpha)\) a gauge transformation (i.e. left and right multiplication by diagonal matrices) identifying
\[
\restr{\widetilde\K_\alpha}{\widetilde\alpha_i}
\quad\text{with}\quad
\partial_\alpha:=
\begin{bmatrix}
1 & & & -1\\
-1 & 1 &&\\
& -1 & 1 &\\
&& \ddots & \ddots
\end{bmatrix}.
\]
Such a gauge transformation exists because when viewed as edge weights, $(\restr{\widetilde\K_\alpha}{\widetilde\alpha_i})_{\b,\w}$ define a cohomology class with monodromy $(-1)^d$ around $\alpha$ by~\eqref{eq:casimir}. Since \(\coker(\partial_\alpha)\cong \C\) via the row vector \((1,\dots,1)\), these choices give identifications
\[
E_{\alpha,i}\cong \C^{\B\cap\alpha},
\qquad
F_{\alpha,i}\cong \C^{\W\cap\alpha},
\qquad
L_{\alpha,i}\cong \coker(\partial_\alpha)\cong \C,
\]
Therefore
\[
\Hom(E_{\alpha,i},L_{\alpha,i})
\cong \Hom(\C^{\B\cap\alpha},\C),
\qquad
\Hom(F_{\alpha,i},L_{\alpha,i})
\cong \Hom(\C^{\W\cap\alpha},\C),
\]
and hence
\[
\Hom(\cale,\widetilde{\mathcal L}|_{\widetilde D})
\cong
\bigoplus_{\alpha\in\zz}\ \bigoplus_{i=1}^d \Hom(\C^{\B\cap\alpha},\C),
\]
\[
\Hom(\calf,\widetilde{\mathcal L}|_{\widetilde D})
\cong
\bigoplus_{\alpha\in\zz}\ \bigoplus_{i=1}^d \Hom(\C^{\W\cap\alpha},\C).
\]
These identifications depend on the chosen gauges and are therefore noncanonical.

\begin{lemma}\label{lem:ident}
Assume that the above gauges are chosen \(G\)-equivariantly. Then
\[
\Hom(\cale,\widetilde{\mathcal L}|_{\widetilde D})^G
\cong
\bigoplus_{\alpha\in\zz}\Hom(\C^{\B\cap\alpha},\C),
\qquad
\Hom(\calf,\widetilde{\mathcal L}|_{\widetilde D})^G
\cong
\bigoplus_{\alpha\in\zz}\Hom(\C^{\W\cap\alpha},\C).
\]
\end{lemma}

\begin{proof}
We prove the first isomorphism; the second is identical. Fix \(\alpha\in\zz\). Because the gauges are chosen \(G\)-equivariantly, the induced \(G\)-action on
\[
\bigoplus_{i=1}^d \Hom(\C^{\B\cap\alpha},\C)
\]
is the permutation action on the \(d\) summands corresponding to the \(d\) points
\(\widetilde\alpha_1,\dots,\widetilde\alpha_d\). Therefore the \(G\)-invariant subspace is the diagonal copy:
\[
\left(\bigoplus_{i=1}^d \Hom(\C^{\B\cap\alpha},\C)\right)^G
=
\{(f,\dots,f):f\in \Hom(\C^{\B\cap\alpha},\C)\}
\cong
\Hom(\C^{\B\cap\alpha},\C).
\]
Summing over all \(\alpha\in\zz\) gives the claim.
\end{proof}

We next compute the two maps entering the complex~\eqref{eq:Fbullet}.

\begin{lemma}\label{lem:map1}
Under the identifications
\[
\Hom(\cale,\calf)^G\cong \C^\E,
\qquad
\Hom(\cale,\widetilde{\mathcal L}|_{\widetilde D})^G
\cong
\bigoplus_{\alpha\in\zz}\Hom(\C^{\B\cap\alpha},\C)
\]
from Proposition~\ref{prop:repext} and Lemma~\ref{lem:ident}, the map
\[
\Hom(\cale,\calf)^G\rightarrow \Hom(\cale,\widetilde{\mathcal L}|_{\widetilde D})^G
\]
is given by
\[
u\mapsto \frac{u(\b \w_\alpha^{-}(\b))}{\wt(\b \w_\alpha^{-}(\b))}
-
\frac{u(\b \w_\alpha^{+}(\b))}{\wt(\b \w_\alpha^{+}(\b))}
\qquad (\b\in\B\cap\alpha).
\]
\end{lemma}

\begin{proof}
Fix \(\alpha\in\zz\) and a point \(\widetilde\alpha_i \in \widetilde D\) lying over \(\alpha\). Under the above identifications, the \(\alpha\)-component of
\[
r\circ u:\cale\rightarrow \widetilde{\mathcal L}|_{\widetilde D}
\]
at \(\widetilde\alpha_i\) is the composition
\[
E_{\alpha,i}
\xrightarrow{\ \restr{u}{\widetilde\alpha_i}\ }
F_{\alpha,i}
\rightarrow
L_{\alpha,i}\cong \C.
\]
Under the gauge transformation identifying $\restr{\widetilde\K_\alpha}{\widetilde\alpha_i}$ with $\partial_\alpha$, the matrix representing \(\restr{u}{\widetilde\alpha_i}\) has, in the column indexed by \(\b\in\B\cap\alpha\), exactly two nonzero entries:
\[
\frac{u(\b \w_\alpha^{-}(\b))}{\wt(\b \w_\alpha^{-}(\b))}
\quad\text{in row }\w_\alpha^{-}(\b),
\qquad
-\frac{u(\b \w_\alpha^{+}(\b))}{\wt(\b \w_\alpha^{+}(\b))}
\quad\text{in row }\w_\alpha^{+}(\b).
\]
Multiplying by the row vector \((1,\dots,1)\) shows that the induced functional on \(\C^{\B\cap\alpha}\) is
\[
\b\mapsto
\frac{u(\b \w_\alpha^{-}(\b))}{\wt(\b \w_\alpha^{-}(\b))}
-
\frac{u(\b \w_\alpha^{+}(\b))}{\wt(\b \w_\alpha^{+}(\b))}.
\]
This proves the formula.
\end{proof}

\begin{lemma}\label{lem:map2}
Under the identifications of Lemma~\ref{lem:ident}, the map
\[
\Hom(\calf,\widetilde{\mathcal L}|_{\widetilde D})^G
\rightarrow
\Hom(\cale,\widetilde{\mathcal L}|_{\widetilde D})^G
\]
is given by
\[
t=(t_\alpha)_{\alpha\in\zz}
\mapsto
t_\alpha(\w_\alpha^{-}(\b))-t_\alpha(\w_\alpha^{+}(\b))
\qquad (\b\in\B\cap\alpha).
\]
\end{lemma}

\begin{proof}
Fix \(\alpha\in\zz\). Under the chosen identifications, the map
\[
\Hom(\C^{\W\cap\alpha},\C)\rightarrow \Hom(\C^{\B\cap\alpha},\C)
\]
is induced by composition with the matrix \(\partial_\alpha\). This gives the stated formula.
\end{proof}

We can now finish the proof of Proposition~\ref{prop:repextld}.

\begin{proof}[Proof of Proposition~\ref{prop:repextld}]
By Lemma~\ref{lem:Fbullet}, the Ext groups are computed by the cohomology of the three-term complex
\[
\Hom(\cale,\cale)^G\oplus\Hom(\calf,\calf)^G
\rightarrow
\Hom(\cale,\calf)^G\oplus\Hom(\calf,\widetilde{\mathcal L}|_{\widetilde D})^G
\rightarrow
\Hom(\cale,\widetilde{\mathcal L}|_{\widetilde D})^G.
\]
Using Proposition~\ref{prop:repext}, we identify
\[
\Hom(\cale,\cale)^G\oplus\Hom(\calf,\calf)^G \cong \C^\B\oplus \C^\W,
\qquad
\Hom(\cale,\calf)^G\cong \C^\E.
\]
Using Lemma~\ref{lem:ident}, we identify
\[
\Hom(\calf,\widetilde{\mathcal L}|_{\widetilde D})^G
\cong
\bigoplus_{\alpha\in\zz}\Hom(\C^{\W\cap\alpha},\C),
\qquad
\Hom(\cale,\widetilde{\mathcal L}|_{\widetilde D})^G
\cong
\bigoplus_{\alpha\in\zz}\Hom(\C^{\B\cap\alpha},\C).
\]
Under these identifications, the first differential is induced by
\[
d^{0}(f,g)
=
\left(
\bigl(\wt(\e)(f(\b)-g(\w))\bigr)_{\e=\b\w\in\E},
\ (g|_{\W\cap\alpha})_{\alpha\in\zz}
\right),
\]
and the second differential is the sum of the two maps computed in Lemma~\ref{lem:map1} and Lemma~\ref{lem:map2}. Thus, for \(u\in\C^\E\) and \(t=(t_\alpha)_{\alpha\in\zz}\),
\[
d^{1}(u,t)_\alpha
=
\left(
\frac{u(\b \w_\alpha^{-}(\b))}{\wt(\b \w_\alpha^{-}(\b))}
-
\frac{u(\b \w_\alpha^{+}(\b))}{\wt(\b \w_\alpha^{+}(\b))}
+
t_\alpha(\w_\alpha^{-}(\b))
-
t_\alpha(\w_\alpha^{+}(\b))
\right)_{\b\in \B\cap\alpha}.
\]
This is exactly the complex~\eqref{eq:Fbullet-concrete}.
\end{proof}

\section{The anchor map on the sheaf side}
\label{sec:anchor_sheaf}

Under the identifications introduced above, the Beauville--Bottacin anchor map is given by the composition
\[
\pi^\sharp_{\widetilde{\mathcal M}}:
\Ext^1(\widetilde{\mathcal L},\widetilde{\mathcal L}(-\widetilde D))^{G}
\xrightarrow{\ \Psi^{-1}\ }
\Ext^1\!\bigl(\widetilde{\mathcal L},
[\widetilde{\mathcal L}\to \widetilde{\mathcal L}|_{\widetilde D}]\bigr)^G
\xrightarrow{\ \widetilde\theta\ }
\Ext^1(\widetilde{\mathcal L},\widetilde{\mathcal L})^{G}.
\]
Thus it remains to compute the two maps \(\Psi^{-1}\) and \(\widetilde\theta\).

\subsection{The map \texorpdfstring{$\widetilde\theta$}{theta} in coordinates}

In this subsection, we compute the map $\widetilde \theta$ that plays the role of the map \(j^*\) on the graph side.

\begin{proposition}\label{prop:extlldtoextld}
With the identifications in Proposition~\ref{prop:repext} and Proposition~\ref{prop:repextld}, the map
\[
\widetilde \theta:\ \Ext^1(\widetilde{\mathcal L},\,[\widetilde{\mathcal L}\to \widetilde{\mathcal L}|_{\widetilde D}])^G
\rightarrow \Ext^1(\widetilde{\mathcal L},\widetilde{\mathcal L})^G
\]
is induced by the projection
\[
\C^{\E}\oplus \bigoplus_{\alpha\in \zz}\Hom(\C^{\W\cap \alpha},\C)\rightarrow \C^{\E},
\qquad
(u,t)\mapsto u.
\]
\end{proposition}

\begin{proof}
Recall that the map
\[
\widetilde \theta:\Ext^1(\widetilde{\mathcal L},[\widetilde{\mathcal L}\to \widetilde{\mathcal L}|_{\widetilde D}])^G
\rightarrow
\Ext^1(\widetilde{\mathcal L},\widetilde{\mathcal L})^G
\]
is induced by the morphism of complexes
\[
[\widetilde{\mathcal L}\xrightarrow{r}\widetilde{\mathcal L}|_{\widetilde D}]
\rightarrow
\widetilde{\mathcal L},
\]
given by projection onto the first term.

We compute both Ext groups using the locally free resolutions $K_{\widetilde{D}}^\bullet$ and $K^\bullet$. The morphism above is represented by the morphism of complexes
\[
\Bigl[
\begin{array}{c c c c c}
\cale & \xrightarrow{\ \widetilde \K\ } & \calf & \to & \widetilde{\mathcal L}|_{\widetilde D}\\[-.4ex]
{\scriptstyle -1} && {\scriptstyle 0} && {\scriptstyle 1}
\end{array}
\Bigr] \rightarrow \Bigl[ \begin{array}{c c c} \cale & \xrightarrow{\ \widetilde \K\ } & \calf\\[-.4ex] {\scriptstyle -1} & & {\scriptstyle 0} \end{array} \Bigr] 
\]
given by projection onto the first two terms. Applying
\[
\check C^\bullet\bigl(\widetilde{\mathcal U},\mathcal Hom(K^\bullet,-)\bigr)^G
\]
therefore gives a morphism of double complexes
from the double complex computing
\(\Ext^\bullet(\widetilde{\mathcal L},[\widetilde{\mathcal L}\to \widetilde{\mathcal L}|_{\widetilde D}])^G\)
to the one computing
\(\Ext^\bullet(\widetilde{\mathcal L},\widetilde{\mathcal L})^G\). Passing to \v Cech cohomology, we get the map stated in the proposition.
\end{proof}

\subsection{Comparison of the two cotangent models}

In this section, we compute the map $\Psi$ identifying the two cotangent space models.

\begin{proposition} \label{prop:isom_ext}
With the identifications in Proposition~\ref{prop:repext2} and Proposition~\ref{prop:repextld}, the isomorphism 
\[
\Psi: \Ext^1(\widetilde{\mathcal L},\,[\widetilde{\mathcal L}\to \widetilde{\mathcal L}|_{\widetilde D}])^G \xrightarrow[]{\ \cong\ }\Ext^1(\widetilde{\mathcal L},\widetilde{\mathcal L}(-\widetilde D))^G
\]
is induced by the map
\[
\C^{\E}\oplus \bigoplus_{\alpha\in \zz}\Hom(\C^{\W\cap \alpha},\C)\rightarrow \C^{\E},
\qquad
(v,t)\mapsto \left( \frac{t_{\alpha_\e^-}(\w)-t_{\alpha_\e^+}(\w)}{\wt(\e)} \right)_{\e \in \E}.
\]
\end{proposition}

The goal of this subsection is to compute the above isomorphism explicitly. Rather than work directly with representatives, we characterize it by compatibility with cup products. The argument has three steps:
\begin{enumerate}
\item Compute the cup product
\[
\Ext^1(\widetilde{\mathcal L},\widetilde{\mathcal L})^G
\otimes
\Ext^1\!\bigl(\widetilde{\mathcal L},[\widetilde{\mathcal L}\to \widetilde{\mathcal L}|_{\widetilde D}]\bigr)^G
\to
\Ext^2\!\bigl(\widetilde{\mathcal L},[\widetilde{\mathcal L}\to \widetilde{\mathcal L}|_{\widetilde D}]\bigr)^G.
\]
\item Compare the two linear functionals on the resulting one-dimensional \(\Ext^2\)-spaces.
\item Recover the isomorphism from nondegeneracy of the Serre duality pairing.
\end{enumerate}

Let
\[
\Psi_2:
\Ext^2\!\bigl(\widetilde{\mathcal L},[\widetilde{\mathcal L}\to \widetilde{\mathcal L}|_{\widetilde D}]\bigr)^G
\xrightarrow{\ \cong\ }
\Ext^2(\widetilde{\mathcal L},\widetilde{\mathcal L}(-\widetilde D))^G
\]
denote the map induced by the restriction morphism in the second variable. By functoriality of the cup product, we obtain a commuting diagram
\[
\begin{tikzcd}
\Ext^1(\widetilde{\mathcal L},\widetilde{\mathcal L})^G
\otimes
\Ext^1\!\bigl(\widetilde{\mathcal L},[\widetilde{\mathcal L}\to \widetilde{\mathcal L}|_{\widetilde D}]\bigr)^G  \arrow[r,"\cup"] \arrow[d,"1\otimes \Psi"']
&  \Ext^2\!\bigl(\widetilde{\mathcal L},[\widetilde{\mathcal L}\to \widetilde{\mathcal L}|_{\widetilde D}]\bigr)^G\arrow[d,"\Psi_2"] \\
\Ext^1(\widetilde{\mathcal L},\widetilde{\mathcal L})^G \otimes
\Ext^1(\widetilde{\mathcal L},\widetilde{\mathcal L}(-\widetilde D))^G  \arrow[r,"\cup"] 
&  \Ext^2(\widetilde{\mathcal L},\widetilde{\mathcal L}(-\widetilde D))^G.
\end{tikzcd}
\]
Thus we are reduced to computing the simpler isomorphism 
\[
\Ext^2(\widetilde{\mathcal L},\widetilde{\mathcal L}(-\widetilde D))^G \cong  \Ext^2\!\bigl(\widetilde{\mathcal L},[\widetilde{\mathcal L}\to \widetilde{\mathcal L}|_{\widetilde D}]\bigr)^G
\]
between one-dimensional spaces.

\subsubsection{The cup product}\label{subsection cup product}

The main result of this subsubsection is the following.

\begin{lemma}\label{lem::cup}
With the identifications of Proposition~\ref{prop:repext} and Proposition~\ref{prop:repextld}, the cup product
\[
\cup:
\Ext^1(\widetilde{\mathcal L},\widetilde{\mathcal L})^G
\otimes
\Ext^1\!\bigl(\widetilde{\mathcal L},[\widetilde{\mathcal L}\to \widetilde{\mathcal L}|_{\widetilde D}]\bigr)^G
\rightarrow
\Ext^2\!\bigl(\widetilde{\mathcal L},[\widetilde{\mathcal L}\to \widetilde{\mathcal L}|_{\widetilde D}]\bigr)^G
\]
is induced by the map
\[
\C^\E \otimes
\left(
\C^\E \oplus \bigoplus_{\alpha\in\zz}\Hom(\C^{\W\cap\alpha},\C)
\right)
\rightarrow
\bigoplus_{\alpha\in\zz}\Hom(\C^{\B\cap\alpha},\C),
\qquad
u\otimes(v,t)\mapsto (s_\alpha)_{\alpha\in\zz},
\]
where, for each \(\alpha\in\zz\) and each wedge
\[
\w_\alpha^{-}(\b)\;-\;\b\;-\;\w_\alpha^{+}(\b)
\]
along \(\alpha\), we have
\[
s_\alpha(\b)
=
\frac{u(\b \w_\alpha^{-}(\b))}{\wt(\b\w_\alpha^{-}(\b))}\,t_\alpha(\w_\alpha^{-}(\b))
-
\frac{u(\b \w_\alpha^{+}(\b))}{\wt(\b\w_\alpha^{+}(\b))}\,t_\alpha(\w_\alpha^{+}(\b)).
\]
\end{lemma}

\begin{proof}
Using the \v{C}ech--\(\mathcal Hom\) double complexes from Sections~\ref{sec:dk} and~\ref{sec:ext-ld}, it suffices to compute the induced map on the \(p=0\) column:
\[
\Hom(\cale,\calf)^G
\otimes
\Bigl(
\Hom(\cale,\calf)^G\oplus \Hom(\calf,\widetilde{\mathcal L}|_{\widetilde D})^G
\Bigr)
\rightarrow
\Hom(\cale,\widetilde{\mathcal L}|_{\widetilde D})^G.
\]
Only the summand \(\Hom(\calf,\widetilde{\mathcal L}|_{\widetilde D})^G\) in the second factor can contribute to the target, so the component \(v\in\Hom(\cale,\calf)^G\) plays no role in this cup product.
By Proposition~\ref{prop:repext}, \(u\in \C^\E\) is identified with
\[
l = \bigl(u(\e)\,\widetilde x(\e)\bigr)_{\e\in\E} \in \Hom(\cale,\calf)^G.
\]
The cup product is therefore represented by
\[
t \circ l
\in
\Hom(\cale,\widetilde{\mathcal L}|_{\widetilde D})^G.
\]
It remains to identify this map explicitly.

Fix \(\alpha\in\zz\) and a point \(\widetilde\alpha_i\in \widetilde D\) lying over \(\alpha\). Under the identifications in Lemma~\ref{lem:ident}, the \(\alpha\)-component of \(t \circ l\) is the composition
\[
E_{\alpha,i}
\xrightarrow{\ \restr{l}{\widetilde\alpha_i}\ }
F_{\alpha,i}
\xrightarrow{\ t_\alpha\ }
L_{\alpha,i}\cong \C.
\]
Under the gauge transformation identifying \(\restr{\widetilde\K_\alpha}{\widetilde\alpha_i}\) with \(\partial_\alpha\), the matrix representing \(\restr{l}{\widetilde\alpha_i}\) has, in the column $\b$, exactly two nonzero entries:
\[
\frac{u(\b \w_\alpha^{-}(\b))}{\wt(\b \w_\alpha^{-}(\b))}
\quad\text{in row }\w_\alpha^{-}(\b),
\qquad
-\frac{u(\b \w_\alpha^{+}(\b))}{\wt(\b \w_\alpha^{+}(\b))}
\quad\text{in row }\w_\alpha^{+}(\b).
\]
Composing with the row vector $t_\alpha$, it follows that the composition
\[
s_\alpha=t_\alpha\circ \restr{l}{\widetilde\alpha_i}\in \Hom(\C^{\B\cap\alpha},\C)
\]
is given by
\[
s_\alpha(\b)
=
\frac{u(\b \w_\alpha^{-}(\b))}{\wt(\b\w_\alpha^{-}(\b))}\,t_\alpha(\w_\alpha^{-}(\b))
-
\frac{u(\b \w_\alpha^{+}(\b))}{\wt(\b\w_\alpha^{+}(\b))}\,t_\alpha(\w_\alpha^{+}(\b)).
\]
\end{proof}

\subsubsection{The isomorphism between the two \texorpdfstring{$\Ext^2$}{Ext2(L-tilde,L-tilde(-D-tilde)), G-invariants}-spaces}

Define the map 
\[
\phi: \Ext^2\!\bigl(\widetilde{\mathcal L},[\widetilde{\mathcal L}\to \widetilde{\mathcal L}|_{\widetilde D}]\bigr)^G \rightarrow \C
\]
induced by the map
\[
\bigoplus_{\alpha\in\zz}\Hom(\C^{\B\cap\alpha},\C) \rightarrow \C, \qquad (s_\alpha)_{\alpha \in \zz} \mapsto \sum_{\alpha \in \zz} \sum_{\b \in \B \cap \alpha} s_\alpha(\b).
\]
Recall the trace map
\[
\widetilde \theta\circ\tr:
\Ext^2(\widetilde{\mathcal L},\widetilde{\mathcal L}(-\widetilde D))^G
\rightarrow \C
\]
induced by
\[
\C^{\B}\oplus \C^{\W}\rightarrow \C,
\qquad
(f,g)\mapsto \sum_{\w\in\W}g(\w)-\sum_{\b\in\B}f(\b).
\]
from Proposition~\ref{prop:trace-ext2}. 

\begin{proposition} \label{prop:scalar}
The scalar $c \in \C^\times$ that makes the diagram
\[
\begin{tikzcd}
 \Ext^2(\widetilde{\mathcal L},\widetilde{\mathcal L}(-\widetilde D))^G \arrow[r,"{\widetilde \theta\circ\tr}"]  \arrow[d,"\Psi_2"]
& \C \arrow[d,"c"]\\
\Ext^2\!\bigl(\widetilde{\mathcal L},[\widetilde{\mathcal L}\to \widetilde{\mathcal L}|_{\widetilde D}]\bigr)^G  \arrow[r,"\phi"]
&  \C
\end{tikzcd},
\]
commute is $c=-1$, where the map on the right is multiplication by $c$. 
\end{proposition}

The proof of Proposition~\ref{prop:scalar} is a straightforward but somewhat tedious diagram chase, so we defer it to Appendix~\ref{app:prop:scalar}.

\begin{proof}[Proof of Proposition~\ref{prop:isom_ext}]
Let 
\[
u\in \C^\E,
\qquad
(v,t)\in
\C^{\E}\oplus \bigoplus_{\alpha\in \zz}\Hom(\C^{\W\cap \alpha},\C).
\]
By Lemma~\ref{lem::cup}, we have
\[
\phi\bigl(u\cup (v,t)\bigr)
=
\sum_{\alpha\in\zz}\sum_{\b\in\B\cap\alpha}
\Bigl(
\frac{u(\b \w_\alpha^{-}(\b))}{\wt(\b\w_\alpha^{-}(\b))}\,t_\alpha(\w_\alpha^{-}(\b))
-
\frac{u(\b \w_\alpha^{+}(\b))}{\wt(\b\w_\alpha^{+}(\b))}\,t_\alpha(\w_\alpha^{+}(\b))
\Bigr).
\]
Regrouping the sum by edges gives
\[
\phi\bigl(u\cup (v,t)\bigr)
=
\sum_{\e=\b\w\in\E}
\frac{u(\e)}{\wt(\e)}
\bigl(t_{\alpha_\e^-}(\w)-t_{\alpha_\e^+}(\w)\bigr).
\]
On the other hand, under the lower cup product and Proposition~\ref{prop:gsdual}, the Serre duality pairing gives
\[
u\otimes \Psi(v,t) \longmapsto -\sum_{\e\in\E}u(\e)\Psi(v,t)(\e).
\]
By Proposition~\ref{prop:scalar}, the functional \(\phi\) is equal to \(-1\) times \(\widetilde\theta\circ\tr\) under the induced isomorphism on \(\Ext^2\). Therefore
\[
\phi\bigl(u\cup (v,t)\bigr)
=
\sum_{\e\in\E}u(\e)\Psi(v,t)(\e).
\]
Since this holds for all \(u\in \C^\E\),
\[
\Psi(v,t)(\e)
=
\frac{t_{\alpha_\e^-}(\w)-t_{\alpha_\e^+}(\w)}{\wt(\e)}
\qquad (\e=\b\w\in\E),
\]
as claimed.
\end{proof}

\section{The differential of the spectral transform}
\label{sec:diff_kappa}

The following proposition identifies the differential of the spectral transform.

\begin{proposition}\label{prop:dk}
Let \((\widetilde{\mathcal L},\bm \nu)=\kappa([\wt])\). Under the identification of Proposition~\ref{prop:repext}, the differential
\[
d\kappa:\ T_{[\wt]}\mathcal X \cong H^1(\graph,\C)\rightarrow
T_{\kappa([\wt])}\widetilde{\mathcal M}\cong \Ext^1(\widetilde{\mathcal L},\widetilde{\mathcal L})^G
\]
is induced by the identity map
\[
u \mapsto \wt u \in \C^{\E}.
\]
\end{proposition}

\begin{proof}
A total \(1\)-cochain is a triple $(l,m,n)$ where
\begin{align*}
l&=(l_i)\in \check C^0 \bigl(\widetilde{\mathcal U},\mathcal H om(\cale,\calf)\bigr)^G,\\
m&=(m_{ij})=(m_{ij}^\cale,m_{ij}^\calf)\in
\check C^1 \bigl(\widetilde{\mathcal U},\mathcal H om(\cale,\cale)\oplus\mathcal H om(\calf,\calf)\bigr)^G,\\
n&=(n_{ijk})\in \check C^2 \bigl(\widetilde{\mathcal U},\mathcal H om(\calf,\cale)\bigr)^G.
\end{align*}
Since
\[
d^0(f,g)=\widetilde\K\circ f-g\circ \widetilde\K
\qquad\text{and}\qquad
d^{-1}(u)=\bigl(u\circ \widetilde\K,\widetilde\K\circ u\bigr),
\]
the condition \(D(l,m,n)=0\) is equivalent to
\begin{align*}
&\check d(l)-d^0(m)=0,\\
&\check d(m)+d^{-1}(n)=0,\\
&\check d(n)=0.
\end{align*}
Explicitly, these equations are
\begin{align}
l_j-l_i-\bigl(\widetilde\K\circ m_{ij}^\cale-m_{ij}^\calf\circ\widetilde\K\bigr)&=0,\label{eq:dk-cocycle-1}\\
m_{jk}-m_{ik}+m_{ij}+\bigl(n_{ijk}\circ\widetilde\K,\widetilde\K\circ n_{ijk}\bigr)&=0,\label{eq:dk-cocycle-2}\\
\check d(n)&=0.\label{eq:dk-cocycle-3}
\end{align}
Since the cover \(\widetilde{\mathcal U}\) has only three open sets,~\eqref{eq:dk-cocycle-3} holds automatically.

We now recall the deformation-theoretic meaning of these equations following~\cite{dgla}. For clarity, we temporarily write \(\widetilde\K(\wt)\) instead of \(\widetilde\K\). Fix \(\wt\) and consider the complex
\[
K^\bullet=
\bigl[\cale\xrightarrow{\widetilde\K(\wt)}\calf\bigr].
\]
Over an affine open \(\widetilde U_i\in\widetilde{\mathcal U}\), a first-order deformation of the differential is given by
\[
\widetilde\K(\wt)\mapsto \widetilde\K(\wt)+\varepsilon l_i,
\qquad
l_i\in\Hom \bigl(\cale|_{\widetilde U_i},\calf|_{\widetilde U_i}\bigr)^G,
\]
and hence defines a local deformation of complexes over \(\C[\varepsilon]/(\varepsilon^2)\):
\[
\left[
\cale|_{\widetilde U_i}\otimes\C[\varepsilon]/(\varepsilon^2)
\xrightarrow{\ \widetilde\K(\wt)+\varepsilon l_i\ }
\calf|_{\widetilde U_i}\otimes\C[\varepsilon]/(\varepsilon^2)
\right].
\]

To glue these local deformations over \(\widetilde U_{ij}:=\widetilde U_i\cap \widetilde U_j\), we choose chain isomorphisms
\[
1+\varepsilon m_{ij},
\qquad
m_{ij}=(m_{ij}^\cale,m_{ij}^\calf),
\]
so that the diagram
\[
\begin{tikzcd}[row sep=large, column sep=large]
\cale|_{\widetilde U_{ij}}
\arrow[r,"\widetilde\K(\wt)+\varepsilon l_j"]
\arrow[d,"1+\varepsilon m_{ij}^\cale"']
&
\calf|_{\widetilde U_{ij}}
\arrow[d,"1+\varepsilon m_{ij}^\calf"]
\\
\cale|_{\widetilde U_{ij}}
\arrow[r,"\widetilde\K(\wt)+\varepsilon l_i"']
&
\calf|_{\widetilde U_{ij}}
\end{tikzcd}
\]
commutes modulo \(\varepsilon^2\). Comparing the \(\varepsilon\)-terms gives precisely~\eqref{eq:dk-cocycle-1}.

On triple overlaps \(\widetilde U_{ijk}\), the gluing maps satisfy the cocycle condition up to chain homotopy. Thus there exist
\[
n_{ijk}\in \Hom \bigl(\calf|_{\widetilde U_{ijk}},\cale|_{\widetilde U_{ijk}}\bigr)^G
\]
such that
\[
(1+\varepsilon m_{ik})-(1+\varepsilon m_{ij})(1+\varepsilon m_{jk})
=
\varepsilon d^{-1}(n_{ijk}).
\]
Comparing the \(\varepsilon\)-terms gives exactly~\eqref{eq:dk-cocycle-2}. Hence a first-order deformation of \(\widetilde{\mathcal L}\) determines a cocycle \((l,m,n)\) representing its class in
\(\Ext^1(\widetilde{\mathcal L},\widetilde{\mathcal L})^G\).

Now let
\[
 u \in T_{[\wt]}\mathcal X\cong H^1(\graph,\C),
\]
and consider the first-order deformation
\[
\wt(\varepsilon):=\wt(1+\varepsilon u)
\qquad\text{over }\C[\varepsilon]/(\varepsilon^2).
\]
Since \(\widetilde\K(\cdot)\) depends linearly on the edge weights,
\[
\widetilde\K(\wt(\varepsilon))
=
\widetilde\K(\wt)+\varepsilon\widetilde\K(\wt u).
\]
Thus the induced deformation of the spectral sheaf is the cokernel of
\[
0\rightarrow \cale\otimes\C[\varepsilon]/(\varepsilon^2)
\xrightarrow{\ \widetilde\K(\wt)+\varepsilon\widetilde\K(\wt u)\ }
\calf\otimes\C[\varepsilon]/(\varepsilon^2)
\rightarrow \widetilde{\mathcal L}_\varepsilon
\rightarrow 0.
\]
In this case we may take
\[
l_i=\widetilde\K(\wt u)
\qquad\text{for all }i,
\]
and
\[
m_{ij}=0,\qquad n_{ijk}=0.
\]
Thus the resulting class in \(\Ext^1(\widetilde{\mathcal L},\widetilde{\mathcal L})^G\) is represented by the global section
\[
\widetilde\K(\wt u)\in \Hom(\cale,\calf)^G.
\]
Under the identification
\[
\Hom(\cale,\calf)^G\cong \C^{\E}
\]
from Proposition~\ref{prop:repext}, this section corresponds exactly to the edgewise product
\[
\wt u \in \C^{\E}.
\]
Therefore \(d\kappa\) is induced by the map
\[
\dot\wt\mapsto \wt u,
\]
as claimed.
\end{proof}


\section{Proof of Theorem~\ref{thm:main}} \label{sec:last}

Consider the complexes~\eqref{eq:relcoh-concrete} and~\eqref{eq:Fbullet-concrete}
computing \(H^1(\widehat \Gamma,\partial \widehat \Gamma;\C)\) and \(
\Ext^1\!\bigl(\widetilde{\mathcal L},[\widetilde{\mathcal L}\to \widetilde{\mathcal L}|_{\widetilde D}]
\bigr)^G \) respectively. Define a chain isomorphism
\[
\Phi^\bullet:\eqref{eq:relcoh-concrete} \rightarrow\eqref{eq:Fbullet-concrete}
\]
by
\[
\Phi^0(f,g)=(-f,-g),\qquad
\Phi^1(u,t)=(\wt u,t),\qquad
\Phi^2=\id.
\]
This induces an isomorphism
\[
\Phi:H^1(\Gamma,\partial \widehat \Gamma;\C)
\xrightarrow{\cong}
\Ext^1\!\bigl(\widetilde{\mathcal L},
[\widetilde{\mathcal L}\to \widetilde{\mathcal L}|_{\widetilde D}]
\bigr)^G.
\]
We first prove commutativity of
\begin{equation}\label{eq:square1}
\begin{tikzcd}
H^1(\Gamma,\partial \widehat \Gamma;\C)
  \arrow[r,"j^*"]
  \arrow[d,"\Phi"']
&
T_{[\wt]}\mathcal X=H^1(\graph,\C)
  \arrow[d,"d\kappa"]\\
\Ext^1\!\bigl(\widetilde{\mathcal L},
[\widetilde{\mathcal L}\to \widetilde{\mathcal L}|_{\widetilde D}]
\bigr)^G
  \arrow[r,"\widetilde\theta"']
&
T_{(\widetilde{\mathcal L},\bm\nu)}\widetilde{\mathcal M}
=
\Ext^1(\widetilde{\mathcal L},\widetilde{\mathcal L})^G.
\end{tikzcd}
\end{equation}
By Proposition~\ref{prop:extlldtoextld},
\[
\widetilde\theta\bigl(\Phi([u,t])\bigr)
=
[\wt u]
\in
\Ext^1(\widetilde{\mathcal L},\widetilde{\mathcal L})^G.
\]
On the other hand, Proposition~\ref{prop:dk} identifies the differential of the spectral
transform with
\[
d\kappa([u])=[\wt u]
\in
\Ext^1(\widetilde{\mathcal L},\widetilde{\mathcal L})^G.
\]
Therefore,
\[
\widetilde\theta\circ\Phi([u,t])
=
[\wt u]
=
d\kappa([u])
=
d\kappa\circ j^*([u,t]),
\]
so the square~\eqref{eq:square1} commutes.

Next, consider the complex
\be \label{eq:chain}
\Bigl[
\begin{array}{c c c}
\C^{\E} & \xrightarrow{\ \partial\ } & \C^{\B}\oplus \C^{\W}\\[-.4ex]
{\scriptstyle 1} && {\scriptstyle 2}
\end{array}
\Bigr],
\qquad
\partial(\b\w)=\w-\b,
\ee
computing $H_1(\graph,\C)$ and the complex~\eqref{eq:chain_ext-d} computing \(\Ext^1(\widetilde{\mathcal L},\widetilde{\mathcal L}(-\widetilde D))^G\) from Proposition~\ref{prop:repext2}. Define a chain isomorphism \[\Xi^\bullet: \eqref{eq:chain} \to \eqref{eq:chain_ext-d}\] by
\[
\Xi^1(u)=-\frac{u}{\wt},
\qquad
\Xi^2(f,g)=(f,-g).
\]
This induces an isomorphism
\[
\Xi:H_1(\graph,\C)
\xrightarrow{\cong}
\Ext^1(\widetilde{\mathcal L},\widetilde{\mathcal L}(-\widetilde D))^G.
\]
We claim that the following diagram commutes:
\[
\begin{tikzcd}T^*_{[\wt]}\mathcal X
  \arrow[r,"\cong"]
&[8em]
H_1(\graph,\C)
  \arrow[r,"{{\PD^{-1}}}","\cong"']
  \arrow[d,"\Xi"']
&
H^1(\widehat \graph, \partial \widehat \graph;\C)
  \arrow[d,"\Phi"]
\\
T^*_{(\widetilde{\mathcal L},\bm\nu)}\widetilde{\mathcal M}
  \arrow[u,"d\kappa^*"]
  \arrow[r,"{\text{Serre duality}}","\cong"']
&
\Ext^1(\widetilde{\mathcal L},\widetilde{\mathcal L}(-\widetilde D))^G
  \arrow[r,"\cong"',"\Psi^{-1}"]
&
\Ext^1\!\bigl(\widetilde{\mathcal L},
[\widetilde{\mathcal L}\to \widetilde{\mathcal L}|_{\widetilde D}]
\bigr)^G.
\end{tikzcd}
\]

For the left square, let \([u]\in H^1(\graph,\C)\) and \([a]\in H_1(\graph,\C) \).
By Proposition~\ref{prop:dk},
\[
d\kappa([u])=[\wt u]
\in
\Ext^1(\widetilde{\mathcal L},\widetilde{\mathcal L})^G.
\]
Hence, using Proposition~\ref{prop:gsdual}, the Serre duality pairing maps
\[
 d\kappa([u]) \otimes \Xi([a]) \mapsto 
-\sum_{\e\in\E} (\wt u)(\e)\left(-\frac{a(\e)}{\wt(\e)}\right)
=
\sum_{\e\in\E}u(\e)a(\e).
\]
which agrees with the pairing between $H^1(\graph,\C)$ and $H_1(\graph,\C)$. Thus the left square commutes.

For the right square, let $[u,t] \in H^1(\widehat\graph,\partial \widehat \Gamma;\C).$ Then
\[
\Psi \circ \Phi([u,t]) = \Psi([\wt u,t]) = \left[\left(\frac{t_{\alpha_\e^-}(\w)-t_{\alpha_\e^+}(\w)}{\wt(\e)}\right)_{\e \in \E}\right].
\]
On the other hand, the Poincar\'e duality isomorphism $\PD$ in Proposition~\ref{prop:PDiso} maps \[[u,t]\mapsto[({t_{\alpha_\e^+}(\w)-t_{\alpha_\e^-}(\w)})_{\e \in \E}]\] which under $\Xi$ also gets sent to \[\left[\left(\frac{t_{\alpha_\e^-}(\w)-t_{\alpha_\e^+}(\w)}{\wt(\e)}\right)_{\e \in \E}\right].\]
We have thus shown commutativity of~\eqref{eq:big_square}, completing the proof of Theorem~\ref{thm:main}.

\begin{appendices}


\section{Equivariant line bundles on \texorpdfstring{$\widetilde{\mathbb{P}}^2$}{P2-tilde} and their cohomology}
\label{appendix:G_sheaves}

For the convenience of the reader, we collect in this appendix the basic facts about \(G\)-equivariant line bundles on \(\widetilde{\P}^2\) needed in the body of the paper. After recalling the standard description of line bundles and their cohomology on projective space, we explain how the \(G\)-linearization picks out distinguished \(G\)-invariant monomials in \(H^0\) and \(H^2\). Equivalently, one may regard this as an ordinary cohomology computation on the quotient stack
$[\widetilde{\P}^2/G]$; see Remark~\ref{rem:stack_why}. These descriptions will be used repeatedly in our computations of equivariant \(\Hom\)- and \(\Ext\)-groups.

Let \(X\) be either \(\widetilde{\P}^2\) or \(\widetilde C\), and let \(G\) be the group introduced in Section~\ref{sec:equiv_bottacin}. A \emph{\(G\)-equivariant vector bundle} on \(X\) is a vector bundle \(\pi:V\to X\) together with an action of \(G\) on the total space of \(V\) such that \(\pi\) is \(G\)-equivariant and the induced map
\[
V_x \rightarrow V_{gx}, \qquad v \mapsto gv,
\]
is linear for every \(g\in G\) and \(x\in X\).

\subsection{Line bundles on \texorpdfstring{$\widetilde{\P}^2$}{P2-tilde}}

We begin by recalling the construction of line bundles on \(\widetilde{\P}^2\). These are indexed by \(\Z\) and denoted \(\cO_{\widetilde{\P}^2}(k)\). Identifying \(\widetilde{\P}^2\) with the quotient \((\C^3\setminus\{0\})/\C^\times\), the line bundle \(\cO_{\widetilde{\P}^2}(k)\) is the quotient
\[
(\C^3\setminus\{0\}\times \C)/\C^\times
\]
with respect to the action
\[
\lambda\cdot (\widetilde x_0,\widetilde x_1,\widetilde x_2,t)
=
(\lambda \widetilde x_0,\lambda \widetilde x_1,\lambda \widetilde x_2,\lambda^k t).
\]

A global section of \(\cO_{\widetilde{\P}^2}(k)\) is represented by a homogeneous polynomial of degree \(k\). Indeed, a monomial
\[
[\widetilde x_0:\widetilde x_1:\widetilde x_2]
\mapsto
[\widetilde x_0:\widetilde x_1:\widetilde x_2,\,
\widetilde x_0^{m_0}\widetilde x_1^{m_1}\widetilde x_2^{m_2}]
\]
is a well defined precisely when
\[
m_0+m_1+m_2=k.
\]
Thus
\begin{equation}\label{eq:h0}
H^0(\widetilde{\P}^2,\cO_{\widetilde{\P}^2}(k))
\cong
\C[x_0,x_1,x_2]_k.
\end{equation}

\subsection{Cohomology of line bundles on \texorpdfstring{$\widetilde{\P}^2$}{P2-tilde}}

Higher cohomology may be computed from the \v Cech complex with respect to the standard affine cover
\[
\widetilde{\mathcal U}=(\widetilde U_i)_{i=0,1,2},
\qquad
\widetilde U_i
=
\bigl\{
[\widetilde x_0:\widetilde x_1:\widetilde x_2]\in\widetilde{\P}^2 : \widetilde x_i\neq 0
\bigr\}.
\]
For \(I\subset \{0,1,2\}\), write
\[
\widetilde U_I:=\bigcap_{i\in I}\widetilde U_i.
\]
By the same argument as above,
\[
\Gamma(\widetilde U_I,\cO_{\widetilde{\P}^2}(k))
=
\C[x_0,x_1,x_2,x_i^{-1}\ (i\in I)]_k,
\]
that is, the homogeneous degree-\(k\) Laurent polynomials in which \(x_j\) appears with nonnegative exponent for \(j\notin I\).

The \v Cech complex for \(\cO_{\widetilde{\P}^2}(k)\) is
\[
0
\to
\check C^0(\widetilde{\mathcal U},\cO_{\widetilde{\P}^2}(k))
\xrightarrow{\ \check d\ }
\check C^1(\widetilde{\mathcal U},\cO_{\widetilde{\P}^2}(k))
\xrightarrow{\ \check d\ }
\check C^2(\widetilde{\mathcal U},\cO_{\widetilde{\P}^2}(k))
\to 0,
\]
where
\begin{align*}
\check C^0(\widetilde{\mathcal U},\cO_{\widetilde{\P}^2}(k))
&=
\Gamma(\widetilde U_0,\cO_{\widetilde \P^2}(k))
\oplus
\Gamma(\widetilde U_1,\cO_{\widetilde \P^2}(k))
\oplus
\Gamma(\widetilde U_2,\cO_{\widetilde \P^2}(k)),\\
\check C^1(\widetilde{\mathcal U},\cO_{\widetilde{\P}^2}(k))
&=
\Gamma(\widetilde U_{01},\cO_{\widetilde \P^2}(k))
\oplus
\Gamma(\widetilde U_{02},\cO_{\widetilde \P^2}(k))
\oplus
\Gamma(\widetilde U_{12},\cO_{\widetilde \P^2}(k)),\\
\check C^2(\widetilde{\mathcal U},\cO_{\widetilde{\P}^2}(k))
&=
\Gamma(\widetilde U_{012},\cO_{\widetilde \P^2}(k)),
\end{align*}
and the differentials are
\[
\check d(\alpha_0,\alpha_1,\alpha_2)
=
(\alpha_1-\alpha_0,\alpha_2-\alpha_0,\alpha_2-\alpha_1),
\qquad \check d(\alpha_{01},\alpha_{02},\alpha_{12})
=
\alpha_{12}-\alpha_{02}+\alpha_{01}.
\]
Taking cohomology gives
\begin{align}
H^1(\widetilde{\P}^2,\cO_{\widetilde{\P}^2}(k))
&=0,
\nonumber\\
H^2(\widetilde{\P}^2,\cO_{\widetilde{\P}^2}(k))
&=
\frac{\C[x_0^{\pm1},x_1^{\pm1},x_2^{\pm1}]_k}
{\Gamma(\widetilde U_{01},\cO_{\widetilde \P^2}(k))
+
\Gamma(\widetilde U_{02},\cO_{\widetilde \P^2}(k))
+
\Gamma(\widetilde U_{12},\cO_{\widetilde \P^2}(k))}
\nonumber\\
&\cong
\frac{1}{x_0x_1x_2}\,
\C[x_0^{-1},x_1^{-1},x_2^{-1}]_{-k-3}.
\label{eq:h2}
\end{align}

\subsection{\texorpdfstring{$G$}{G}-equivariant line bundles on \texorpdfstring{$\widetilde{\P}^2$}{P2-tilde}}

For \(a_0,a_1,a_2 \in \Z\), let
\[
k:=a_0+a_1+a_2.
\]
\begin{definition}
    
The \(G\)-equivariant line bundle
\[
\cO^G_{\widetilde{\P}^2}(a_0D_0+a_1D_1+a_2D_2)
\]
is the line bundle \(\cO_{\widetilde{\P}^2}(k)\) together with the \(G\)-linearization induced from the action
\[
(\zeta_0,\zeta_1,\zeta_2)\cdot [\widetilde x_0:\widetilde x_1:\widetilde x_2:t]
=
[\zeta_0\widetilde x_0:\zeta_1\widetilde x_1:\zeta_2\widetilde x_2:\zeta_0^{a_0}\zeta_1^{a_1}\zeta_2^{a_2}t].
\]
\end{definition}

The diagonal subgroup \(\Delta\) acts trivially, since for \(\zeta\in \bmu_d\),
\[
(\zeta,\zeta,\zeta)\cdot [\widetilde x_0:\widetilde x_1:\widetilde x_2:t]
=
[\zeta \widetilde x_0:\zeta \widetilde x_1:\zeta \widetilde x_2:\zeta^k t]
=
[\widetilde x_0:\widetilde x_1:\widetilde x_2:t].
\]
Hence the action descends to \(G=\bmu_d^{\times 3}/\Delta\).

A section
\[
s([\widetilde x_0:\widetilde x_1:\widetilde x_2])
=
[\widetilde x_0:\widetilde x_1:\widetilde x_2:\widetilde x_0^{m_0}\widetilde x_1^{m_1}\widetilde x_2^{m_2}],
\qquad m_0+m_1+m_2=k,
\]
is \(G\)-invariant if and only if for every \([\zeta_0,\zeta_1,\zeta_2]\in G\),
\[
[\zeta_0\widetilde x_0:\zeta_1\widetilde x_1:\zeta_2\widetilde x_2:
\zeta_0^{a_0}\zeta_1^{a_1}\zeta_2^{a_2}
\widetilde x_0^{m_0}\widetilde x_1^{m_1}\widetilde x_2^{m_2}]
\]
agrees with
\[
[\zeta_0\widetilde x_0:\zeta_1\widetilde x_1:\zeta_2\widetilde x_2:
\zeta_0^{m_0}\zeta_1^{m_1}\zeta_2^{m_2}
\widetilde x_0^{m_0}\widetilde x_1^{m_1}\widetilde x_2^{m_2}],
\]
that is, if and only if
\[
(m_0,m_1,m_2)\in (a_0,a_1,a_2)+d\Z^3.
\]
Therefore
\begin{equation*}
H^0\!\bigl(\widetilde{\P}^2,\cO^G_{\widetilde{\P}^2}(a_0D_0+a_1D_1+a_2D_2)\bigr)^G
=
\bigoplus_{\substack{
m_0,m_1,m_2\ge 0\\
m_0+m_1+m_2=k\\
(m_0,m_1,m_2)\in (a_0,a_1,a_2)+d\Z^3
}}
\C\{\widetilde x_0^{m_0}\widetilde x_1^{m_1}\widetilde x_2^{m_2}\}.
\end{equation*}
In particular,
\begin{align}
H^0(\widetilde{\P}^2,\cO^G_{\widetilde{\P}^2})^G
&=\C\{1\},
\nonumber\\
H^0(\widetilde{\P}^2,\cO^G_{\widetilde{\P}^2}(D_i))^G
&=\C\{\widetilde x_i\},
\qquad i=0,1,2.
\label{eq:h0G}
\end{align}

\subsection{\texorpdfstring{$G$}{G}-equivariant cohomology}

Since the standard affine cover \(\widetilde{\mathcal U}\) is \(G\)-invariant, the groups \(\check C^\bullet(\widetilde{\mathcal U},\mathcal F)\) inherit a \(G\)-action for every \(G\)-equivariant coherent sheaf \(\mathcal F\), and the \v Cech differential is \(G\)-equivariant. Because \(G\) is finite and we work over \(\C\), taking \(G\)-invariants is exact. Hence
\[
H^\bullet(\widetilde{\P}^2,\mathcal F)^G
\cong
H^\bullet\bigl(\check C^\bullet(\widetilde{\mathcal U},\mathcal F)^G\bigr).
\]

Applying this to~\eqref{eq:h2}, we obtain
\begin{align}
H^2\bigl(\widetilde{\P}^2,\mathcal O^G_{\widetilde{\P}^2}(-D)\bigr)^G
&=
\frac{1}{\widetilde x_0\widetilde x_1\widetilde x_2}\,\C\{1\},
\nonumber\\
H^2\bigl(\widetilde{\P}^2,\mathcal O^G_{\widetilde{\P}^2}(-D_i-D)\bigr)^G
&=
\frac{1}{\widetilde x_0\widetilde x_1\widetilde x_2}\,\C\{\widetilde x_i^{-1}\}, \qquad i=0,1,2.
\label{eq:h2G}
\end{align}

\section{Hypercohomology} \label{app:ss}

In the main text, we will compute various $\Ext$ spaces via hypercohomology of \v Cech-$\mathcal Hom$ double complexes. In this appendix, we review the necessary background following~\cite[Chapter~II]{BT}. This is a particularly simple example of a spectral sequence.

Let $K^{\bullet,\bullet}$ be a double complex. Thus each row is a complex with differential $\delta$, each column is a complex with differential $d$, and the two differentials commute:
\[
d \delta = \delta d.
\]
The associated \emph{total complex} $\tot(K)$ is the single complex with entries
\[
\tot(K)^n = \bigoplus_{p+q=n} K^{p,q}, \qquad D = D'+D''
\]
where $D' := \delta$ and $D'' := (-1)^p d$. By definition, the \emph{hypercohomology} 
\[
\mathbb H^\bullet(K^{\bullet,\bullet}) := H^\bullet(\tot(K))
\]
is the cohomology of the total complex.

The following is a simpler version of \cite[Proposition~12.1]{BT}.

\begin{proposition}\label{prop:ttt}
Given a double complex $K^{\bullet,\bullet}$, if
\[
 H_\delta(K^{\bullet,\bullet})
\]
has nonzero entries only in the $p$-column, then 
\[
H_d^{p,q}( H_\delta(K^{\bullet,\bullet})) \cong \mathbb H^{p+q}(K^{\bullet,\bullet}),
\]
where $H_\delta$ (resp. $H_d$) denotes cohomology with respect to $\delta$ (resp. $d$).
\end{proposition}

Here $H_\delta(K^{\bullet,\bullet})$ denotes the table of cohomologies of the rows, each column of which is a $d$-complex, and $H_d^{p,q}( H_\delta(K^{\bullet,\bullet}))$ is the table obtained by taking cohomologies of each of the columns of $H_\delta(K^{\bullet,\bullet})$. We now describe this isomorphism explicitly since it will be important in our calculations.

For $\phi \in K^{p,q}$ with $\delta \phi = 0$, we write
\[
[\phi]_\delta \in H_\delta^{p,q}(K^{\bullet,\bullet})
\]
for its class in the $\delta$-cohomology of the $q$-th row. If $[\phi]_\delta$ is $d$-closed, we write
\[
[\phi]_{d,\delta} \in H_d^{p,q}\bigl(H_\delta(K^{\bullet,\bullet})\bigr)
\]
for its class in the iterated cohomology, and similarly, if it is $D$-closed, we write $[\phi]_D$ for its hypercohomology class.

We first describe the map $h: H_d^{p,q}( H_\delta(K^{\bullet,\bullet})) \rightarrow \mathbb H^{p+q}(K^{\bullet,\bullet})$. Suppose $[\phi]_{d,\delta} \in H_d^{p,q}H_\delta(K^{\bullet,\bullet})$. Then $[\phi]_{d,\delta}$ is represented by an element 
\[
[\phi]_{\delta} \in H_\delta^{p,q}(K^{\bullet,\bullet}), \qquad d([\phi]_{\delta})=0 \in H_\delta^{p,q+1}(K^{\bullet,\bullet}),
\]
which is in turn represented by an element $\phi \in K^{p,q}$ such that 
\[
\delta \phi = 0, \qquad D''(\phi) = - D'(\phi_1) \qquad (\phi_1 \in K^{p-1,q+1}).
\]
We represent this by the diagram
\[
\begin{tikzcd}
\phi_1 \arrow[r,"{D'}"] &  D'(\phi_1)+D''(\phi) =0 & \\
& \phi \arrow[u,"{D''}"] \arrow[r,"{D'}"] & 0
\end{tikzcd}.
\]
However $\phi+\phi_1$ is not a $D$-cocycle since $D''(\phi)$ may be nonzero. 

Since $H_\delta^{p-1,q+2}(K^{\bullet,\bullet}) = 0$, there is an element $\phi_2 \in K^{p-2,q+1}$ such that
\[
\begin{tikzcd}
\phi_2 \arrow[r,"{D'}"] &  D'(\phi_2)+D''(\phi_1) =0 && \\
&\phi_1 \arrow[r,"{D'}"] \arrow[u,"{D''}"] &  D'(\phi_1)+D''(\phi) =0 & \\
&& \phi \arrow[u,"{D''}"] \arrow[r,"{D'}"] & 0
\end{tikzcd},
\]
Continuing in this way, we obtain a $D$-cocycle $\phi + \phi_1 + \cdots + \phi_n$. The map $h$ sends $[\phi]_{d,\delta}$ to $[\phi+ \phi_1 + \cdots + \phi_n]_D \in  \mathbb H^{p+q}(K^{\bullet,\bullet})$.

The inverse map
\[
h^{-1}: \mathbb H^{p+q}(K^{\bullet,\bullet}) \rightarrow H_d^{p,q}\bigl(H_\delta(K^{\bullet,\bullet})\bigr)
\]
is defined as follows. Let $\omega$ be a $D$-cocycle. By modifying $\omega$ within its $D$-cohomology class, we may arrange that all of its components to the right of the nonzero column vanish. Suppose that $\omega=a+b+c+\cdots$, as in the diagram
\[
\begin{tikzcd}
  c && \\
b_1 \arrow[r,"{D'}"] \arrow[u,"{D''}"] & b & \\
& a_1 \arrow[u,"{D''}"] \arrow[r,"{D'}"] & a
\end{tikzcd},
\]
where $c$ lies in the nonzero column.

Since the columns to the right of the nonzero column are trivial in iterated cohomology, we may first choose $a_1$ such that
\[
a=D'(a_1).
\]
Subtracting the total coboundary $D(a_1)$ from $\omega$ removes the component $a$:
\[
\omega-D(a_1)= b-D''(a_1)+c+\cdots .
\]
Next, since \(b-D''(a_1)\) is \(D'\)-closed, we may choose \(b_1\) such that
\[
D'(b_1)=b-D''(a_1).
\]
Subtracting \(D(b_1)\) then removes this term as well, giving
\[
\omega-D(a_1)-D(b_1)=c+\cdots ,
\]
which now begins in the nonzero column. We then define
\[
h^{-1}([\omega]_D):=[c]_{d,\delta}.
\]

In general, one proceeds in the same way: by successively subtracting suitable \(D\)-coboundaries, one may replace \(\omega\) by a cohomologous cocycle whose first nonzero component lies in the nonzero column, and then take the resulting class in \(H_d^{p+q}(H_\delta(K^{\bullet,\bullet}))\).

\section{Proof of Proposition~\ref{prop:scalar}}\label{app:prop:scalar}

We can find $c$ by looking at a single element. Let us take $\delta_\b \in \C^\B \oplus \C^\W$. By Lemma~\ref{lem:Ext2}, it is represented by the \v Cech cocycle
\[
\xi = \frac{1}{\widetilde x_0 \widetilde x_1 \widetilde x_2} \in \check C^2(\mathcal H om(\cale_\b,\cale_\b(-\widetilde D)))^G \subset \check C^2(\mathcal H om(\cale,\cale(-\widetilde D))\oplus\mathcal H om(\calf,\calf(-\widetilde D)))^G
\]
which by the explicit isomorphism in Proposition~\ref{prop:ttt} may be represented by a $D$-$2$-cocycle of the form
\[
\begin{tikzcd}
0 & & \\
\xi' \arrow[u,"-d"] \arrow[r,"\check d"] & \check d \xi' + d \xi =0 & \\
& \xi \arrow[u,"d"] \arrow[r,"\check d"] & 0
\end{tikzcd},
\]
where $\xi' = (\xi'_{01},\xi'_{02},\xi'_{12}) \in \check C^1(\mathcal Hom(\cale,\calf(-\widetilde D)))^G$ may be chosen to only be nonzero for $\b \w \in \E$ where it is given by
\[
  \begin{tikzpicture}[scale=1.5] 
		\coordinate[bvert,label=below:$ \b$] (b) at (2,1.732-0.5773);
			\coordinate[wvert,label=below:$\w_0$] (w1) at (1,0.5773);
			\coordinate[wvert,label=below:$\w_1$] (w2) at (3,0.5773);
			\coordinate[wvert,label=above:$\w_2$] (w3) at (2,1.732+0.5773);

			\draw[-]
				  (b) edge  (w2) edge node[left] {$ (-\frac{c}{\widetilde x_0 \widetilde x_1},0,0)$} (w3) edge  (w1)
				;
				\node[] (no) at (3.1,1.) {$(0,\frac{b}{\widetilde x_0 \widetilde x_2},0)$}; 
				\node[] (no) at (1.3-.75,1.) {$(0,0,-\frac{a}{\widetilde x_1 \widetilde x_2}) $}; 
	\end{tikzpicture} 
\]
where $a,b,c$ denote the weights of the three edges incident to $\b$.

The map $\widetilde\theta$ on $\Ext$ is induced by the map 
\[
K(-{\widetilde D})^\bullet \xrightarrow[]{\widetilde \theta}K_{\widetilde D}^\bullet
\]
multiplying termwise by $\widetilde\theta$. Thus the image $D$-$2$-cocycle for $\check C^{\,p}\!\bigl(\mathcal H om^{q}(K^\bullet,K_{\widetilde D}^\bullet)\bigr)^{G}$ is represented by
the $D$-$2$-cocycle of the form
\[
\begin{tikzcd}
0& &&\\
0\arrow[r,"\check d"] \arrow[u,"d"]&0 & & \\
&\zeta' \arrow[u,"-d"] \arrow[r,"\check d"] & \check d \zeta' + d \zeta =0 & \\
&& \zeta \arrow[u,"d"] \arrow[r,"\check d"] & 0
\end{tikzcd},
\]
where 
\[
\zeta = 1 \in \check C^2(\mathcal H om(\cale_\b,\cale_\b))^G \subset \check C^2(\mathcal H om(\cale,\cale)\oplus\mathcal H om(\calf,\calf))^G
\]
and $\zeta' = (\zeta'_{01},\zeta'_{02},\zeta'_{12}) \in \check C^1(\mathcal Hom(\cale,\calf))^G$ is given by
\[
  \begin{tikzpicture}[scale=1.5] 
		\coordinate[bvert,label=below:$ \b$] (b) at (2,1.732-0.5773);
			\coordinate[wvert,label=below:$\w_0$] (w1) at (1,0.5773);
			\coordinate[wvert,label=below:$\w_1$] (w2) at (3,0.5773);
			\coordinate[wvert,label=above:$\w_2$] (w3) at (2,1.732+0.5773);

			\draw[-]
				  (b) edge  (w2) edge node[left] {$ (-{c}{\widetilde x_2},0,0)$} (w3) edge  (w1)
				;
				\node[] (no) at (3.1,1.) {$(0,{b}{\widetilde x_1},0)$}; 
				\node[] (no) at (1.3-.75,1.) {$(0,0,-{a\widetilde x_0}) $}; 
	\end{tikzpicture} .
\]

Finally, we want to find its class in $\Hom(\cale,\widetilde{\mathcal L}|_{\widetilde D})^G$ under the identification in Lemma~\ref{lem:Fbullet}. To do this, we must rewrite the class in the representative used by Proposition~\ref{prop:ttt}, namely a $D$-cocycle supported entirely in the \v Cech degree $0$ column. Thus we modify the above $D$-cocycle by a $D$-coboundary until only that column remains, i.e. we are looking for $\omega, \omega'$ such that in
\[
\begin{tikzcd}
d \omega'  & & & \\
\omega' \arrow[r,"\check d"] \arrow[u,"d"]& \check d \omega' -d \omega + \zeta'=0 &  & \\
& \omega \arrow[r,"\check d"] \arrow[u,"-d"] & \check d \omega + \zeta = 0  & 
\end{tikzcd},
\]
only $d\omega' \neq 0$. Recall that $\check d(\alpha_{01},\alpha_{02},\alpha_{12}) = \alpha_{12}-\alpha_{02}+\alpha_{01}$. Thus, we may take
\[
\omega = (-1,0,0) \in \check C^1(\mathcal Hom(\cale_\b,\cale_\b))^G \subset  \check C^1(\mathcal H om(\cale,\cale)\oplus \mathcal H om(\calf,\calf))^G
\]
so that $\check d \omega + \zeta = 0$. Then the component of $-d \omega$ in $\check C^1(\mathcal Hom(\cale,\calf))^G$ is 
\[
  \begin{tikzpicture}[scale=1.5] 
		\coordinate[bvert,label=below:$ \b$] (b) at (2,1.732-0.5773);
			\coordinate[wvert,label=below:$\w_0$] (w1) at (1,0.5773);
			\coordinate[wvert,label=below:$\w_1$] (w2) at (3,0.5773);
			\coordinate[wvert,label=above:$\w_2$] (w3) at (2,1.732+0.5773);

			\draw[-]
				  (b) edge  (w2) edge node[left] {$ ({c}{\widetilde x_2},0,0)$} (w3) edge  (w1)
				;
				\node[] (no) at (3.1,1.) {$({b}{\widetilde x_1},0,0)$}; 
				\node[] (no) at (1.3-.75,1.) {$({a\widetilde x_0},0,0) $}; 
	\end{tikzpicture} 
\]
while the component in $\check C^1(\calf,\widetilde{\mathcal L}|_{\widetilde D})$ is $0$. Finally, we need to choose 
\[
\omega' = (\lambda,\mu) \in \check C^0(\mathcal H om(\cale,\calf)\oplus \mathcal H om(\calf,\widetilde{\mathcal L}|_{\widetilde D}))^G
\]
so that $\check d \omega' -d \omega + \zeta'=0$, i.e. so that $d \omega'$ is 
\[
  \begin{tikzpicture}[scale=1.5] 
		\coordinate[bvert,label=below:$ \b$] (b) at (2,1.732-0.5773);
			\coordinate[wvert,label=below:$\w_0$] (w1) at (1,0.5773);
			\coordinate[wvert,label=below:$\w_1$] (w2) at (3,0.5773);
			\coordinate[wvert,label=above:$\w_2$] (w3) at (2,1.732+0.5773);

			\draw[-]
				  (b) edge  (w2) edge node[left] {$ (0,0,0)$} (w3) edge  (w1)
				;
				\node[] (no) at (3.3,1.) {${b}{\widetilde x_1}(-1,-1,0)$}; 
				\node[] (no) at (1.3-.75,1.) {${a\widetilde x_0}(-1,0,1) $}; 
	\end{tikzpicture} 
\]
Recall that $\check d(\alpha_0,\alpha_1,\alpha_2) = (\alpha_1-\alpha_0,\alpha_2-\alpha_0,\alpha_2-\alpha_1)$. Thus, we may take $\lambda$ equal to 
\[
  \begin{tikzpicture}[scale=1.5] 
		\coordinate[bvert,label=below:$ \b$] (b) at (2,1.732-0.5773);
			\coordinate[wvert,label=below:$\w_0$] (w1) at (1,0.5773);
			\coordinate[wvert,label=below:$\w_1$] (w2) at (3,0.5773);
			\coordinate[wvert,label=above:$\w_2$] (w3) at (2,1.732+0.5773);

			\draw[-]
				  (b) edge  (w2) edge node[left] {$ (0,0,0)$} (w3) edge  (w1)
				;
				\node[] (no) at (3.3,1.) {${b}{\widetilde x_1}(1,0,0)$}; 
				\node[] (no) at (1.3-.75,1.) {$-{a\widetilde x_0}(0,1,0) $}; 
	\end{tikzpicture} 
\]
and $\mu=0$. Then 
\[
d \omega' = r \circ \lambda \in \check C^0(\mathcal H om(\cale,\widetilde{\mathcal L}|_{\widetilde D}))^G
\]
is only nonzero over $\widetilde \alpha$ for the zig-zag path $\alpha \in \zz_2$ containing $\w_0,\b,\w_1$; in that case, it is equal to
\[
r \circ (-a \widetilde x_0,b \widetilde x_1, 0).
\]

Since \(\widetilde{\mathcal L}|_{\widetilde D}\cong \coker(\widetilde K|_{\widetilde D})\), two local lifts define the same section of \(\widetilde{\mathcal L}|_{\widetilde D}\) if their difference lies in \(\operatorname{im}(\widetilde K|_{\widetilde D})\). In our conventions, the column of \(\widetilde K\) corresponding to \(\b\) is
\[
(a\widetilde x_0,\; b\widetilde x_1,\; c\widetilde x_2).
\]
Since \(\widetilde x_2=0\) on \(\widetilde \alpha\), the local representatives
\(-a\widetilde x_0\) and \(b\widetilde x_1\) differ by this column, and hence determine the same class in \(\widetilde{\mathcal L}|_{\widetilde D}\). Thus \(d\omega'\) is a global section.

Under the identification in Lemma~\ref{lem:ident}, this global section has only one nonzero component, namely the component indexed by the zig-zag path \(\alpha\), and that component is the functional
\[
\delta_{\b}\in \Hom(\C^{\B\cap\alpha},\C),
\]
i.e. it takes the value \(1\) on \(\b\) and \(0\) on all other black vertices of \(\alpha\). Indeed, exactly as in the proof of Lemma~\ref{lem:map1}, after the chosen gauge normalization the quotient map
\[
\bigoplus_{\w\in\W\cap\alpha}\restr{\calf_\w}{\widetilde\alpha}
\rightarrow
\restr{\widetilde{\mathcal L}}{\widetilde\alpha}
\cong
\coker(\partial_\alpha)\cong \C
\]
is represented by the row vector \((1,\dots,1)\), and the class of
\[
r\circ(-a\widetilde x_0,b\widetilde x_1,0)
\]
maps to \(1\). Since the section is supported on the summand \(\cale_{\b}\), the resulting functional is precisely \(\delta_{\b}\). Therefore
\[
\phi\bigl([d\omega']\bigr)=1.
\]
On the other hand,
\[
(\widetilde\theta\circ\tr)(\delta_{\b})
=
\sum_{\w\in\W}0-\sum_{\b'\in\B}\delta_{\b}(\b')
=
-1.
\]
Hence commutativity of the diagram forces
\[
c=-1.
\]

\end{appendices}
\bibliography{references}
\Addresses
\end{document}